\newcommand{\mathsym}[1]{{}}
\newcommand{\vect}[1]{{\boldsymbol#1}}
\def\N{\mathbb{N}}
\def\R{\mathbb{R}}
\def\Z{\mathbb{Z}}
\newtheorem{thm}{Theorem}[section]
\newtheorem{proposition}[thm]{Proposition}
\newtheorem{lemma}[thm]{Lemma}
\newtheorem{corollary}[thm]{Corollary}
\newtheorem{definition}[thm]{Definition}
\newtheorem{remark}[thm]{Remark}
\newtheorem{example}[thm]{Example}
\newenvironment{proof}{\textbf{Proof :\newline}}{$\square$}
\begin{document}
\title{Pattern formation in auxin flux}
\author{ Chrystel Feller, Jean-Pierre Gabriel, Christian Mazza\thanks{Corresponding author, D\'epartement de Math\'ematique, Universit\'e de Fribourg, Chemin du Mus\'ee 23, CH-1700 Fribourg, Suisse, christian.mazza@unifr.ch}  and Florence Yerly \protect\hspace{1cm}}
\maketitle

\begin{abstract}
The plant hormone auxin is fundamental for plant growth, and its spatial distribution
in plant tissues is critical for plant morphogenesis. We consider a leading model of the polar auxin flux,
and study in full detail the stability of the possible equilibrium configurations.
We show that the critical states of the auxin transport process
are composed of basic building blocks, which are isolated in a background of auxin depleted cells,
and are not geometrically regular in general. 
The same model was considered recently through a continuous limit and
a coupling to the von Karman equations, to model the interplay of biochemistry
and mechanics during plant growth. 
 Our conclusions might be of interest in
this setting, since, for example, we establish the existence of Lyapunov functions for the auxin flux,
proving in this way the convergence of pure transport processes toward the set of critical configurations.
\end{abstract}
\section{Introduction\label{s.Introduction}}

The plant hormone auxin plays a fundamental role in plant development \citep{Reinhardt,Reinhardt2}, and its  spatial distribution
in plants tissues is critical for plant morphogenesis. Auxin accumulation is spatially 
localized in specific set of cells, where it induces the emergence of new primordia
\citep{Reinhardt}.
 A fundamental
problem consists in understanding how such auxin maxima appear, and how they induce
 the regular pattern observed in plants (see e.g. \citet{Traas}).
On the other hand, experiments show that phyllotaxis strongly depends on the plant physical 
properties, more precisely on elasticity \citep{Green,Dumais1,Dumais2}, and physical forces
provide information for plant patterning \citep{Traas}. Basically, turgor pressure
induces stress, which is  related to  the associated deformation or strain through
Young constants: see e.g. \citet{Boudaoud} where these
notions are explained in the context of plant growth.
 Experiments have shown that lowering the stiffness of cell walls  in the meristem
leads to the emergence of new primordia  \citep{Hamant2}. However,  the interactions between physics-based  and biochemical control
of phyllotaxis is still poorly understood.

Recently, new biologically plausible mathematical models of auxin transport have been proposed \citep{Barbier,Heisler,Jonsson,Smith1}, each of them being able to reproduce some aspects of  phyllotaxis in simulations.
 New mathematical models were also proposed for the 
plant mechanics \citep{Mjolsness}, and for the interaction between mechanics and
biochemistry \citep{Shipman,Newell}. In the latter, the authors use the model for the polar auxin flux proposed
in \citet{Jonsson} for modelling the stress field in their mechanical model.
It should be stressed that all these models are based on hypotheses that have not been
verified experimentally; however they provide new scenari for
understanding plant growth that can be tested experimentally.

\begin{figure}[h!]
\centering
\includegraphics[height=5cm]{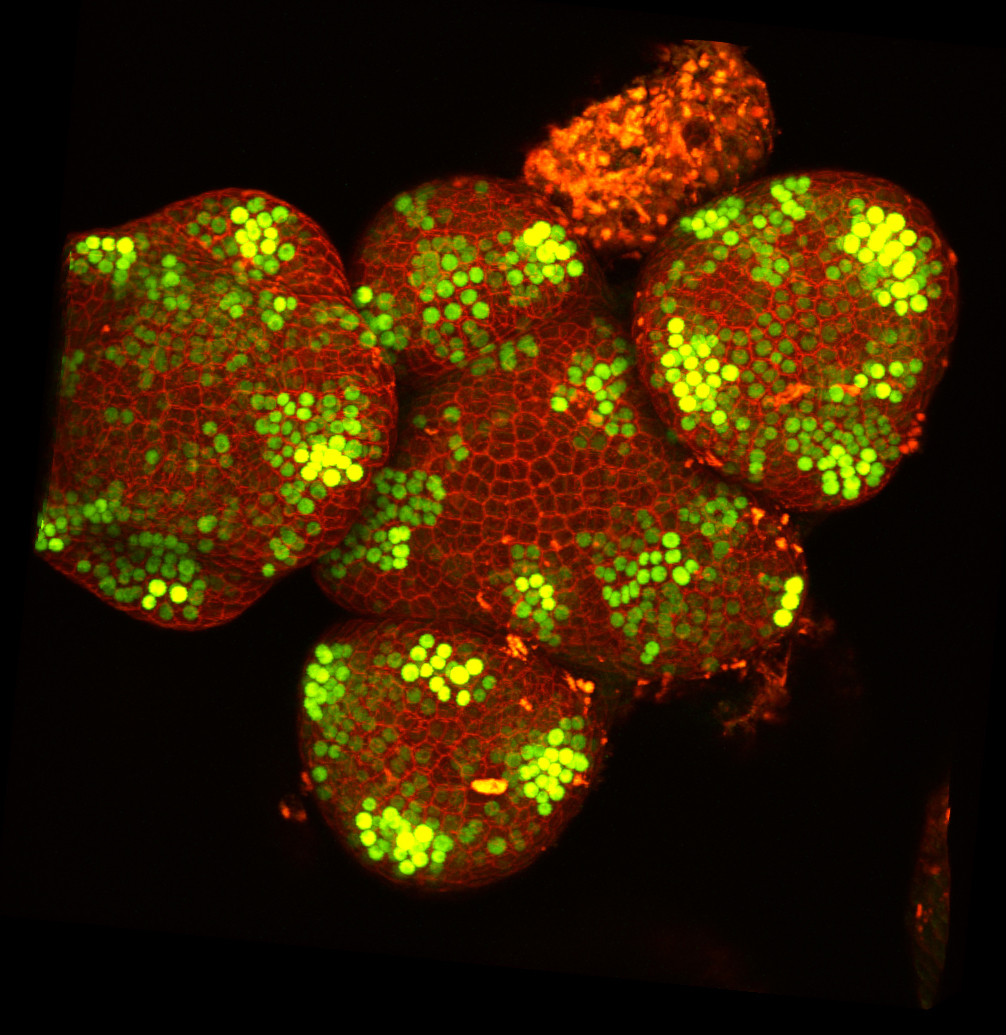}
\caption{\rm{Inflorescence shoot apical meristem of \textit{Arabidopsis thaliana}.
Zones with high auxin concentration
are highlighted by the fluorescent yellow signal auxin reporter from DR5::YFP.
The red signal is highlighting   cell walls stain, using propidium iodide.
}\label{phyllo}}
\end{figure}

Auxin occurs in various plant tissues, where it is
transported by polar cellular transport
in various directions and can explain developmental patterning
phenomena such as vein formation, see e.g. \citet{Scarpella} or \citet{Bayer}.

 In the following, we  consider the  models in \citet{Jonsson} and \citet{Smith1}, based on polar auxin flux.
 Polar auxin flux results from uneven accumulation of the auxin transport regulator PIN in cell membranes.
An essential component is a positive feedback
between auxin flux and PIN localization, resulting
in the reinforcement of polar auxin transport to dedicated 
 routes which develop into vascular tissues.
  We will not enter here into these considerations, but focus on simple models of transport processes (see e.g. the discussion in \citet{Jonsson} and \citet{Shipman}), where a quasi-equilibrium is assumed for PIN proteins. The molecules
present in some cell $i$ may be transported to any neighbouring cell $j$, but they are preferentially transported to the neighbours with the highest auxin concentrations.

Traditionally, models of patterning and morphogenesis have used reaction-diffusion theory. 
Turing demonstrated how, under some hypotheses, the regular patterns observed in phyllotaxis
can be predicted \citep{Turing}. He showed that a combination
of diffusion and a chemical reaction could give rise to
regular patterns. Interesting models are described in \citet{Meinhardt,Thornley} which can, under some hypotheses, predict
phyllotactic patterns.
 As stated previously, the auxin flux is strongly polarized, a phenomenon that cannot be described with
reaction-diffusion models. The recent mathematical models given in \citet{Barbier,Jonsson,Smith1}
are based on transport processes.
 Mathematically, mass transport processes are not well understood, and their study is a challenging problem. 
 We propose here a mathematical study of related dynamical systems. We focus on their critical points and
analyse their geometrical structure and stability.

 Besides stable auxin peaks, the model generates
intervening areas of auxin depletion, as
it is observed experimentally. These auxin depleted sites
reflect an indirect repulsion mechanism since auxin molecules
diffusing through the tissue will be attracted
to the peaks, and diverted from the depleted areas. This
idea of repulsion or spacing mechanism was already considered
a long time ago  \citep{Hofmeister}.

The auxin flux is present everywhere in the plant, so that we choose to describe 
the various plant cells as a connected graph $(\Lambda,E)$. The node set $\Lambda$
represents the cells and $E$ the set of edges. Any edge 
$e=(i\to j)$, $i$, $j\in\Lambda$ indicates that some auxin molecule can
move from cell $i$ to cell $j$. This graph is undirected, and we write
$i\sim j$ to denote that cells $i$ and $j$ are nearest neighbours, so that
auxin can move from cell $i$ to cell $j$, at some rate
$q_{ij}$. 
These transition rates are not well understood at present time and one must rely on  simple models. They
should  capture the fact that an auxin molecule present in some cell
$i$ has the tendency to move to a cell $j\sim i$ when the concentration $a_j$
of auxin molecules present in cell $j$ is high. The simplest model  accounting
for this idea is given by \citet{Jonsson}
\begin{equation}\label{Rates}
q_{ij} = \frac{a_j}{\kappa + \sum_{k\sim i}a_k},
\end{equation}
for some positive constant $\kappa$, which is of Michaelis-Menten or Monod type. 
Let $L=\vert\Lambda\vert$ be the number of cells.
In the model given in \citet{Jonsson} (see also \citet{Smith1,Sahlin}), $a_i(t)$, for $i=1,\cdots , L$, denotes the concentration or the number of auxin molecules
in cell $i$ at time $t$, and is assumed to evolve according to the differential equations
\begin{equation}\label{eq:jonsson}
\frac{\mathrm{d}a_i}{\mathrm{dt}} = f_i(\vect{a}) = D \sum_{k \sim i} (a_k -a_i) + T\sum_{k \sim i} \Big ( a_k \underbrace{\frac{a_i}{\kappa + \sum_{j \sim k} a_j }}_{=q_{ki}(\vect{a})}  - a_i \underbrace{\frac{a_k}{\kappa + \sum_{j \sim i} a_j }}_{=q_{ik}(\vect{a})} \Big),
\end{equation}
 for $i=1, \ldots, L$.   The term $a_i q_{ik}$ gives the mean number of auxin molecules moving
from cell $i$ to cell $k$ per unit time, and $D \sum_{k \sim i} (a_k -a_i)$ is a diffusive part, usually assumed to 
be weak with a small diffusion coefficient $D$. The second term corresponds
to the mass transport process, which is known to be the main actor of the
patterning process in plants. One can add auxin production and degradation
terms, but, there is no clear biological evidence
about where auxin is produced, and experiments show that it is
not produced in the meristem, but imported from the leaves \citep{Reinhardt,Reinhardt3}.

\subsection{Results}

Direct quantitative measurements of auxin distribution in plant tissues
are very difficult due to the small size of the meristematic tissues at the time
of patterning. Therefore, biologists rely on indirect markers based
on auxin-regulated genes that encode fluorescent proteins.
Figure \ref{phyllo} shows a typical output, where
domains rich in auxin appear as regions of strong green fluorescence.
The pattern is quite noisy; this might be due either to
the  indirect experiments, or to 
the fact that the number of auxin molecules is not too high.
 (\ref{eq:jonsson}) might model the limiting behavior
of this random particle system when the number of molecules  tends to infinity.
We introduce such a particle system in Section \ref{Stochastic} and  justify 
equations like (\ref{eq:jonsson}) using law of large numbers.

We then focus  on
the properties of (\ref{eq:jonsson}), like the non-negativity of
the solutions (see Proposition \ref{Invariance}). This dynamical system can be written in the compact form 
$$\frac{\mathrm{d}\vect{a}}{\mathrm{dt}}=f(\vect{a}),$$
where $\vect{a}(t) = (a_i(t))_{1\le i\le L}$ is the vector of auxin concentrations.
 The related critical points are the vectors $\vect{a}^*$ satisfying  $f(\vect{a}^*)=0$. They are the candidates for describing the equilibrium auxin concentrations. For example,
$a_i =0$ means that there is (almost) no auxin molecules in cell $i$, while a subset of cells
$J$ such that $a_j >0$ for $j\in J$ indicates a hot spot which might correspond to an auxin peak.

The critical points play a  fundamental role in the dynamic, and one can suspect that any solution $\vect{a}(t)$ of (\ref{eq:jonsson}) will approach
such critical points as $t$ is large. Of course, this is wrong for general dynamical systems, but here, the model
is supposed to catch pieces of biological reality, and 
the robustness of the  regular geometries observed in plants
suggests that this might well be the case.
Some of these critical points are repulsive
or unstable, that is, the orbits or the solutions of (\ref{eq:jonsson}) will avoid them. In the contrary, some of them
will be attractive. Given a critical point $\vect{a}^*$, a mathematical way of checking the stability or the unstability
of $\vect{a}^*$ is to compute the Jacobian ${\rm d}f(\vect{a}^*)$, by retaining only its spectrum, that is the
set of all eigenvalues of ${\rm d}f(\vect{a}^*)$. For example, $\vect{a}^*$ is unstable when there is
an eigenvalue having a positive real part. 

\begin{definition}
We say that a critical point $\vect{a}$ is stable when all the eigenvalues 
of the Jacobian evaluated at $\vect{a}$ have non-positive real parts.
\end{definition}

Section \ref{s.CriticalPoints} is concerned with the characterization of the set of critical points,
mainly focusing on  pure transport processes.

For $D=0$, we first consider critical points  $\vect{a}>0$, meaning that $a_i > 0$ for all $i$.  Corollary \ref{CriticalAdjacency} shows that such elements are precisely the positive solutions of the linear equation
\begin{equation}\label{Linear}
\Gamma \vect{a} = c \ {\bf 1},\ c\text{ constant},
\end{equation}
where $\Gamma$ is the adjacency matrix of the graph $G$, with entries $\Gamma_{ij}\in \{0,1\}$ such
that $\Gamma_{ij} = 1$ if and only if cells $i$ and $j$ are nearest neighbours, and
${\bf 1}$ is the vector having all components equal to 1.

Next, we focus on critical points such that 
 $a_i =0 $ for $i$ belonging to some subset $I\subset \Lambda = \{1,\cdots, L\}$.
They correspond to auxin depleted cells. The graph decomposes into a product
of sub-graphs $\gamma$, which are the connected components of the sub-graph of $G$ induced
by the node set $J=\Lambda\setminus I$.
We thus look for $\vect{a}$ having positive components $a_j > 0$ for $j\in J$,
 which should correspond in some sense to auxin peaks.
 We obtain the distribution of auxin in such components, denoted by
$\vect{a}\vert_\gamma$, by solving the linear systems
${\Gamma_\gamma \vect{a}\vert_\gamma = c_\gamma {\bf 1}\vert_\gamma}$. A typical example of such configurations is given in Figure \ref{stable}, where the elements of
$I$ are black and the various components $\gamma$ red.

We then turn to the asymptotic behavior of the solutions of system (\ref{eq:jonsson}), and establish in Proposition \ref{ConvergenceAuxin} that
every solution converges toward the set of critical points. Our technique is based on Lyapunov functions, that is, we look for
a function $H(\vect{a})$ which should be decreasing along the orbits of (\ref{eq:jonsson}), like energy in physics. We proved that, for pure transport processes with $D=0$, the function
$$H(\vect{a})= -\kappa \langle{\bf 1},\vect{a}\rangle -\frac{1}{2}\langle\vect{a},\Gamma \vect{a}\rangle,$$
where $\langle\cdot,\cdot\rangle$ denotes the scalar product, satisfies
$$\frac{{\rm d}H(\vect{a}(t))}{{\rm d}t}\le 0$$
for any solution of (\ref{eq:jonsson}).  \citet{Newell} also considered the differential system (\ref{eq:jonsson}) by taking 
a spatial continuous limit, and showed that the limiting 
equation is a p.d.e.  similar to the von Karman equations from nonlinear elasticity theory: 
$$\frac{\partial w}{\partial t} = \triangle^2 w + P \triangle w + \mathrm{const} \cdot w + \text{ nonlinear terms}.$$
The von Karman equations are of gradient type (see e.g. \citet{Shipman}), where the potential is given by the elastic energy.
These energy functionals were then used in \citet{Newell} and \citet{Newell2} to 
provide a very interesting mechanical explanation of the appearance of Fibonacci numbers in plant patterns based
on buckling.
However, 
the limiting equations associated with the auxin flux are not of gradient type, see the discussion in \citet{Newell}. For the basic
dynamical system (\ref{eq:jonsson}), our result shows that the system is minimizing
the energy $H$, without being of gradient type.

Section \ref{SpecialClass} considers stability, and   
Proposition \ref{StabilityGeneral} shows that the Jacobian
 ${\rm d}f(\vect{a}) = (\partial f_i/\partial a_j)$ evaluated at $\vect{a}$ is  given by
 $${\rm d}f(\vect{a})=\frac{1}{N^2}d(\vect{a})\Gamma \Big(c\ {\rm id}-d(\vect{a})\Gamma\Big),$$
 where $d(\vect{a})$ is the diagonal matrix of diagonal given by
 $\vect{a}$. This permits to check the stability of the critical points for various graphs.
We present various
 results on graphs of interest for plant patterning questions, like the circle or the two-dimensional grid.
 As stated previously, the positive solutions $\vect{a}\vert_\gamma$ to the linear system
 $\Gamma_\gamma \vect{a}\vert_\gamma = c_\gamma {\bf 1}\vert_\gamma$
 provide restrictions of the critical points to the connected components $\gamma$.
 We give
 a particularly simple condition on the sub-graph 
 $\gamma$ of $G$ induced by the set $J=\Lambda\setminus I$
  ensuring  the non-stability of  $\vect{a}\vert_\gamma$. Let
 ${\cal N}_i$, $i\in \Lambda$ be the neighbourhood of $i$, that is the set of nodes $j$
 such that $j\ne i$ and $j\sim i$.
  The configuration $\vect{a}\vert_\gamma$  is unstable
 when the sub-graph  $\gamma$ contains a path of length 4, of the form 
 $$i_0 \to i_1 \to i_2 \to i_3,$$
  such that
 $$i_1 \in {\cal N}_{i_0},\ i_2 \not\in {\cal N}_{i_0} \text{ and } i_3\not\in {\cal N}_{i_0}.$$

 \begin{figure}[h!]
 \centering
 \includegraphics[width=6cm]{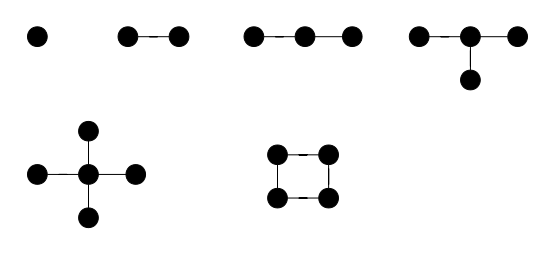}
\caption{ Example of components $\gamma$ of the two-dimensional grid
that can potentially yield stable configurations, see
Corollary \ref{Unstable}. \label{subgraph}}
\end{figure}

For example, if $G$ is a two-dimensional grid, any stable configuration is composed of patches of
the basic building blocks given in Figure \ref{subgraph}; These patterns are however not geometrically regular in general, see Figure \ref{stable}. The more involved model of \citet{Smith1}, which uses PIN proteins in
a direct way (here we assume a quasi-equilibrium, see \citet{Jonsson}), produces more regular patterns in simulations.
In this setting, the transition rates are forced to follow exponential distributions. Hence, a strong selection
based on rates of the form $\exp( b a_i)$, $b>0$ instead of the linear function $a_i$ seems to regularize the
critical points. 
Of course, it might be interesting to justify such a choice biologically.
 We also argue
in what follows that the critical configurations produced by the auxin flux might be more regular when coupled
to periodic potentials.

\begin{figure}[h!]
\centering
\includegraphics[height=5cm]{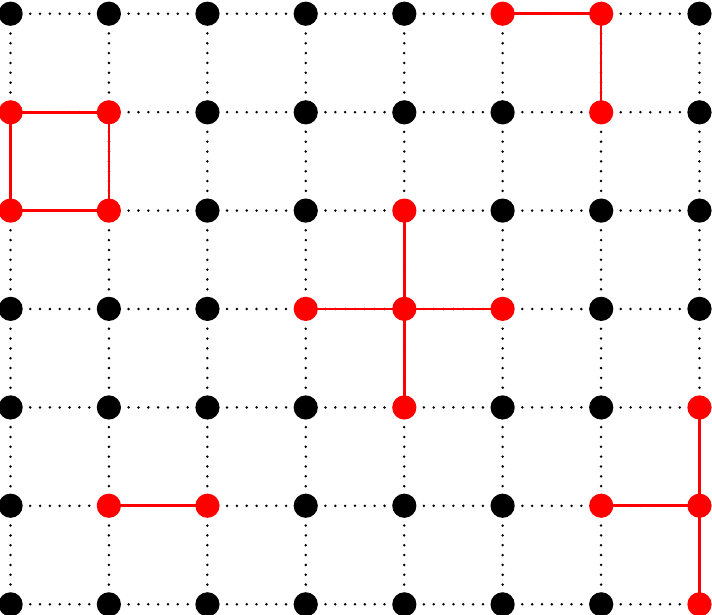}

\caption{\rm A potentially stable configuration when the graph $G$ is a rectangular grid, for the pure transport process.  The black circles
correspond to the values $a_i =0$, $i\in I$ (the set auxin depleted cells), while the red circles are such that $a_i > 0$, corresponding to auxin peaks. One can construct the set of all
stable configurations by playing with the building block given by the square, the star, and the various parts of the star. This shows that dynamical system (\ref{eq:jonsson}) does not  necessarily produce regular patterns. We can however give examples where such configurations are unstable, see Section \ref{StabilityGrid}\label{stable}}
\end{figure}

 It might well be  that the auxin flux self-organize in regular patterns when coupled to mechanical forces, for example, as already
stated in the Introduction, see \citet{Newell}. 
In the same spirit, we introduce a simple model   coupling  the auxin flux
to a potential $\phi$, which might model
deformations, curvature or effects related to the meristem elasticity. We provide an example of the form
\begin{equation}\label{Pot}
 \frac{{\rm d}a_i(t)}{{\rm d}t}=f_i(\vect{a})+\sum_{j\sim i}(a_j \phi_i -a_i \phi_j),
\end{equation}
 $i=1,\ldots,L$. If the potential itself has some regularities, as it is the case
 in specific model given in \citet{Newell}, the auxin flux will exhibit much more regular patterns,
 see e.g. Figure \ref{Fig4}.
 
\begin{figure}
\begin{minipage}{0.5\textwidth}
\begin{center}
\includegraphics[height=4cm]{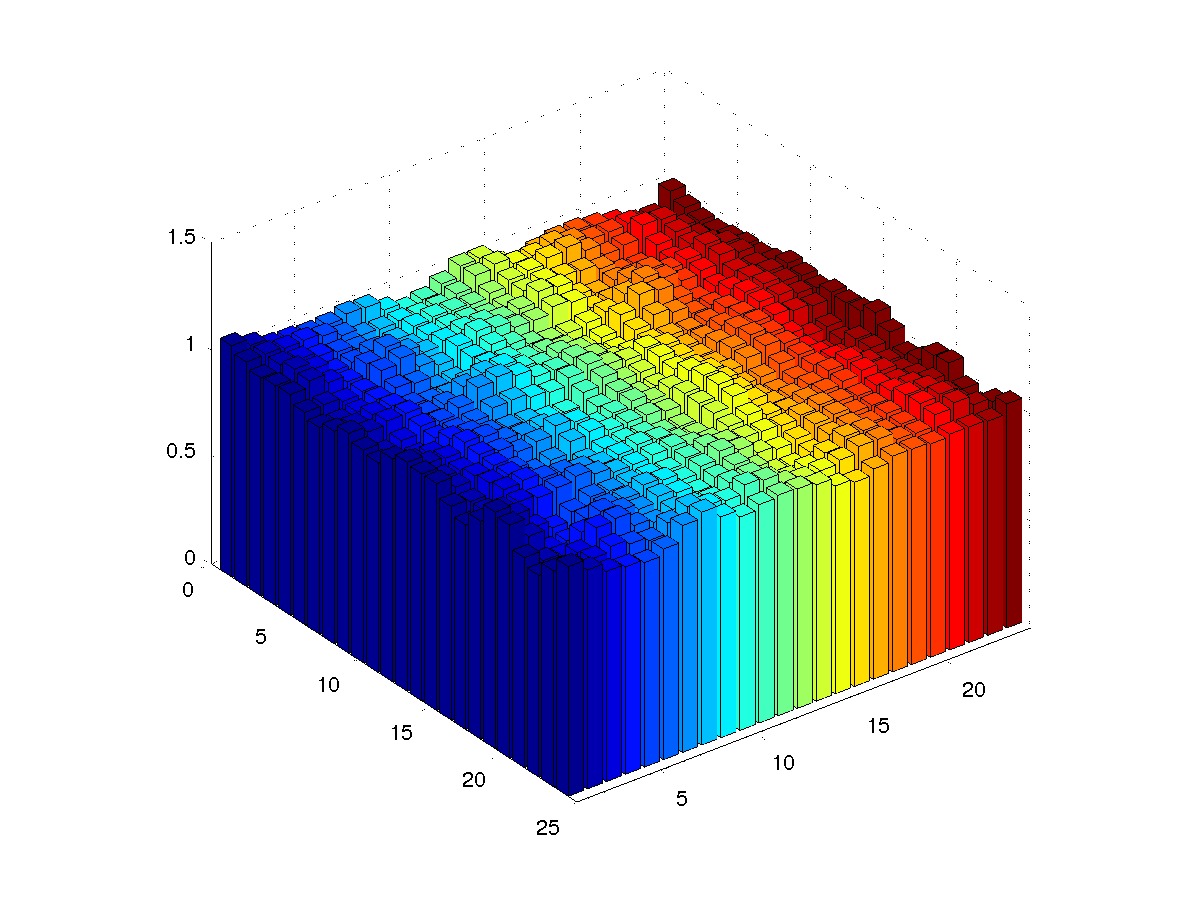}

\vspace{3mm}
(a) $\vect{a}(t)$ for $t\approx 0$
\end{center}
\end{minipage}
\begin{minipage}{0.5\textwidth}
\begin{center}
\includegraphics[height=4cm]{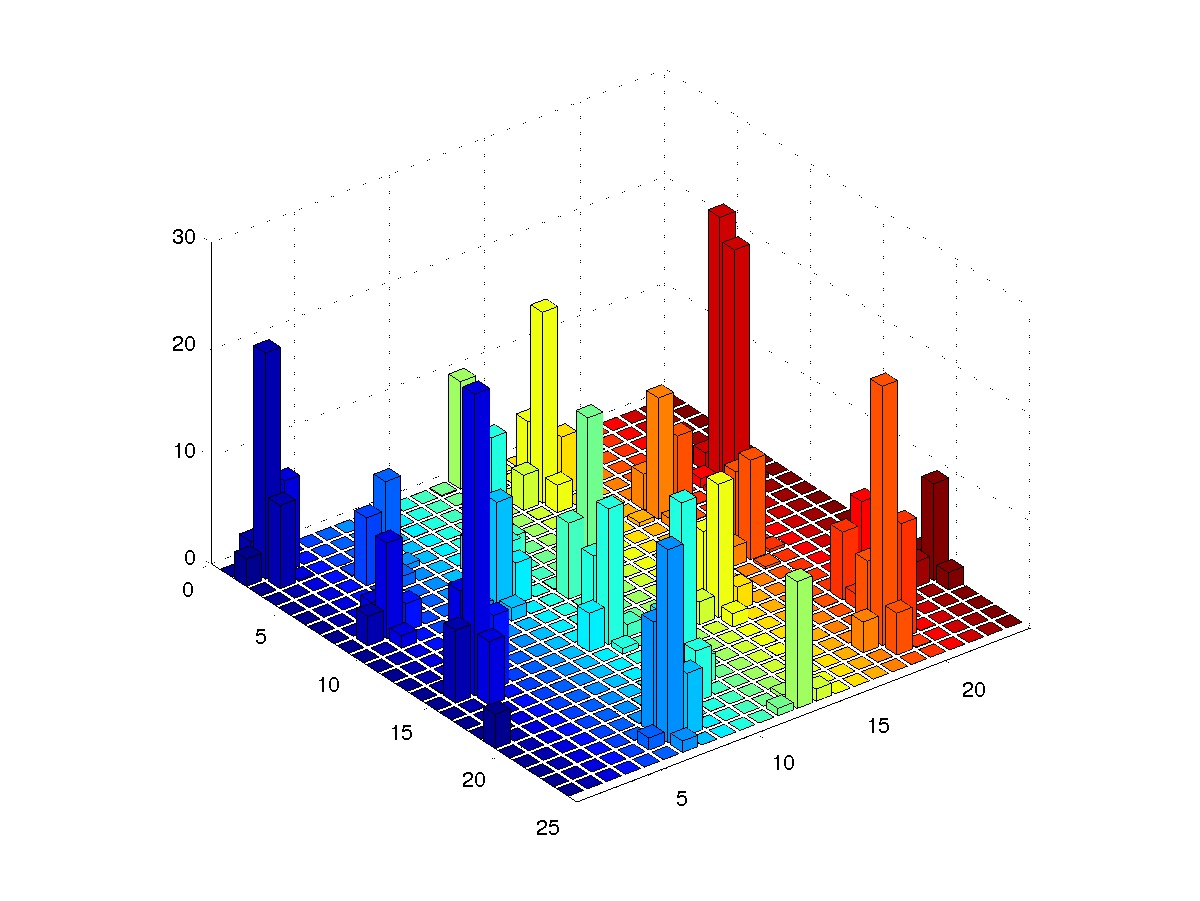}

\vspace{3mm}
(b) $\vect{a}(t)$ for large $t$.
\end{center}
\end{minipage}
\caption{Simulation of the orbits of the differential equation (\ref{Pot}) with $T=1, D=0$ and a potential $\phi(x,y)=
\sin(4\pi x/A)\sin(4\pi y/B)$ on a torus, where
$x=1,\cdots,A$ and $y=1,\cdots,B$. The initial state is flat. (b) shows the state 
$\vect{a}(t)$ for large $t$: one sees regularly spaced auxin peaks, which are isolated
in a background of auxin depleted cells. The potential and transport terms drift thus the process
toward more regular patterns, while the transport process creates domains of auxin depletion.
\label{Fig4}}
\end{figure}

 Finally, the model  provides  an interesting conclusion:  for most graphs, stable configuration
 are composed of building blocks isolated in a sea of auxin depleted cells. This might be the basis 
 for  repulsion between primordia: auxin molecules will not have the tendency to move toward them, leading to
 indirect repulsion. The idea of such repulsive force appeared a long time ago in the work of
 \citet{Hofmeister}. 
  Many authors have
 used this hypothesis to develop very interesting mathematical models, all leading to phyllotactic patterns
 observed in nature, like Fibonacci numbers, the Golden Angle or helical lattices, see
 \citet{Adler,Atela,Douady,Kunz,Levitov}.

\section{A stochastic model of auxin transport \label{Stochastic}}

We consider a stochastic process  related to differential equation 
(\ref{eq:jonsson}), describing the random numbers of auxin molecules
$\eta_t(i)\in\N$ present in cell $i$ at time $t$, $i=1,\cdots, L$.
The state space of this stochastic process is denoted by
$\Omega_L = \N^{\Lambda}$, where
$\Lambda$ is the set of $L$ cells (the nodes of the graph).
Looking at equation (\ref{eq:jonsson}), we
define transitions by supposing that any auxin molecule present in cell $i$ at time $t$
can be transported to a neighboring cell $j$ at rate $\bar q_{ij}(\eta)$
of the form
\begin{equation}\label{StochasticRates}
\bar q_{ij}(\eta)=\frac{\eta (j)}{\bar\kappa +\sum_{k\sim i}\eta (k)},
\end{equation}
when $\eta (i)\ge 1$.
This defines a Markov process with state space $\Omega_L$, describing the stochastic moves of the various
auxin molecules. Let $M$ denote the total number
of molecules.
 It turns out that the ordinary differential
equation (\ref{eq:jonsson}) describes the large $M$ limit of the stochastic process
(weak noise limit). This random particle system is then described 
as a gaussian process $X_M(t)\approx \eta_t /M$ in $\R^L$ drifted by the
solution $\vect{a}(t)$ of (\ref{eq:jonsson}) for some covariance function. This approximation will be mathematically rigourous
if the constants $\kappa$ and $\bar\kappa$ are related in such a way that
$\bar\kappa = M \kappa$, and the limiting behavior of the rescaled
number of auxin molecules is such that $\eta_t (i)/M \approx a_i(t)$, where
$\vect{a}(t)$ solves (\ref{eq:jonsson}), with
$\sum_{i\in\Lambda}a_i(t)\equiv 1$.

Such stochastic particle systems are known as {\it density dependent population processes},
and the above limit has been  treated in detail in \citet{Ethier},
and corresponds to a law of large numbers.
Notice that different kinds of limits can also
be considered. Stochastic mass transport processes of this type have also appeared in physics, and are known as 
{\it generalized zero range processes}, see e.g. \citet{Evans,Godreche,Grosskinsky,Kipnis}. In this setting, hydrodynamical limits are considered, when both $M$ and $L$ tend simultaneously
to $\infty$ in such a way that $M=\rho L$, for a fixed density. Simulations show the appearance of condensates
when $\rho$ is larger than a critical threshold $\rho_c$, which might represent auxin peaks in some way. Mathematically,
the theory of condensation is not developed at present time for these general processes, so that we here
focus on the weak noise limit. \\

The gaussian approximation of $\eta_t /M$ is defined as follows: for $i=1,\cdots, L$, consider
the unit vectors $e_i$ with $e_{i}(j)=0$ when $j\ne i$ and $e_i(i)=1$. Let
${\rm d} f$ be the Jacobian  ${\rm d} f = (\partial f_i/\partial a_j))_{i,j=1,\cdots,L}$.

For simplicity, we illustrate the transition rates for cells arranged along a circle:
the rate functions  are given by functions $\beta_l(\vect{a})$, $l\in \Z^l$, satisfying
\begin{align*} 
\beta_{e_{i+1}-e_i}(\vect{a}) &= (D a_i +T\frac{a_{i+1}a_i}{\kappa +a_{i-1}+a_{i+1}} ) \ ; &\mathrm{for} \; i= 1, \ldots, L, \\
\beta_{e_{i-1}-e_i}(\vect{a}) &= (D a_i +T\frac{a_{i-1}a_i}{\kappa + a_{i-1}+a_{i+1}} ) \ ; &\mathrm{for} \; i= 1, \ldots, L,\\
\beta_{l}(\vect{a}) &=0 & \mathrm{for} \ ; l \neq e_{i-1}-e_{i}, e_{i+1}-e_{i}.
\end{align*}
For example $e_{i+1}-e_i$ means that an auxin molecule of cell $i$ has been
transported in cell $i+1$.  For arbitrary graphs, the definitions of the
rates $\beta_l$ are similar.

With these notations, we can define the matrix $G$
$$G(\vect{a})=\sum_{l\in\Z^L}\beta_l(\vect{a})l l^*,$$
which will be an essential element of the covariance matrix associated with the gaussian approximation.
Consider the following matrix valued differential equation
$$\frac{\partial \phi(t,s)}{\partial t}={\rm d} f(\vect{a}(t))\phi(t,s),\ \ \phi(s,s)={\rm id}.$$
Then, as $M$ is large, one gets that (see e.g. \citep{Ethier})
$$ \frac{\eta_t}{M}= \vect{a}(t) + \frac{1}{\sqrt{M}}V_t,$$
where $V_t$ is a gaussian process  of mean
$\phi(t,0)V(0)$ and of covariance function
$${\rm Cov}(V(t),V(r)) = \int_0^{\min\{t,r\}}\phi(t,s)G(\vect{a}(s))\phi(r,s)^* {\rm d}s.$$

\section{Basic properties of the auxin flux\label{s.BasicProperties} }

\begin{proposition}\label{Invariance}

Every solution $\vect{a}$ of (\ref{eq:jonsson}) starting in $\R^L_{\geq 0}$ remains non-negative, and is conservative, that is,
$$\forall t \in \R_{\geq 0}, \; \sum_i^L a_i(t) =\sum_i^L a_i(0) = \rho L.$$
Moreover, the system (\ref{eq:jonsson}) admits a unique solution defined  over $ [0,+\infty)$. 
When $a_i(0)>0$, then $a_i(t) > 0$, $\forall t > 0$.
For pure transport processes with
$D=0$, $a_i(0)=0 \Rightarrow a_i(t)\equiv 0,\ \forall t > 0$.
\end{proposition}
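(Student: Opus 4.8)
The plan is to verify each assertion of Proposition \ref{Invariance} in turn, treating the right-hand side $f(\vect{a})$ as a vector field on $\R^L$ and exploiting its structure.

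First, I would establish local existence and uniqueness. The functions $q_{ik}(\vect{a}) = a_k/(\kappa + \sum_{j\sim i} a_j)$ are rational with denominator bounded below by $\kappa > 0$ on the region $\{\sum_j a_j \ge 0\}$ wherever all $a_j \ge 0$; hence $f$ is locally Lipschitz (indeed $C^\infty$) on an open neighborhood of $\R^L_{\ge 0}$, and Picard--Lindel\"of gives a unique maximal solution. Conservation $\sum_i a_i(t) \equiv \sum_i a_i(0)$ follows by summing (\ref{eq:jonsson}) over $i$: the diffusive part $D\sum_i\sum_{k\sim i}(a_k-a_i)$ telescopes to zero by symmetry of $\sim$, and the transport part $T\sum_i\sum_{k\sim i}(a_k q_{ki} - a_i q_{ik})$ is antisymmetric under swapping the roles of $i$ and $k$, so it too sums to zero. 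Thus $\frac{\rm d}{{\rm d}t}\sum_i a_i = 0$.

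Next comes nonnegativity, which I expect to be the crux. The standard tool is a tangency/subtangential argument: the closed positive orthant $\R^L_{\ge 0}$ is positively invariant provided that on each boundary face $\{a_i = 0\}$ the vector field points inward, i.e. $a_i = 0 \Rightarrow f_i(\vect{a}) \ge 0$ (Nagumo's theorem, or a direct Gronwall argument). Plugging $a_i = 0$ into (\ref{eq:jonsson}): the diffusive term becomes $D\sum_{k\sim i} a_k \ge 0$; in the transport term, the outgoing part $-a_i q_{ik}$ vanishes because it carries the factor $a_i = 0$, while the incoming part $\sum_{k\sim i} a_k q_{ki} = \sum_{k\sim i} a_k \cdot \frac{a_i}{\kappa + \sum_{j\sim k} a_j}$ also vanishes for the same reason. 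Hence $f_i(\vect{a}) = D\sum_{k\sim i} a_k \ge 0$ on that face, so no trajectory starting in the orthant can cross out of it. The subtlety to handle carefully is that this inward-pointing condition must be checked on the relative boundary and combined properly with uniqueness to conclude invariance of the full closed orthant; I would cite the classical invariance criterion rather than reprove it. Once nonnegativity is known, the denominators $\kappa + \sum_{j\sim i} a_j \ge \kappa$ stay bounded away from zero and $\sum_i a_i(t) = \rho L$ is bounded, so $|f(\vect{a}(t))|$ is bounded uniformly in $t$; therefore the maximal solution cannot blow up in finite time and extends to all of $[0,+\infty)$.

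For the strict-positivity refinement, when $a_i(0) > 0$ I would argue that along the solution $\frac{{\rm d}a_i}{{\rm d}t} \ge -a_i\sum_{k\sim i} q_{ik}(\vect{a}(t)) = -a_i \cdot \frac{\sum_{k\sim i} a_k}{\kappa + \sum_{j\sim i} a_j} \ge -a_i$ (using the bound $\sum_{k\sim i} a_k \le \kappa + \sum_{j\sim i} a_j$ on the orthant, since $\mathcal N_i \subseteq$ the index set of the sum and $\kappa>0$), so by Gronwall $a_i(t) \ge a_i(0) e^{-t} > 0$. Finally, for pure transport $D = 0$: if $a_i(0) = 0$, then on the region where $\vect{a}\ge 0$ we have $f_i(\vect{a}) = T\big(a_i \sum_{k\sim i} q_{ki} - a_i\sum_{k\sim i}q_{ik}\big) = a_i \cdot g_i(\vect{a})$ for a locally bounded function $g_i$, so $a_i$ satisfies a linear scalar ODE $\dot a_i = g_i(\vect{a}(t)) a_i$ with zero initial condition; by uniqueness for that scalar equation $a_i(t) \equiv 0$. (The key point making the $D=0$ case special is precisely that the diffusive influx term $D\sum_{k\sim i} a_k$, which is the only term not carrying a factor of $a_i$, is absent.)
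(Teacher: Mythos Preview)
Your proposal is essentially correct and follows a clean, standard line: Nagumo's tangency criterion for invariance of $\R^L_{\ge 0}$, conservation by antisymmetry of the flux terms, global existence from boundedness, and a Gronwall bound for strict positivity. One small slip: in your lower bound for $\dot a_i$ you dropped the diffusive contribution $-D\, d_i\, a_i$ (where $d_i = |\mathcal N_i|$) and the factor $T$ in front of the transport rates; the correct inequality on the orthant is
\[
\dot a_i \;\ge\; -D\, d_i\, a_i - T\, a_i \sum_{k\sim i} q_{ik}(\vect a) \;\ge\; -(D\, d_i + T)\, a_i,
\]
which still yields $a_i(t) \ge a_i(0)\, e^{-(D d_i + T)t} > 0$, so the conclusion is unaffected.

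The paper takes a different but equivalent route. Instead of citing an invariance theorem, it rewrites each equation as $\dot a_i = h_i(t) + g_i(t)\, a_i$ with $h_i(t) = D\sum_{k\sim i} a_k(t)$ and applies the variation-of-constants formula
\[
a_i(t) = a_i(0)\, e^{\int_0^t g_i(s)\,ds} + \int_0^t h_i(u)\, e^{\int_u^t g_i(v)\,dv}\, du.
\]
Nonnegativity and strict positivity are then read off from this representation once $h_i \ge 0$ is known; to rule out a first zero-crossing in the case $D>0$ the paper introduces an ``instantaneously positive'' notion and uses connectedness of the graph to propagate positivity, followed by a contradiction argument. Your approach is shorter and more conceptual, delegating the invariance step to a cited criterion; the paper's is self-contained and produces an explicit integral representation (which also immediately gives the $D=0$ statement, since then $h_i \equiv 0$ and $a_i(t) = a_i(0)\, e^{\int_0^t g_i}$), at the price of the somewhat ad hoc i.p.\ device.
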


The proof of proposition \ref{Invariance} is given in Section \ref{Appendix}.

Let us rewrite the system \eqref{eq:jonsson}, for $1\leq i\leq L$

\begin{equation}\label{Diff-ai}
\dot a_i=D\sum_{k\sim i}a_k+T\sum_{k\sim i}(\frac{a_k}{\kappa+\sum_{j\sim k}a_j}-\frac{a_k}{\kappa+\sum_{j\sim i}a_j}-\frac{D}{T})a_i,
\end{equation}
with the initial condition $\vect{a}(0)\in \R_+^L$.

\begin{proposition}
If the graph is connected and $D>0$, the only critical point of \eqref{Diff-ai} in $\R_+^L$ admitting zero components is the origin.
\end{proposition}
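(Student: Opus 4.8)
The plan is to exploit the rewritten form \eqref{Diff-ai} of the system, in which the factor $a_i$ has been pulled out of the entire transport contribution. Let $\vect a\in\R_+^L$ be a critical point and set $I=\{i\in\Lambda:a_i=0\}$; suppose $I\neq\emptyset$ and aim to conclude that $I=\Lambda$. Fix any $i\in I$. Every summand of the $T$-sum in \eqref{Diff-ai} carries the factor $a_i=0$, and the Michaelis--Menten denominators are bounded below by $\kappa>0$ so that no summand is singular; hence the whole $T$-sum vanishes, and the critical-point equation $\dot a_i=0$ collapses to $D\sum_{k\sim i}a_k=0$. Equivalently, starting from \eqref{eq:jonsson}: the outgoing flux $a_iq_{ik}$ is proportional to $a_i$, the incoming flux $a_kq_{ki}$ is proportional to $q_{ki}\propto a_i$, and the diffusive part reduces to $D\sum_{k\sim i}(a_k-a_i)=D\sum_{k\sim i}a_k$, giving the same identity.

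Since $D>0$ and all components $a_k$ are non-negative (Proposition \ref{Invariance}), the identity $D\sum_{k\sim i}a_k=0$ forces $a_k=0$ for every neighbour $k$ of $i$. Thus $i\in I\Rightarrow\mathcal{N}_i\subseteq I$: the zero-set is closed under passing to nearest neighbours. Because the graph $(\Lambda,E)$ is connected and $I$ is nonempty, iterating this neighbour-closure along paths shows $I=\Lambda$, i.e. $\vect a\equiv 0$. Conversely, $\vect 0$ is trivially a critical point of \eqref{Diff-ai}, so the origin is indeed the only critical point in $\R_+^L$ with a vanishing component.

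There is no genuinely hard step; the only point worth flagging is the structural observation --- made transparent by the rewriting \eqref{Diff-ai} --- that a depleted cell exchanges no transport flux at all (none out of it, since that is proportional to $a_i$; none into it, since the incoming rates $q_{ki}\propto a_i$ vanish as well), so that the strictly positive diffusive inflow must balance on its own, which immediately propagates depletion to the whole neighbourhood. The same computation explains why $D>0$ is essential: for $D=0$ the condition at a zero cell becomes vacuous, consistent with the existence of nontrivial critical points carrying depleted cells analysed later in the paper.
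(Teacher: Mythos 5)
Your proof is correct and follows exactly the paper's argument: at a cell with $a_i=0$ the entire transport sum in \eqref{Diff-ai} vanishes because of the common factor $a_i$, the stationarity condition reduces to $D\sum_{k\sim i}a_k=0$, non-negativity forces all neighbours to vanish, and connectedness propagates the depletion to the whole graph. Your extra remarks (boundedness of the denominators, the role of $D>0$, the contrast with $D=0$) are accurate but not needed beyond what the paper already does.
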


\begin{proof}
Let $a_i=0$ where $a_i$ is the $i$-th component of a critical point $a \in \R_+^L$ of \eqref{Diff-ai}. Clearly \eqref{Diff-ai} entails $\sum_{k\sim i}a_k=0$ and the non-negativity of each term, $a_k=0$ for all
$k\sim i$. Since the graph is connected we deduce that $a_k=0$ for all $1\leq k \leq L$.
\end{proof}

\begin{proposition}
Let us assume that the graph is connected and $D>0$. If $\sum_{k=1}^L a_k(0)>0$, then for all $i\in\{1,...,L\}$, we have $\underline \lim_{t\rightarrow +\infty}a_i(t)>0.$
\end{proposition}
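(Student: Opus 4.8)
The plan is to argue by contradiction, combining the conservation of mass from Proposition~\ref{Invariance} with the connectedness of the graph and a LaSalle-type invariance argument. Suppose that for some index $i$ one has $\underline\lim_{t\to+\infty}a_i(t)=0$. Since the total mass $\sum_k a_k(t)=\rho L$ is constant and positive, and all $a_k(t)\ge 0$, there is always at least one cell carrying a definite fraction of the mass; the difficulty is to propagate positivity from that cell to cell $i$ along paths in the graph. First I would establish an $\omega$-limit set argument: the orbit $\vect{a}(t)$ lives in the compact simplex $\{\vect{a}\ge 0:\sum_k a_k=\rho L\}$, so its $\omega$-limit set $\Omega$ is nonempty, compact, connected and invariant. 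If $\underline\lim a_i(t)=0$ then some point $\vect{a}^\star\in\Omega$ has $a_i^\star=0$.

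Next I would exploit invariance of $\Omega$ together with the structure of \eqref{Diff-ai} when $D>0$. Along the trajectory through $\vect{a}^\star$, the component that is zero at $\vect{a}^\star$ must stay zero for all (forward and backward) time, since $\Omega$ is invariant and $a_i\ge 0$ forces a minimum there; hence $\dot a_i=0$ at $\vect{a}^\star$. Reading \eqref{Diff-ai} at such a point gives $D\sum_{k\sim i}a_k^\star=0$, and since $D>0$ and all terms are non-negative this forces $a_k^\star=0$ for every neighbour $k$ of $i$. The same reasoning applies at each such neighbour (their components are also identically zero on the invariant orbit through $\vect{a}^\star$), and by connectedness of the graph we propagate to conclude $\vect{a}^\star=0$. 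But $\vect{a}^\star\in\Omega$ lies in the simplex $\sum_k a_k^\star=\rho L>0$, a contradiction. Therefore $\underline\lim_{t\to+\infty}a_i(t)>0$ for every $i$.

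The step I expect to be the main obstacle is making rigorous the claim that a component which vanishes at a point of the $\omega$-limit set vanishes identically along the orbit through that point, and in particular that its time derivative vanishes there. The clean way is to note that for a point $\vect{a}^\star\in\Omega$ with $a_i^\star=0$, non-negativity of the flow (Proposition~\ref{Invariance}, which gives invariance of $\R^L_{\ge0}$) means $t\mapsto a_i(t)$ restricted to the orbit through $\vect{a}^\star$ attains an interior minimum value $0$ at $t=0$, whence $\dot a_i(0)=0$; one must check this orbit is indeed defined on all of $\R$, which follows because $\Omega$ is compact, invariant, and \eqref{Diff-ai} has a complete flow on the simplex. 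An alternative, avoiding limit sets, is to work directly with $\underline\lim$: pick a cell $k_0$ with $\overline\lim a_{k_0}(t)>0$ (it exists by mass conservation), and use a Gr\"onwall-type comparison on \eqref{Diff-ai} to show that a cell adjacent to a cell with persistently positive auxin also has persistently positive auxin, because the diffusive inflow $D\sum_{k\sim i}a_k$ dominates the possible outflow when $a_i$ is small; iterating along a path from $k_0$ to $i$ gives $\underline\lim a_i(t)>0$. I would present the $\omega$-limit version as the main line and remark on the direct estimate as a more quantitative alternative.
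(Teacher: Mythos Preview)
Your argument is correct and shares the paper's overall skeleton: assume $\underline\lim a_i(t)=0$, get the derivative of $a_i$ to vanish at a suitable limit, read off from \eqref{Diff-ai} that all neighbours vanish there, propagate by connectedness, and contradict mass conservation. The technical core, however, is handled differently. The paper does not pass to the $\omega$-limit set; instead it invokes an analytical lemma (cited from \citet{Gabriel}): if $f$ and $\ddot f$ are bounded and $f(t_n)\to\underline\lim f$, then $\dot f(t_n)\to 0$. Applying this to $a_i$ along a sequence $t_n$ realizing the liminf gives $\dot a_i(t_n)\to 0$, hence $\sum_{k\sim i}a_k(t_n)\to 0$, hence $a_k(t_n)\to 0=\underline a_k$ for each $k\sim i$; the same lemma then applies to each $a_k$ along the \emph{same} sequence $t_n$, and one iterates. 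Your route via invariance of the $\omega$-limit set and the interior-minimum argument is cleaner in that it avoids this auxiliary lemma and the verification that $\ddot a_i$ is bounded; the paper's route, on the other hand, stays at the level of sequences and never needs backward-time existence of the flow through $\vect{a}^\star$.

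One expository point: your phrase ``the component that is zero at $\vect{a}^\star$ must stay zero for all (forward and backward) time'' is stronger than what your minimum argument actually delivers, and stronger than what you need. The minimum argument only gives $\dot a_i=0$ at $\vect{a}^\star$, and that is exactly enough: once you deduce $a_k^\star=0$ for $k\sim i$, you reapply the minimum argument to component $k$ at the \emph{same} point $\vect{a}^\star$ (not along the orbit), and iterate. The claim that $a_i\equiv 0$ on the whole orbit through $\vect{a}^\star$ is true a posteriori (since $\vect{a}^\star=0$ is a rest point) but is not an ingredient of the proof. Tightening this sentence would remove any appearance of circularity.
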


To prove the previous proposition, we will use the following Proposition, see \citet{Gabriel}.

\begin{proposition}
Let $f:\R_+\rightarrow \R$ be twice differentiable and bounded together with $\ddot f$. If, as $n\rightarrow +\infty$, $t_n\uparrow +\infty$  and $f(t_n)\rightarrow \underline \lim_{t\rightarrow +\infty} f(t)$ (or $f(t_n)\rightarrow \overline \lim_{t\rightarrow +\infty} f(t)$), then $\dot f(t_n)\rightarrow 0$.
\end{proposition}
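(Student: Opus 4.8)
The plan is to argue by contradiction, the only analytic ingredient being a second--order Taylor expansion with Lagrange remainder, which is where the boundedness of $\ddot f$ enters. Put $M:=\sup_{t\ge 0}|\ddot f(t)|<\infty$. For any $t\ge 0$ and any increment $h$ with $t+h\ge 0$ there is $\xi$ between $t$ and $t+h$ such that $f(t+h)=f(t)+h\,\dot f(t)+\tfrac{h^{2}}{2}\ddot f(\xi)$, hence
\[
\bigl|f(t+h)-f(t)-h\,\dot f(t)\bigr|\le \tfrac{M}{2}h^{2}.
\]
Everything after this is extraction of subsequences and bookkeeping of signs.

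First I would observe that it suffices to show that every limit point of $(\dot f(t_n))_n$ in the compact extended line $[-\infty,+\infty]$ equals $0$, which is equivalent to $\dot f(t_n)\to 0$. (One could also note at the outset that $f$ and $\ddot f$ bounded force $\dot f$ bounded, by a Landau--Kolmogorov interpolation inequality, but this is not needed.) So suppose, passing to a subsequence still denoted $(t_n)$, that $\dot f(t_n)\to L$ with $L\neq 0$. Write $\ell:=\underline\lim_{t\to\infty}f(t)$, which is finite since $f$ is bounded, and assume $f(t_n)\to\ell$; the case of $\overline\lim$ is entirely symmetric and handled by expanding on the opposite side and reversing the inequalities.

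Next I would expand on the correct side. If $L>0$ (allowing $L=+\infty$), I apply the estimate at the points $t_n-h$, where $h>0$ is fixed with $h<2L/M$ (any $h>0$ if $L=+\infty$), so that $t_n-h\to+\infty$; this gives $f(t_n-h)\le f(t_n)-h\,\dot f(t_n)+\tfrac{M}{2}h^{2}$, and letting $n\to\infty$,
\[
\limsup_{n\to\infty}f(t_n-h)\;\le\;\ell-hL+\tfrac{M}{2}h^{2}\;<\;\ell .
\]
Thus $f(t_n-h)\le\ell-\delta$ for some $\delta>0$ and all large $n$, with $t_n-h\to+\infty$, so $\inf_{t\ge T}f(t)\le\ell-\delta$ for every $T$, contradicting $\ell=\underline\lim_{t\to\infty}f(t)$. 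If instead $L<0$ (allowing $L=-\infty$), the same computation at the points $t_n+h$ with $0<h<-2L/M$ yields $\limsup_{n}f(t_n+h)\le\ell+hL+\tfrac{M}{2}h^{2}<\ell$, again contradicting the definition of $\ell$. Hence $L=0$, and $\dot f(t_n)\to 0$.

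The only real care needed --- and the place where a hasty argument goes wrong --- is choosing the side of the expansion so that the fixed nonzero linear term $-hL$ (resp.\ $+hL$) pushes $f$ strictly below $\ell$, and then taking $h$ small enough that the quadratic remainder $\tfrac{M}{2}h^{2}$ cannot cancel it. Once the side is chosen correctly, the contradiction with the definition of $\underline\lim$ (or $\overline\lim$) is immediate, and the case $L=\pm\infty$ needs no separate treatment --- it in fact already contradicts boundedness of $f$.
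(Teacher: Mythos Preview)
Your argument is correct. The Taylor expansion with Lagrange remainder gives exactly the inequality you state, the subsequence extraction is legitimate (any subsequence of $(t_n)$ still has $f(t_n)\to\ell$ and $t_n\to\infty$), and the choice of side --- expanding at $t_n-h$ when $L>0$ and at $t_n+h$ when $L<0$ --- is the right one to force values of $f$ strictly below $\ell$ along a sequence tending to $+\infty$, contradicting $\ell=\underline\lim_{t\to\infty}f(t)$. The quantitative constraint $0<h<2|L|/M$ (when $M>0$) is precisely what makes the linear term dominate the quadratic remainder, and the infinite case $L=\pm\infty$ is indeed excluded either by Landau--Kolmogorov or, as you note, directly by boundedness of $f$ through the same Taylor estimate.

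As for comparison with the paper: the paper does not supply its own proof of this proposition. It is quoted as an auxiliary result with a reference to \citet{Gabriel}, and the proof environment that follows it in the text pertains to the preceding proposition on the asymptotic positivity of the $a_i$. Your self-contained argument therefore fills a gap that the paper leaves to the literature, and does so cleanly.
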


\begin{remark}
\begin{enumerate}
\item[(1)] The boundedness of $f$ and $ \ddot f$ implies the one of $\dot f$.
\item[(2)] The assumptions in the preceding proposition can be weakened without changing essentially the proof: "$f:\R_+\rightarrow \R$ be twice differentiable and bounded together with $\ddot f$ " can be replaced by "$f:\R_+\rightarrow \R$ is bounded and differentiable and $\dot f$ is uniformly continuous".
\end{enumerate}
\end{remark}

\begin{proof}
If $a_k(0)=0$ for all $1\leq k\leq L$, then the unique solution is identically zero. Otherwise $\sum_{k=1}^L a_k(0)>0$. Let us suppose that for some $i\in\{1,...,L\}$, $$\underline \lim_{t\rightarrow +\infty}a_i(t)=0.$$ Let us introduce the notation $\underline a_i=\underline\lim_{t\rightarrow +\infty}a_i(t)$. Since $a_i(t)$ is bounded together with its second derivative, the preceding proposition applies and for any sequence $t_n\uparrow +\infty$ such that $a_i(t_n)\rightarrow \underline a_i$, we have $\dot a_i(t_n)\rightarrow 0$ as $n\rightarrow +\infty$.
Every  $a_k(t_n)$ being bounded in the right-hand member of the equation for $\dot a_i(t_n)$, we conclude that $\lim_{n\rightarrow +\infty}D\sum_{k\sim i}a_k(t_n)=0$. The non-negativity of each $a_k(t_n)$ entails $\lim_{n\rightarrow +\infty}a_k(t_n)=0=\underline a_k$ for every ${k\sim i}$. According to the above proposition, $\lim_{n\rightarrow +\infty}\dot a_k(t_n)=0$ for every ${k\sim i}$ and since the graph is connected, repeating the same argument provides  $\lim_{n\rightarrow +\infty}\dot a_j(t_n)=0$ for every $j\in\{1,...,L\}$. Thus $0=\lim_{n\rightarrow +\infty}\sum_{1\leq j\leq L}a_j(t_n)=\sum_{k=1}^L a_k(0)>0$, a contradiction.
\\
\\
As a consequence, for $D>0$, it is impossible to have $\lim_{t\rightarrow +\infty}a_i(t)=0$, and thus none of the compartments can become empty asymptotically.
\end{proof}

\section{Tools from Markov Chain theory\label{s.Tools} }
We will use notions from Markov chain theory, and hence consider generators  $Q:\ \Lambda\ {\rm x}
\ \Lambda \longrightarrow \R$, $Q =\{q_{ij},\ i,\ j\in\Lambda\}$, 
such that
$$q_{ij}\ge 0,\ \text{ for } i\ne j \text{ and } q_{ii}=-\sum_{j\ne i}q_{ij}.$$ 
 For example, the  auxin flux described by (\ref{eq:jonsson}) contains implicitly a generator $Q(D,T,\vect{a})$ given by
 
 \begin{equation}\label{LaplaceBasic}
\begin{cases}
q_{ij}(D,T,\vect{a})= D + T q_{ij}(\vect{a}),& i \sim j, \\
 q_{ij}(D,T,\vect{a})=0 , & i \nsim j, i \neq j. \\
 q_{ii}(D,T,\vect{a})= - \sum_{j\neq i} q_{ij}(D,T,\vect{a}),
\end{cases}
\end{equation}
where we set
$$q_{ij}(\vect{a})= \frac{a_j}{\kappa + \sum_{k \sim i} a_k }.$$

$Q$ is {\bf irreducible} when for any pair of nodes $(i,j)$,  there is a path
$i_0 =i\to i_1\to i_2\to \cdots \to i_k =j$ such that
$q_{i_n i_{n+1}}>0$, $n=0,\cdots, k-1$. When 
$Q$ is irreducible, one can prove that there is a unique {\bf invariant probability
measure} $\pi$ satisfying $\pi^* Q =0$.

An irreducible transition kernel $Q$ of invariant probability measure $\pi$ is said to be {\bf reversible} when 
$$\pi_i q_{ij}\equiv \pi_j q_{ji},\ \forall i\ne j.$$

\section{Characterization of the critical points\label{s.CriticalPoints} }
We can write (\ref{eq:jonsson}) in the more compact form
$$\frac{\mathrm{d}a_i}{\mathrm{dt}}  = f_i(\vect{a})=\sum_{j\sim i} ( a_j q_{ji}(D,T,\vect{a})-a_i q_{ij}(D,T,\vect{a})),
\ \ \frac{\mathrm{d}\vect{a}}{\mathrm{dt}}  = f(\vect{a})= \vect{a}^{\ast}Q(D,T,\vect{a}).$$
Our first aim is to look for the critical points of the above dynamical system, that is,
to find the element $\vect{a}\in\R^L$ solving the equations $f(\vect{a})=0$,
which can be rewritten as
$\vect{a}^{\ast}Q(D,T,\vect{a})=0$. Hence,
 any solution to $f(\vect{a})=0$ is an invariant measure associated with
the transition function $Q(D,T,\vect{a})$. We will use the following facts:
\begin{itemize}
\item{} When $D>0$,  the generator $Q(D,T,\vect{a})$ is irreducible.
\item{} For pure transport processes  where $D=0$ and $T>0$, $Q(0,T,\vect{a})$ is irreducible
if and only if $a_i >0$ $\forall i$.
\end{itemize}
In the irreducible case,  let $\pi(\vect{a})$ denote the associated positive invariant probability measure. 
We thus look for $\vect{a} > 0$ such that
\begin{equation}\label{FundamentalEquation}
\frac{\vect{a}}{\sum_{i\in\Lambda}a_i} =\pi(\vect{a}).
\end{equation}

\subsection{The irreducible case}
\subsubsection{Pure transport processes}
If $Q(0,T,\vect{a})$ is reversible, the equation $f(\vect{a})=0$ is equivalent to the set of equations
\begin{equation}\label{LocalEquation}
a_i q_{ij}(0,T,\vect{a})\equiv a_j q_{ji}(0,T,\vect{a}),\ \ i\ne j.
\end{equation}

 In what follows, we will use the functions
\begin{equation}\label{N}
N_k =N_k(\vect{a}) = \kappa + \sum_{j\sim k}a_j.
\end{equation}

\begin{lemma}\label{ReversibilityPureTransport}
Let $G$ be a connected graph. Assume that $D=0$ and $T>0$. Then
$Q(0,T,\vect{a})$ is reversible $\forall \vect{a}>0$, of invariant probability measure given by
\begin{equation}\label{InvariantMeasureTransport}
\pi (\vect{a}) =\Big(\frac{a_i N_i}{Z(\vect{a})}\Big)_{i\in\Lambda},
\end{equation}
where
$$Z(\vect{a})=\sum_{i\in\Lambda}a_i N_i
=\kappa \sum_{i\in\Lambda}a_i +  \sum_{i \in \Lambda} \sum_{j\sim i}a_i a_j.$$
In this case, $\vect{a}>0$ is a critical point with $f(\vect{a})=0$ if 
and only if $N_i(\vect{a})$ does not depend on $i$, with
\begin{equation}\label{NConstant}
N_i(\vect{a})\equiv \frac{Z(\vect{a})}{\sum_{i\in\Lambda }a_i}=\kappa + \frac{  \sum_{i \in \Lambda} \sum_{j\sim i}a_i a_j}{\sum_{i\in\Lambda}a_i}.
\end{equation}
\end{lemma}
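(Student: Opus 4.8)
The plan is to produce an explicit reversing measure for the generator $Q(0,T,\vect{a})$ and then invoke uniqueness of the invariant probability measure to pin it down. With $D=0$ the off-diagonal rates are $q_{ij}(0,T,\vect{a}) = T\,q_{ij}(\vect{a}) = T a_j/N_i$ when $i\sim j$, and vanish when $i\nsim j$, $i\neq j$, where $N_i=N_i(\vect{a})=\kappa+\sum_{k\sim i}a_k$. So the first step is to write down the candidate unnormalized measure $\mu_i:=a_i N_i$ and verify the detailed balance equations: for $i\sim j$ one has $\mu_i\,q_{ij}(0,T,\vect{a}) = a_iN_i\cdot T a_j/N_i = T a_i a_j$, which is symmetric in $i$ and $j$ and hence equals $\mu_j\,q_{ji}(0,T,\vect{a})$; for $i\nsim j$, $i\neq j$, both sides vanish. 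Dividing $\mu$ by $Z(\vect{a})=\sum_i a_iN_i$ gives the probability vector $\pi(\vect{a})$ of the statement, and summing the detailed balance identities over $i$ (for fixed $j$) yields $\pi(\vect{a})^\ast Q(0,T,\vect{a})=0$, so $\pi(\vect{a})$ is invariant. The identity $Z(\vect{a})=\kappa\sum_i a_i+\sum_i\sum_{j\sim i}a_ia_j$ is merely the expansion of $\sum_i a_iN_i$. Since $G$ is connected and $\vect{a}>0$, the generator is irreducible (as recorded just before the statement), hence its invariant probability measure is unique; therefore $\pi(\vect{a})$ is \emph{the} invariant probability measure and $Q(0,T,\vect{a})$ is reversible with respect to it.

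For the characterization of critical points I would use $f(\vect{a})=\vect{a}^\ast Q(0,T,\vect{a})$, so that $f(\vect{a})=0$ says precisely that the normalized vector $\vect{a}/\sum_k a_k$ is an invariant probability measure, hence by uniqueness equals $\pi(\vect{a})$. Writing this componentwise gives $a_i/\sum_k a_k = a_iN_i/Z(\vect{a})$; cancelling $a_i>0$ yields $N_i(\vect{a})=Z(\vect{a})/\sum_k a_k$, independent of $i$. Conversely, if $N_i(\vect{a})\equiv N$ then $Z(\vect{a})=N\sum_k a_k$, so $\pi_i(\vect{a})=a_iN/(N\sum_k a_k)=a_i/\sum_k a_k$ and thus $f(\vect{a})=0$. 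Plugging the expansion of $Z(\vect{a})$ into $N=Z(\vect{a})/\sum_k a_k$ gives formula (\ref{NConstant}).

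There is no real obstacle here; the only points requiring care are that detailed balance must be checked for the \emph{unnormalized} measure $\mu_i=a_iN_i$ before normalizing, and that the identification of $\pi(\vect{a})$ with the invariant probability measure — and hence the equivalence of $f(\vect{a})=0$ with the constancy of $N_i$ — rests on irreducibility, which is exactly where connectedness of $G$ and positivity of $\vect{a}$ are used. The remainder is bookkeeping with the definitions of $N_i$ and $Z$.
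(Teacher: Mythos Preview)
Your proof is correct and follows essentially the same approach as the paper: produce the measure $\pi_i\propto a_iN_i$, check it is invariant for $Q(0,T,\vect{a})$, and then invoke uniqueness of the invariant probability measure for an irreducible generator to obtain the equivalence $f(\vect{a})=0\Leftrightarrow N_i\equiv Z(\vect{a})/\sum_k a_k$. If anything, your version is slightly tidier, since you verify detailed balance $\mu_i q_{ij}=Ta_ia_j=\mu_j q_{ji}$ directly --- which simultaneously establishes both invariance and the reversibility asserted in the lemma --- whereas the paper's proof checks invariance via $\pi^\ast Q=0$ without isolating the detailed balance identity.
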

\begin{remark}\label{Reinforced1}
The transition rates $q_{ij}(\vect{a})$ are similar to the rates associated with a  family of Markov chains
used in the study of vertex-reinforced random walks, see  \citet{Benaim1,Benaim2} and \citet{Pemantle}, 
and Lemma \ref{ReversibilityPureTransport} is an adaptation
of these results.
 Interestingly, such
vertex-reinforced random walks are approximated by deterministic dynamical systems called
{\it replicator dynamics}, of the form
$$\frac{{\rm d}a_i}{{\rm d}t}= a_i (N_i'(\vect{a})-H'(\vect{a})),$$
where $N_i'(\vect{a})=N_i(\vect{a})-\kappa$ and $H'(\vect{a})=\sum_{i\in\Lambda}a_i N_i'$.
In this setting, the function $H'$ plays the role of a Lyapunov function.
 We will also find a similar Lyapunov function, see Section \ref{s.Convergence}.
\end{remark}

\begin{proof}
Assume, without loss of generality, that $T=1$.
First notice that
\begin{eqnarray*}
\sum_{j\sim i}\pi(\vect{a})_j q_{ji}(0,T,\vect{a})&=&\sum_{j\sim i}\frac{a_j N_j}{Z(\vect{a})}\frac{a_i}{N_j}\\
  &=&\frac{1}{Z(\vect{a})}\sum_{j\sim i}a_i a_j = \frac{a_i}{Z(\vect{a})}\sum_{j\sim i}a_j
  = \frac{a_i (N_i-\kappa)}{Z(\vect{a})}.
\end{eqnarray*}
The identity
$$\pi(\vect{a})_i q_{ii}(0,T,\vect{a})=-\frac{a_i N_i}{Z(\vect{a})}\sum_{j\sim i}\frac{a_j }{N_i}=-\frac{a_i (N_i-\kappa)}{Z(\vect{a})},$$
shows that
$$\sum_{j\sim i}\pi(\vect{a})_j q_{ji}(0,T,\vect{a})+\pi(\vect{a})_i q_{ii}(0,T,\vect{a})=0,$$
so that $\pi(\vect{a})$ is a invariant probability measure for $Q(0,T,{\bf a})$.
$\bf{a}>0$ is a critical point with $f(\vect{a})=0$ if and only if $\frac{\vect{a}}{\sum_{i \in \Lambda}a_i}$ is an invariant measure for $Q(0,T,{\bf a})$.
Because of the unicity of the invariant measure, we obtain
$$N_i(\vect{a})\equiv \frac{Z(\vect{a})}{\sum_{i\in\Lambda }a_i}.$$

\end{proof}

 Let $\Gamma$ be the adjacency matrix of the
graph $G=(\Lambda,E)$, that is, the matrix with entries given by
$\Gamma_{ij}= 1$, when $i\ne j$ and $i\sim j$, and
$\Gamma_{ij}=0$ otherwise. We summarize the above results in the following

\begin{corollary}[Pure Transport Processes]\label{CriticalAdjacency}
Assume that $D=0$ and $T>0$ (no diffusion), and
consider only positive $\vect{a}>0$.
 Then,
\begin{equation}\label{SolAdjacency}
f(\vect{a})=0 \hbox{ if and only if }\Gamma \vect{a} = c(\vect{a}) {\rm\bf 1},\ {\rm\bf 1}=(1,\cdots,1)^*,
\end{equation}
where
\begin{equation}\label{C}
c(\vect{a})=\frac{ \sum_{i \in \Lambda} \sum_{j \sim i}a_i a_j}{\sum_{i\in\Lambda} a_i}
=\frac{\langle\vect{a},\Gamma \vect{a}\rangle}{\langle\vect{a},{\bf 1}\rangle}.
\end{equation}
\end{corollary}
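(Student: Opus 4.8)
The plan is to deduce Corollary \ref{CriticalAdjacency} directly from Lemma \ref{ReversibilityPureTransport}, so that the only work is to translate the condition ``$N_i(\vect{a})$ does not depend on $i$'' into the matrix form $\Gamma\vect{a}=c(\vect{a})\mathbf{1}$. First I would recall that for $D=0$, $T>0$ and $\vect{a}>0$, the generator $Q(0,T,\vect{a})$ is irreducible (as noted in the excerpt) and reversible (by Lemma \ref{ReversibilityPureTransport}), so that the equation $f(\vect{a})=0$ is equivalent, for positive $\vect{a}$, to requiring that $\vect{a}/\langle\vect{a},\mathbf{1}\rangle$ be the unique invariant probability measure, and Lemma \ref{ReversibilityPureTransport} already tells us this is equivalent to $N_i(\vect{a})$ being independent of $i$.

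Next I would simply unpack the definition $N_k(\vect{a})=\kappa+\sum_{j\sim k}a_j$. Observe that $\sum_{j\sim k}a_j=(\Gamma\vect{a})_k$, since $\Gamma_{kj}=1$ exactly when $j\sim k$ (and $j\ne k$). Hence $N_i(\vect{a})\equiv c'$ for some constant $c'$ (independent of $i$) is the same as $\kappa+(\Gamma\vect{a})_i\equiv c'$, i.e. $(\Gamma\vect{a})_i\equiv c'-\kappa=:c(\vect{a})$, which is precisely $\Gamma\vect{a}=c(\vect{a})\mathbf{1}$. This establishes the equivalence in \eqref{SolAdjacency}.

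Finally I would identify the constant. Taking the scalar product of $\Gamma\vect{a}=c(\vect{a})\mathbf{1}$ with $\vect{a}$ gives $\langle\vect{a},\Gamma\vect{a}\rangle=c(\vect{a})\langle\vect{a},\mathbf{1}\rangle$, so $c(\vect{a})=\langle\vect{a},\Gamma\vect{a}\rangle/\langle\vect{a},\mathbf{1}\rangle$, which is well defined since $\vect{a}>0$ forces $\langle\vect{a},\mathbf{1}\rangle>0$. Expanding the bilinear form, $\langle\vect{a},\Gamma\vect{a}\rangle=\sum_{i}\sum_{j\sim i}a_ia_j$, giving the first expression in \eqref{C}. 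Alternatively one can obtain the same formula straight from \eqref{NConstant} in Lemma \ref{ReversibilityPureTransport} by subtracting $\kappa$.

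There is essentially no obstacle here: the corollary is a restatement of Lemma \ref{ReversibilityPureTransport} in matrix language, and the only things to be careful about are (i) invoking irreducibility (hence uniqueness of the invariant measure) to justify passing from $f(\vect{a})=0$ to the invariant-measure condition, which is only valid because $\vect{a}>0$, and (ii) checking that the denominator $\langle\vect{a},\mathbf{1}\rangle$ does not vanish, again guaranteed by positivity of $\vect{a}$. If anything requires a word of care it is making explicit that reversibility is what lets us use the simple criterion \eqref{NConstant} rather than solving $\vect{a}^*Q=0$ directly, but this has already been handled in the lemma.
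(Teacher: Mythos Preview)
Your proposal is correct and matches the paper's approach exactly: the corollary is presented there as a direct restatement of Lemma~\ref{ReversibilityPureTransport} in matrix language (the paper simply writes ``We summarize the above results in the following'' without a separate proof), and your translation of $N_i(\vect{a})\equiv\text{const}$ into $\Gamma\vect{a}=c(\vect{a})\mathbf{1}$ together with the identification of $c(\vect{a})$ via the scalar product is precisely the intended reading.
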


\begin{remark}
 Let $c$ be a constant, and let $\vect{a}$ (if it exists) be such that $\Gamma \vect{a} = c {\bf 1}$ and ${\bf a}\geq 0$. Then ${\bf a}$ is a critical point and 
$c$ is given by (\ref{C}).
\end{remark}

\begin{example}[The one-dimensional cycle]\label{CriticalCircle}
Assume that the $L$ cells are arranged on a cycle. The pure transport process ($D=0$) is reversible, so that
the critical points $\vect{a}>0$  of dynamical system (\ref{eq:jonsson}) are solutions
of linear system (\ref{SolAdjacency}). We illustrate some results given in Section \ref{Circle}.
When $L> 4$ is a multiple of 4, the set of critical points  $\vect{a}\in \R^L$ forms a two dimensional sub-manifold $M_c$ of $\R^L$ given by, when $\rho = 1/L$,
$$
M_c =\{(a_1,a_2,-a_1+2\rho,-a_2+2\rho,a_1,a_2,-a_1+2\rho,-a_2+2\rho,\cdots );\ a_k \in (0,2\rho),\ k=1, 2\}.
$$
When $L>4 $ is not a multiple of 4, $M_c$ is reduced to the uniform configuration $M_c =\{(\rho,\rho,\cdots,\rho)\}$. We will see that the uniform configuration is always unstable, and that the other critical points are unstable when  $\vect{a}>0$. However, the boundary points  are all  stable.
\end{example}

\subsubsection{General transport processes}

\begin{lemma}\label{CharacterizationIrreducible}
Assume that $G$ is connected, and that both $D$ and $T$ are positive. For $\vect{a}>0$, $f(\vect{a})=0$ if and only if there exists a constant $c$ such that
$\vect{a}$ solves the following system of quadratic equations:
\begin{equation}\label{QuadraticSystem}
(a_i -\frac{D}{T})N_i(\vect{a})+ a_i = c\ a_i N_i(\vect{a}),\ i=1,\cdots,\ \Lambda.
\end{equation}
\end{lemma}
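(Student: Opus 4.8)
The plan is to reduce $f(\vect{a})=0$, componentwise, to the statement that a suitable auxiliary function on the graph equals a weighted average of its neighbours at every node, and then to invoke a discrete maximum principle on the finite connected graph $G$; the quadratic system (\ref{QuadraticSystem}) will turn out to be exactly the assertion that this auxiliary function is constant. Throughout, $\vect{a}>0$, $D,T>0$; write $d=D/T$ and $N_i=N_i(\vect{a})=\kappa+\sum_{j\sim i}a_j\ge\kappa>0$. The case $L=1$ has no edges and is trivial, so assume $L\ge 2$; then every node has a neighbour and $\sum_{i\sim j}a_i=N_j-\kappa>0$.

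First I would write out one component of $f$ from the rates (\ref{LaplaceBasic}): since $q_{ij}(D,T,\vect{a})=D+Ta_j/N_i$ for $i\sim j$, the compact form $f_j(\vect{a})=\sum_{i\sim j}\bigl(a_i q_{ij}(D,T,\vect{a})-a_j q_{ji}(D,T,\vect{a})\bigr)$ becomes $f_j(\vect{a})=\sum_{i\sim j}\bigl[D(a_i-a_j)+Ta_ia_j(1/N_i-1/N_j)\bigr]$. The key algebraic observation, checked by expanding both sides, is that each summand factors as $D(a_i-a_j)+Ta_ia_j(1/N_i-1/N_j)=Ta_ia_j(h_i-h_j)$, where I set $h_k:=\frac{1}{N_k}-\frac{d}{a_k}$ (well defined since $\vect{a}>0$ and $N_k>0$). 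Hence $f_j(\vect{a})=Ta_j\sum_{i\sim j}a_i(h_i-h_j)$. Because $T>0$ and $a_j>0$, the system $f(\vect{a})=0$ is thus equivalent to $\sum_{i\sim j}a_i(h_i-h_j)=0$ for every $j$, i.e. to $h_j=\bigl(\sum_{i\sim j}a_ih_i\bigr)\big/\bigl(\sum_{i\sim j}a_i\bigr)$ for every $j$: at each node, $h$ is a convex combination, with strictly positive weights, of the values of $h$ at its neighbours.

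Next I would apply the maximum principle. On a finite connected graph, a function obeying this averaging identity at every vertex is constant: at a node $j^{\ast}$ attaining $\max_k h_k$, equality in a strict convex combination forces $h$ to attain the same maximum at every neighbour of $j^{\ast}$, and connectedness propagates this to all of $\Lambda$. (Equivalently, $h$ is harmonic for the irreducible finite-state generator with rates $a_i$ from $j$ to $i\sim j$, hence constant.) So $f(\vect{a})=0$ holds if and only if $h$ is constant, say $h_i\equiv c-1$ for a constant $c$.

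Finally I would translate "$h$ constant" back into (\ref{QuadraticSystem}): $h_i=c-1$ reads $\frac{1}{N_i}-\frac{d}{a_i}=c-1$, and multiplying through by $a_iN_i>0$ and rearranging gives $(a_i-d)N_i+a_i=c\,a_iN_i$, which is (\ref{QuadraticSystem}) since $d=D/T$. All steps are equivalences, so both implications follow at once. I expect no serious obstacle: the only non-routine point is spotting the auxiliary function $h_k=1/N_k-D/(Ta_k)$ that makes every edge term factor, after which the argument is a standard maximum principle on a finite connected graph. As an independent check of the easy direction, one can instead substitute $a_i/N_i=d+(c-1)a_i$ (equivalent to (\ref{QuadraticSystem})) into $f_j(\vect{a})$: the $(c-1)$-terms cancel and one is left with $f_j(\vect{a})=D\bigl[(\Delta a)_j+a_j\deg(j)-(N_j-\kappa)\bigr]$, which vanishes because $(\Delta a)_j+a_j\deg(j)=\sum_{i\sim j}a_i=N_j-\kappa$, $\Delta$ being the graph Laplacian.
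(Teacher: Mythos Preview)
Your proof is correct and takes a genuinely different route from the paper's. The paper argues via invariant measures: setting $\mu_i=(a_i-D/T)N_i$, it checks by direct computation that $(\vect{\mu}\,Q(0,T,\vect{a}))_i=D\sum_{j\sim i}(a_j-a_i)$, so $f(\vect{a})=0$ becomes $(\vect{\mu}+\vect{a})\,Q(0,T,\vect{a})=0$; irreducibility then forces $\vect{\mu}+\vect{a}$ to be proportional to the invariant measure $\pi(\vect{a})_i\propto a_iN_i$ already computed in Lemma~\ref{ReversibilityPureTransport}, which is exactly (\ref{QuadraticSystem}). You instead factor each edge term through the auxiliary function $h_k=1/N_k-D/(Ta_k)$, reduce $f(\vect{a})=0$ to harmonicity of $h$ for the generator with rates $\tilde q_{ji}=a_i$, and conclude by the maximum principle. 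The two arguments are dual---the paper uses the left kernel (invariant measures) of one irreducible generator, you the right kernel (harmonic functions) of another---and since $(\mu_i+a_i)/(a_iN_i)=1+h_i$, the conclusions ``$\vect{\mu}+\vect{a}\propto\pi$'' and ``$h$ constant'' are literally the same statement. Your route is more self-contained (no need for the explicit $\pi(\vect{a})$ or reversibility), while the paper's fits seamlessly into the Markov-chain framework it has already set up; both extend to the potential case of Lemma~\ref{Coupling} by the obvious modification of $\mu_i$, respectively $h_k$.
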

\begin{proof}
Let $\mu_i = (a_i -D/T)N_i$, $i=1,\cdots,\Lambda$. Then $\vect{\mu} = (\mu_i)_{1\le i\le\Lambda}$ behave
\begin{eqnarray*}
(\mu Q(0,T,\vect{a}))_i &=&\sum_{j\sim i}\mu_j q_{ji}(\vect{a})+\mu_i q_{ii}(\vect{a})\\
                       &=&T \sum_{j\sim i} (a_j -\frac{D}{T})N_j \frac{a_i}{N_j}-T(a_i-\frac{D}{T})N_i \sum_{j\sim i}\frac{a_j}{N_i}\\
                       &=&T \sum_{j\sim i}(a_j -\frac{D}{T})a_i - T (a_i-\frac{D}{T})\sum_{j\sim i}a_j\\
                       &=& T \frac{D}{T}\sum_{j\sim i}(a_j -a_i),
 \end{eqnarray*}
 which gives the diffusion term contained in $f$. Hence, one can rewrite the equation $f(\vect{a})=0$ as
 $$(\vect{\mu} + \vect{a})Q(0,T,\vect{a})=0.$$
 By assumption, $\vect{a} > 0$ so that $Q(0,T,\vect{a})$ is irreducible as a Markov generator, and hence
 has only one invariant probability measure. The linear space composed of invariant measures is one-dimensional,
 so that the measure $\vect{\mu} + \vect{a}$ is proportional to $\pi(\vect{a})$. The result is a consequence of expression
 for $\pi(\vect{a})$ given in 
 (\ref{InvariantMeasureTransport}).
 \end{proof}

 The next paragraph generalizes the diffusive part to model the effect of potentials on the auxin flux.
 \subsubsection{Inclusion of potentials\label{Mechanical}}
 
 As stated in the Introduction, experiments have shown that both mechanical and biochemical processes play a role
 in plant patterning. We here adapt some ideas of \citet{Newell} and \citet{Newell2} to our
 discrete setting. The former considered the discrete model (\ref{eq:jonsson}) by taking a continuous limit,
  resulting in a   p.d.e. describing the time evolution of auxin concentrations, which is coupled to the von Karman equations from elasticity theory. These equations describe the deformations of an elastic shell or plate subject to various loading conditions. Usually, the in-plane stress is described using Airy  functions which are potential for the stress field. Here, we will simply suppose that this potential is given by some
 function $(\phi_i)_{1\le i\le L}$. We also suppose that the auxin flux is directed in part by these potentials and  assume a model of the form
 \begin{equation}\label{potential}
 \frac{{\rm d}a_i(t)}{{\rm d}t}=f_i(\vect{a})+\sum_{j\sim i}(a_j \phi_i -a_i \phi_j),
 \end{equation}
 $i=1,\ldots,L$. 
 We will see in the sequel that the critical points associated to (\ref{eq:jonsson}) exhibit regular geometrical patterns locally, but not necessarily globally. 
  The potential might be defined in such a way to reproduce the patterns obtained when considering mechanical buckling, and
  the model defined by (\ref{potential}) might then lead to more regularly spaced auxin peaks, see Figure \ref{Fig4}.
 
 \begin{lemma}\label{Coupling}
 Assume a model of the form (\ref{potential}), with $D>0$ and $T>0$. Let $\vect{a}>0$. Then
$ f_i(\vect{a})+\sum_{j\sim i}(a_j \phi_i -a_i \phi_j)=0$ if and only if there exists a constant
$c\in\R$ such that
$$(a_i-\frac{D}{T}-\frac{1}{T}\phi_i)N_i(\vect{a}) + a_i= c\ a_i N_i(\vect{a}),\ i=1,\cdots,\Lambda.$$
\end{lemma}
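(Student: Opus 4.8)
The plan is to mimic exactly the proof of Lemma \ref{CharacterizationIrreducible}, absorbing the extra potential term into the vector whose invariance we test. First I would set $\mu_i = \bigl(a_i - \tfrac{D}{T} - \tfrac{1}{T}\phi_i\bigr) N_i(\vect{a})$ and compute $(\vect{\mu}\, Q(0,T,\vect{a}))_i$ using the reversibility structure from Lemma \ref{ReversibilityPureTransport}, i.e. that $q_{ji}(\vect{a}) = a_i/N_j$. The cancellation of the $N_j$ factors is the same as before, and it should give
\[
(\vect{\mu}\, Q(0,T,\vect{a}))_i = T\sum_{j\sim i}\Bigl(a_j - \tfrac{D}{T} - \tfrac{1}{T}\phi_j\Bigr)a_i - T\Bigl(a_i - \tfrac{D}{T} - \tfrac{1}{T}\phi_i\Bigr)\sum_{j\sim i}a_j,
\]
and after expanding, the quadratic $\sum a_i a_j$ terms cancel, leaving exactly $D\sum_{j\sim i}(a_j - a_i) + \sum_{j\sim i}(\phi_j a_i - \phi_i a_j)$, which is precisely the diffusive-plus-potential correction separating $f_i$ from the pure transport generator. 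Hence $f_i(\vect{a}) + \sum_{j\sim i}(a_j\phi_i - a_i\phi_j) = 0$ for all $i$ is equivalent to $(\vect{\mu} + \vect{a})Q(0,T,\vect{a}) = 0$.

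Next, since $\vect{a} > 0$ the generator $Q(0,T,\vect{a})$ is irreducible (as noted in the excerpt), so its space of invariant measures is one-dimensional and spanned by $\pi(\vect{a})$ given in \eqref{InvariantMeasureTransport}. Therefore $(\vect{\mu} + \vect{a})Q(0,T,\vect{a}) = 0$ holds if and only if $\vect{\mu} + \vect{a}$ is proportional to $\pi(\vect{a})$, i.e. $(\mu_i + a_i) = \lambda\, a_i N_i(\vect{a})$ for some constant $\lambda$ and all $i$. Writing out $\mu_i$ and renaming $\lambda = c$ yields
\[
\Bigl(a_i - \tfrac{D}{T} - \tfrac{1}{T}\phi_i\Bigr)N_i(\vect{a}) + a_i = c\, a_i N_i(\vect{a}), \quad i = 1,\dots,\Lambda,
\]
which is the claimed system. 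The only point requiring a line of justification is that $\vect{\mu}$ alone need not be an invariant measure — the argument is about $\vect{\mu} + \vect{a}$, and the factorization $f = \vect{a}^*Q(D,T,\vect{a})$ combined with the computation of $\vect{\mu}\,Q(0,T,\vect{a})$ makes the bookkeeping transparent.

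The main obstacle, such as it is, is purely the algebraic verification that the extra term $\sum_{j\sim i}(\phi_j a_i - \phi_i a_j)$ emerges correctly from $\vect{\mu}\,Q(0,T,\vect{a})$ with the right sign; one must be careful that the potential enters $\mu_i$ with coefficient $-1/T$ so that multiplying through by $T$ in $\vect{\mu}\,Q$ produces exactly $\phi$ (not $T\phi$), matching the normalization in \eqref{potential} where the potential term has coefficient $1$. Beyond that sign-and-scaling check, the proof is a routine adaptation of Lemma \ref{CharacterizationIrreducible}, so I would keep it to three or four lines, referencing that lemma for the shared computation rather than repeating it.
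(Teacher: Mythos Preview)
Your approach is exactly the paper's: it states that the proof of Lemma~\ref{Coupling} is identical to that of Lemma~\ref{CharacterizationIrreducible}, and your choice $\mu_i = (a_i - D/T - \phi_i/T)N_i$ together with the irreducibility argument reproduces that proof verbatim. One small slip: the intermediate expression you display should read $D\sum_{j\sim i}(a_j-a_i) + \sum_{j\sim i}(a_j\phi_i - a_i\phi_j)$ (your signs on the $\phi$-term are flipped), after which the identification with $(\vect{\mu}+\vect{a})Q(0,T,\vect{a})=0$ goes through as you state.
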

 
The proof of Lemma \ref{Coupling} is identical to the proof of Lemma \ref{CharacterizationIrreducible}.

\subsection{The reducible case \label{Reducible}}
We can adapt the previous notions to the case $D=0$ and reducible transition kernel $Q(0,T,\vect{a})$, that is when some $a_i$
vanish. In this case, there is a pair of nodes $i$ and $j$ such that
$$\prod_{k=1}^m q_{i_{k-1}i_k}(\vect{a})=0,$$
for all paths $\gamma:\ i_0=i\to i_1\to\cdots\to i_m =j$ taking
$i$ to $j$ in the graph $G=(\Lambda,E)$.

Example \ref{CriticalCircle} shows that the critical points associated with (\ref{eq:jonsson}) on a circle form a manifold when $L$
is a multiple of 4. We also assert that the boundary points obtained from $M_c$ by setting  $a_1=0$ are stable. We will thus consider subsets 
$I\subset \{1,\cdots,L\}$ corresponding to the sites $i$ where $a_i =0$.
We will denote by $\vect{a}\vert_I$ the restriction of any $\vect{a}$ to $I$. The same
notations apply for generators and adjacency matrices, where
one conserves only the transitions rates $q_{ij}(\vect{a})$
such that $i$, $j \in\Lambda\setminus I$.
According to Lemma \ref{ReversibilityPureTransport}, these sub-transition kernels
are reversible for $\vect{a}$ such that $a\vert_{\Lambda\setminus I} > 0$.
If one removes the nodes $i\in I$, the graphs decomposes as a product
of connected components $\gamma$, which 
form the sub-graph of $G$ induced by the nodes of $J=\Lambda\setminus I$.
The special form of the vector field associated
with (\ref{eq:jonsson}) ensures however that the set of critical values such
that $\vect{a}\vert_I =0$, $I\subset \{1,\cdots,L\}$, can be obtained by considering 
a family of transitions functions $Q_\gamma(0,T,\vect{a}\vert_\gamma)$.
For each component $\gamma$, Corollary \ref{CriticalAdjacency} shows that 
the related critical points are obtained by solving linear systems of the form
\begin{equation}\label{LocalSolution}
\Gamma_\gamma \vect{a}\vert_\gamma = c_\gamma {\rm\bf 1}\vert_\gamma,
\end{equation}
where $\Gamma_\gamma$ is the adjacency matrix of the sub-graph $\gamma$, and the $c_\gamma$ are normalization constants chosen in such a way
that $\sum_i a_i = \rho L$. 
The set of critical points is then obtained by
taking the direct product of the sets of critical values associated with
the sub-graphs $\gamma$.

\section{Asymptotic properties of the auxin flux for pure transport processes\label{s.Convergence} }
We consider the convergence of the dynamical system (\ref{eq:jonsson})
when $D=0$  using the method of Lyapunov functions. Suppose without loss
of generality that $T=1$.
We look for a function
$H(\vect{a})$ such that 
$$\frac{{\rm d}H(\vect{a}(t))}{{\rm d}t} =\langle\nabla H(\vect{a}(t)),\frac{{\rm d}\vect{a}(t)}{{\rm d}t}\rangle \le 0,\ \forall t \geq 0.$$
If furthermore this function is bounded, then $H(\vect{a}(t))$ converges, and we can in this way
get useful information concerning the convergence (e.g. toward the set of critical points) of
$\vect{a}(t)$ solution of (\ref{eq:jonsson}).

\begin{lemma}\label{Lyapunov}
Assume that $D=0$ and set $T=1$.
Let
\begin{equation}\label{LyapounovFunction}
H(\vect{a}) =-\frac{1}{2}\sum_{k\in\Lambda}a_k (N_k(\vect{a})+\kappa)=-\kappa \sum_{k\in\Lambda}a_k - \frac{1}{2} \sum_k\sum_{j\sim k}a_j a_k,
\end{equation}
where the functions $N_k(\vect{a})$ have been defined in (\ref{N}).
Let ${\vect a}(t)$ be a solution of the
o.d.e. (\ref{eq:jonsson}) such that   $a_i(0)\ge 0$.
Then
\begin{equation}\label{LyapounovFormula}
\frac{{\rm d}H(\vect{a}(t))}{{\rm d}t} =-\frac{1}{2}\sum_{k \in \Lambda}\sum_{j\sim k}q_{kj}q_{jk}(N_k-N_j)^2 \le  0, \forall t\geq 0.
\end{equation}

\end{lemma}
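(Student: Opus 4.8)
The plan is a direct computation via the chain rule; the one trick is a symmetrization of the resulting double sum. First I would compute the gradient of $H$. Writing $H(\vect{a})=-\kappa\langle{\bf 1},\vect{a}\rangle-\tfrac12\langle\vect{a},\Gamma\vect{a}\rangle$ and using that the adjacency matrix $\Gamma$ is symmetric, one gets
$$\frac{\partial H}{\partial a_i}=-\kappa-\sum_{j\sim i}a_j=-N_i(\vect{a}).$$
Since $H$ is a polynomial, hence $C^\infty$, the chain rule applies and for any solution $\vect{a}(t)$ of (\ref{eq:jonsson}) with $D=0$, $T=1$,
$$\frac{\mathrm{d}H(\vect{a}(t))}{\mathrm{d}t}=\sum_{i\in\Lambda}\frac{\partial H}{\partial a_i}\,\dot a_i=-\sum_{i\in\Lambda}N_i(\vect{a})\,f_i(\vect{a}).$$

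Next I would substitute the explicit form of $f_i$. With $D=0$, $T=1$ one has $q_{ij}(\vect{a})=a_j/N_i(\vect{a})$ and $q_{ji}(\vect{a})=a_i/N_j(\vect{a})$, so that $f_i(\vect{a})=\sum_{j\sim i}\big(a_jq_{ji}-a_iq_{ij}\big)=\sum_{j\sim i}a_ia_j\big(1/N_j-1/N_i\big)$, and therefore
$$\frac{\mathrm{d}H}{\mathrm{d}t}=-\sum_{i\in\Lambda}\sum_{j\sim i}a_ia_j\Big(\frac{N_i}{N_j}-1\Big),$$
the right-hand side being a sum over ordered neighbour pairs $(i,j)$. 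The key step is then to average the summand over the swap $i\leftrightarrow j$, which is legitimate precisely because the sum runs over all ordered pairs:
$$\frac{\mathrm{d}H}{\mathrm{d}t}=-\frac12\sum_{i\in\Lambda}\sum_{j\sim i}a_ia_j\Big(\frac{N_i}{N_j}+\frac{N_j}{N_i}-2\Big)=-\frac12\sum_{i\in\Lambda}\sum_{j\sim i}\frac{a_ia_j}{N_iN_j}\,(N_i-N_j)^2.$$
Finally, recognizing $a_ia_j/(N_iN_j)=q_{ij}(\vect{a})q_{ji}(\vect{a})$ gives exactly formula (\ref{LyapounovFormula}); and since $\vect{a}(t)\ge 0$ for all $t\ge 0$ by Proposition \ref{Invariance}, each factor $q_{ij}q_{ji}\ge 0$ and $(N_i-N_j)^2\ge 0$, whence $\mathrm{d}H/\mathrm{d}t\le 0$.

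There is no real obstacle here: the argument is a one-page calculation. The only points that merit a word of care are (i) that no denominator ever vanishes, since $N_i(\vect{a})=\kappa+\sum_{j\sim i}a_j\ge\kappa>0$ even when some $a_i=0$, so the identities above hold on all of $\R^L_{\ge0}$; (ii) the rewriting of $\sum_i\sum_{j\sim i}$ as a sum over ordered neighbour pairs, which makes the symmetrization and the completion of the square transparent; and (iii) one may additionally note, for later use in Proposition \ref{ConvergenceAuxin}, that $H$ is bounded along orbits because $0\le a_i(t)$ and $\sum_i a_i(t)\equiv\rho L$ force $0\le\sum_k\sum_{j\sim k}a_ja_k\le(\rho L)^2$.
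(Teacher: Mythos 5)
Your proof is correct and follows essentially the same route as the paper's: compute $\partial H/\partial a_i=-N_i$, apply the chain rule, and symmetrize the double sum over ordered neighbour pairs to complete the square $(N_i-N_j)^2$, identifying $a_ia_j/(N_iN_j)=q_{ij}q_{ji}$. The only (immaterial) difference is that you factor out $a_ia_j/(N_iN_j)$ before symmetrizing while the paper does so first and then averages $N_k(N_k-N_j)+N_j(N_j-N_k)$; your added remarks on non-vanishing denominators and boundedness of $H$ are correct bonuses.
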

Notice that
\begin{equation*}
\frac{\partial H}{\partial a_k}(\vect{a})=- N_k(\vect{a}). 
 \end{equation*}
since the function $N_k =N_k(\vect{a})=\kappa+\sum_{j\sim k}a_j$ does not depend on the variable $a_k$.

\begin{proof}
One can write
\begin{eqnarray*}
\frac{{\rm d}H(\vect{a}(t))}{{\rm d}t} &=&-\sum_{k\in\Lambda}N_k \sum_{j\sim k} (a_j \frac{a_k}{N_j}-a_k\frac{a_j}{N_k})
        =- \sum_{k\in\Lambda}N_k\sum_{j\sim k}\frac{a_j}{N_k}\frac{a_k}{N_j}(N_k-N_j)\\
        &= & -\sum_{k\in\Lambda}N_k\sum_{j\sim k}q_{kj}q_{jk}(N_k-N_j) \\
         &=&-\frac{1}{2}\sum_{k \in \Lambda}\sum_{j\sim k}q_{kj}q_{jk}\Big(N_k(N_k-N_j)+N_j(N_j-N_k)\Big)\\
        &=&-\frac{1}{2}\sum_{k \in \Lambda}\sum_{j\sim k}q_{kj}q_{jk}(N_k-N_j)^2.
 \end{eqnarray*}
By Proposition \ref{Invariance}, $a_i(0)\ge 0$, $\forall i$, implies that $a_i(t)\ge 0$, $\forall i$, $\forall t > 0$,
so that $q_{kj}\ge 0$ and $q_{jk}\ge 0$, $\forall k\sim j$, and $\forall t > 0$, proving the  assertion.
 \end{proof}

To prove the convergence of the auxin flux, we use a Theorem   of Lyapunov- LaSalle (see \citet{LaSalle}).
Introduce the notation

$$\dot{H}(\vect{x})= \sum_{i=1}^L\frac{\partial H}{\partial x_i}f_i(\vect{x})=-\frac{1}{2}\sum_{k \in \Lambda}\sum_{j\sim k}q_{kj}q_{jk}(N_k-N_j)^2.$$
Consider the sets
$$
\Omega = \{\vect{x}\in[0, 2\rho]^L \, \mid \, \sum_i x_i=\rho L\}
\text{ and }
E_{\Omega} =\{\vect{x}\in\Omega \, \mid \, \dot{H}(\vect{x})=0\}.$$
\begin{lemma}\label{CriticalInvariant}
The set $E_\Omega$ is the set of critical points.
\end{lemma}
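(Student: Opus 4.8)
The claim is that $E_\Omega$, the set of points in $\Omega$ where $\dot H$ vanishes, coincides exactly with the set of critical points of (\ref{eq:jonsson}) lying in $\Omega$. One inclusion is immediate: if $f(\vect{a})=0$ then $\dot H(\vect{a}) = \langle \nabla H(\vect{a}), f(\vect{a})\rangle = 0$, so every critical point in $\Omega$ belongs to $E_\Omega$. The substance is the reverse inclusion, and for this I would exploit the sum-of-squares form established in Lemma \ref{Lyapunov}: $\dot H(\vect{a}) = -\tfrac12 \sum_{k}\sum_{j\sim k} q_{kj}q_{jk}(N_k - N_j)^2$. Since each summand is non-negative, $\dot H(\vect{a}) = 0$ forces $q_{kj}q_{jk}(N_k - N_j)^2 = 0$ for every edge $k\sim j$.

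\textbf{Key steps.} First, fix $\vect{a}\in E_\Omega$ and analyse the vanishing condition edge by edge. For an edge $k\sim j$, either $N_k(\vect{a}) = N_j(\vect{a})$, or one of $q_{kj}, q_{jk}$ vanishes; since $q_{kj} = a_j/N_k$ and $q_{jk} = a_k/N_j$ with $N_k, N_j \geq \kappa > 0$, the latter means $a_j = 0$ or $a_k = 0$. Next, I would decompose $\Lambda$ into $I = \{i : a_i = 0\}$ and $J = \Lambda\setminus I$. On $J$, consider any connected component $\gamma$ of the induced subgraph: for every edge of $\gamma$, both endpoints carry positive mass, so the vanishing condition forces $N_k = N_j$ across that edge; by connectedness of $\gamma$, $N_k(\vect{a})$ is constant over all $k \in \gamma$ — call it $c_\gamma + \kappa$. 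By Lemma \ref{ReversibilityPureTransport} (applied to the restricted reversible kernel $Q_\gamma$, which is legitimate since $\vect{a}\vert_\gamma > 0$), constancy of $N_k$ on $\gamma$ is exactly the condition for $\vect{a}\vert_\gamma$ to be a critical point of the flow restricted to $\gamma$; equivalently $\Gamma_\gamma \vect{a}\vert_\gamma = c_\gamma \mathbf{1}\vert_\gamma$ as in (\ref{LocalSolution}). Finally, I would assemble: using the decomposition discussed in Section \ref{Reducible}, a configuration with $a_i = 0$ on $I$ is a critical point of the full system (\ref{eq:jonsson}) if and only if its restriction to each component $\gamma$ of $J$ is critical for $Q_\gamma$ — and one must also check that the $i\in I$ components of $f(\vect{a})$ vanish, which for $D = 0$ they do automatically since every term in $f_i$ carries a factor $a_i$ when $a_i = 0$ (more precisely $f_i(\vect{a}) = \sum_{k\sim i}(a_k q_{ki} - a_i q_{ik})$ and $q_{ki} = a_i/N_k = 0$ while the second term has the explicit $a_i$ factor). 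Hence $\vect{a}$ is a critical point, giving $E_\Omega \subseteq \{\text{critical points}\}\cap\Omega$.

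\textbf{Main obstacle.} The delicate point is the bookkeeping at the boundary between $I$ and $J$: an edge joining $i\in I$ to $j\in J$ automatically satisfies the vanishing condition (because $a_i = 0$ kills $q_{ki}$), so such edges impose \emph{no} constraint relating $N_i$ and $N_j$ — and indeed $N_i$ for $i\in I$ need not equal the common value $c_\gamma+\kappa$ of its neighbouring component. I need to be careful that this does not obstruct $f_i(\vect{a}) = 0$ for $i\in I$; as noted above it does not, precisely because we are in the pure transport case $D=0$ and every transition out of or into an empty site $i$ is rate-zero or mass-zero. The second thing to verify cleanly is that the earlier reduction (Section \ref{Reducible}) — "the set of critical values with $\vect{a}\vert_I = 0$ is the direct product over components $\gamma$ of the sets of solutions of (\ref{LocalSolution})" — is genuinely an \emph{iff} statement, so that constancy of $N_\cdot$ on each $\gamma$ is not merely necessary but sufficient for $f(\vect{a}) = 0$. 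Granting that reduction, which the excerpt asserts, the proof is essentially the edge-by-edge argument plus connectedness, and no hard estimate is involved.
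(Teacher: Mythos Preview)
Your proposal is correct and follows essentially the same route as the paper: analyse the sum-of-squares expression for $\dot H$ edge by edge, deduce that on any edge either an endpoint vanishes or $N_j=N_k$, set $I=\{i:a_i=0\}$, propagate the equality $N_j=N_k$ along each connected component $\gamma$ of $\Lambda\setminus I$ to get $\Gamma_\gamma \vect{a}\vert_\gamma = c_\gamma \mathbf{1}\vert_\gamma$, and then invoke Corollary \ref{CriticalAdjacency} together with the decomposition of Section \ref{Reducible}. Your write-up is in fact more explicit than the paper's on the boundary bookkeeping (checking directly that $f_i(\vect{a})=0$ for $i\in I$ when $D=0$), which the paper folds into the blanket reference to Section \ref{Reducible}.
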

\begin{proof}
Let $x\in\Omega$. Then
 $\dot{H}(\vect{x})=0$ if and only if for all pairs $j\sim k$, either $x_j =0$, $x_k =0$ or
 $N_j = N_k$. Let $I_x :=\{i\in\Lambda;\ x_i =0\}$. Then $\dot{H}(\vect{x})=0$
 if and only if, for all pairs of neighbours $j\sim k$ such that $j\in\Lambda\setminus I_x$
 and $k\in\Lambda\setminus I_x$, one has that $N_j =N_k $. Let $\gamma$ be the connected
 component of the graph containing this pair (see Section \ref{Reducible}), with
 $N_j = N_k = c_\gamma$, for some positive constant $c_\gamma$. Then, $N_i\equiv c_\gamma$,
 $\forall i\in \gamma$. One then gets that $\dot{H}(\vect{x})=0$ if and only if
 the function $N$ is constant on the connected components $\gamma$ associated with $I_x$.
  Hence, for each such component, one
 has that $\Gamma_\gamma x\vert_\gamma = c_\gamma {\bf 1}\vert_\gamma$. The results is a consequence
 of Corollary \ref{CriticalAdjacency} and of the results of Section \ref{Reducible}.
\end{proof}

Let $M_{\Omega}$ be the largest invariant subset of $E_{\Omega}$. As $E_{\Omega}$ contains only the critical points of $f$, $E_{\Omega}$ is invariant. Hence, 
$M_{\Omega}=E_{\Omega}$.
\begin{proposition}\label{ConvergenceAuxin}
Let $\vect{a}(t)$ be the unique solution of the o.d.e. (\ref{eq:jonsson}) with
$\vect{a}(0)\in\Omega$. Then $\vect{a}(t)\in\Omega$, $\forall t > 0$ and $\vect{a}(t)$ converges
to $M_\Omega$ as $t\to\infty$.
\end{proposition}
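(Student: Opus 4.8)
The plan is to invoke the Lyapunov--LaSalle invariance principle in the form stated in \citet{LaSalle}, for which we must verify a handful of hypotheses on the set $\Omega$, the vector field $f$, and the function $H$. First I would check that $\Omega=\{\vect{x}\in[0,2\rho]^L\mid\sum_i x_i=\rho L\}$ is a compact, positively invariant set for the flow of (\ref{eq:jonsson}) with $D=0$: compactness is clear since $\Omega$ is closed and bounded, positive invariance of the simplex $\{\vect{x}\geq 0,\ \sum_i x_i=\rho L\}$ follows from Proposition \ref{Invariance} (non-negativity and conservation of total mass), and the upper bound $a_i(t)\leq 2\rho$ follows because $\sum_i a_i(t)=\rho L$ forces each coordinate to lie in $[0,\rho L]$; in fact the sharper bound $[0,2\rho]$ is not needed for the argument and any invariant box of the form $[0,M]^L\cap\{\sum x_i=\rho L\}$ would do, but I will keep $[0,2\rho]$ to match the notation already fixed. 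I would also note that $f$ is locally Lipschitz (indeed smooth, since the denominators $\kappa+\sum_{j\sim k}a_j\geq\kappa>0$ never vanish on $\R^L_{\geq 0}$), so existence, uniqueness and continuation of solutions on $[0,\infty)$ are guaranteed by Proposition \ref{Invariance}.

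Next I would assemble the ingredients coming from Lemma \ref{Lyapunov}: the function $H$ of (\ref{LyapounovFunction}) is continuous (polynomial) on the compact set $\Omega$, hence bounded there, and its derivative along orbits satisfies $\dot H(\vect{x})=-\tfrac12\sum_{k}\sum_{j\sim k}q_{kj}q_{jk}(N_k-N_j)^2\leq 0$ on $\Omega$. This is exactly the hypothesis needed to apply the LaSalle invariance principle: every solution starting in $\Omega$ stays in $\Omega$ (by the first paragraph) and approaches the largest invariant subset $M_\Omega$ of $E_\Omega=\{\vect{x}\in\Omega\mid\dot H(\vect{x})=0\}$ as $t\to\infty$. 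At this point the proof is essentially done modulo identifying $M_\Omega$.

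To finish, I would invoke Lemma \ref{CriticalInvariant}, which identifies $E_\Omega$ with the set of critical points of $f$. Since every point of $E_\Omega$ is an equilibrium, the singleton orbit through each such point is contained in $E_\Omega$, so $E_\Omega$ is itself an invariant set; being the \emph{largest} invariant subset of itself, we get $M_\Omega=E_\Omega$, as already remarked just before the statement. Therefore $\vect{a}(t)\to M_\Omega=E_\Omega$, the set of critical points, which is the assertion.

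The only genuine subtlety — and the step I would be most careful about — is the verification that $\Omega$ (as opposed to merely the simplex) is positively invariant, i.e.\ that the coordinate upper bound is respected; this is why one works on $[0,2\rho]^L$ rather than all of $\R^L$. As noted, it is immediate from mass conservation, but it must be stated because LaSalle's theorem requires a compact positively invariant set. Everything else is bookkeeping: continuity/boundedness of $H$ on the compact $\Omega$, the sign of $\dot H$ (Lemma \ref{Lyapunov}), and the characterization of $E_\Omega$ (Lemma \ref{CriticalInvariant}). No new estimate is required beyond what the cited lemmas already provide.
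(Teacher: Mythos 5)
Your proof is correct and follows essentially the same route as the paper's: compactness and positive invariance of the domain via Proposition \ref{Invariance}, the sign condition $\dot H\le 0$ from Lemma \ref{Lyapunov}, an appeal to LaSalle's invariance principle, and the identification $M_\Omega=E_\Omega$ supplied by Lemma \ref{CriticalInvariant}. One small slip worth noting: mass conservation only gives $a_i(t)\in[0,\rho L]$, not $a_i(t)\le 2\rho$, so the box $[0,2\rho]^L$ is not obviously invariant --- but, as you yourself observe, this is harmless since the simplex $\{\vect{x}\ge 0,\ \sum_i x_i=\rho L\}$ is already a compact positively invariant set on which the LaSalle argument goes through unchanged.
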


\begin{proof}
Proposition \ref{Invariance} shows that the compact set  $\Omega$ is invariant.
 The continuously differentiable function $H$ is such that 
$\dot{H}(\vect{x})\leq 0$, $\forall x \in\Omega$. The results then follows from
a result of \citet{LaSalle}.
\end{proof}

\begin{corollary}\label{limit point}
 Every limit point of a trajectory $\bf{a}(t)$ is a critical point i.e. if for $t_n \nearrow \infty$, $a(t_n)\rightarrow a_{\infty}$ then $a_{\infty}\in M_{\Omega}$.
\end{corollary}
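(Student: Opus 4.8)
The plan is to deduce Corollary \ref{limit point} directly from Proposition \ref{ConvergenceAuxin} together with the identification of $M_\Omega$ made just before that proposition. First I would recall that, by Proposition \ref{Invariance}, the initial condition $\vect{a}(0)\in\Omega$ forces the whole trajectory to stay in the compact, invariant set $\Omega$; hence any sequence $t_n\nearrow\infty$ has, by compactness, a convergent subsequence, and in fact the statement already assumes we are given a convergent sequence $\vect{a}(t_n)\to\vect{a}_\infty$, so $\vect{a}_\infty\in\Omega$.

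The key step is then to invoke the convergence $\vect{a}(t)\to M_\Omega$ as $t\to\infty$ established in Proposition \ref{ConvergenceAuxin}: convergence of the trajectory to the set $M_\Omega$ means precisely that $\mathrm{dist}(\vect{a}(t),M_\Omega)\to 0$, so passing to the limit along $t_n$ gives $\mathrm{dist}(\vect{a}_\infty,M_\Omega)=0$. Since $M_\Omega=E_\Omega$ is closed (being the zero set of the continuous function $\dot H$ intersected with the closed set $\Omega$), this yields $\vect{a}_\infty\in M_\Omega$. Finally, by Lemma \ref{CriticalInvariant}, $E_\Omega=M_\Omega$ is exactly the set of critical points of $f$, so $\vect{a}_\infty$ is a critical point, which is the assertion.

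There is essentially no obstacle here: the corollary is a routine restatement of Proposition \ref{ConvergenceAuxin} combined with Lemma \ref{CriticalInvariant}. The only point requiring a word of care is the closedness of $M_\Omega$, needed so that a limit of points whose distance to $M_\Omega$ tends to zero actually lies in $M_\Omega$; this follows because $\dot H$ is continuous on the compact set $\Omega$, so $E_\Omega=\dot H^{-1}(\{0\})\cap\Omega$ is compact, and $M_\Omega=E_\Omega$ by the remark preceding Proposition \ref{ConvergenceAuxin}.
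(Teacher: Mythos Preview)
Your argument is correct and follows the same approach as the paper: both deduce the result from Proposition~\ref{ConvergenceAuxin} by observing that $M_\Omega=E_\Omega$ is closed, so that $\mathrm{dist}(\vect{a}(t),M_\Omega)\to 0$ forces any limit point $\vect{a}_\infty$ to lie in $M_\Omega$. Your write-up is in fact more careful than the paper's, which is quite terse; in particular you make explicit why $M_\Omega$ is closed and invoke Lemma~\ref{CriticalInvariant} to identify it with the set of critical points.
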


\begin{proof}
 If $a_{\infty}\not\in \Omega$, as $E_{\Omega}=M_{\Omega}$ is a closed set then $d()>0$. It's a contradiction with the proposition \ref{ConvergenceAuxin}.
\end{proof}

\begin{remark}[Global minimizers of $H$]\label{Global}
The literature contains results on
the set $\mu(G)$ of minimizers of $H$
when $\sum_{i\in \Lambda}a_i = 1$.
 The authors of \citep{Motzkin}
proved that $\max_{\vect{a}}\langle\vect{a},\Gamma \vect{a}\rangle= (\omega(G)-1)/\omega(G)$, where
$\omega(G)$ is the clique number of $G$, that is the order of the largest
complete sub-graph of $G$. Moreover, they obtained that the absolute minimum 
of $H$ is achieved at an interior point of the unit simplex if and only if $G$
is a complete multipartite graph. Various results were then obtained in \citep{Waller}.
where for example it is proved that $\mu(G)$ is a simplicial complex, having an automorphism
group similar to that of $G$. In some sense, $\mu(G)$ mirrors some of the geometry
of the graph $G$. 
\end{remark}

\begin{proposition}
If $D=0$, then system \eqref{Diff-ai} does not admit non-constant periodic solutions.
\end{proposition}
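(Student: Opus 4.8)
The plan is to exploit the Lyapunov function $H$ from Lemma~\ref{Lyapunov}, which is available precisely because $D=0$. The key observation is that along any solution $\vect{a}(t)$ of \eqref{eq:jonsson} (equivalently \eqref{Diff-ai} with $D=0$) we have
$$\frac{{\rm d}H(\vect{a}(t))}{{\rm d}t}=-\frac{1}{2}\sum_{k\in\Lambda}\sum_{j\sim k}q_{kj}q_{jk}(N_k-N_j)^2\le 0,$$
so $t\mapsto H(\vect{a}(t))$ is non-increasing. Suppose, for contradiction, that $\vect{a}(t)$ is a non-constant periodic solution of period $\tau>0$. Then $H(\vect{a}(\tau))=H(\vect{a}(0))$, and since $H$ is monotone non-increasing on $[0,\tau]$ this forces $H(\vect{a}(t))$ to be constant on $[0,\tau]$, hence on all of $\R_{\ge 0}$ by periodicity. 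Consequently ${\rm d}H(\vect{a}(t))/{\rm d}t\equiv 0$.

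Next I would translate this into a statement about the orbit. Since each summand $q_{kj}q_{jk}(N_k-N_j)^2$ is non-negative, the vanishing of the sum means that for every $t$ and every edge $j\sim k$, either $q_{kj}(\vect{a}(t))=0$, $q_{jk}(\vect{a}(t))=0$, or $N_k(\vect{a}(t))=N_j(\vect{a}(t))$. By Lemma~\ref{CriticalInvariant}, this is exactly the condition $\vect{a}(t)\in E_\Omega$, the set of critical points of $f$. (More directly: running through the computation in the proof of Lemma~\ref{Lyapunov} backwards shows $\dot H(\vect{a}(t))=0$ iff $N$ is constant on each connected component of the subgraph induced by $\{i:a_i(t)>0\}$, which by Corollary~\ref{CriticalAdjacency} and Section~\ref{Reducible} characterizes the critical points.) Therefore every point of the periodic orbit is a critical point, i.e.\ $f(\vect{a}(t))=0$ for all $t$.

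But then $\dot{\vect a}(t)=f(\vect{a}(t))=0$ for all $t$, so $\vect{a}(t)$ is constant, contradicting the assumption that the periodic solution is non-constant. This completes the argument. The only slightly delicate point is the step asserting that a monotone function which returns to its initial value on one period must be constant there; this is immediate from monotonicity and continuity, so there is really no serious obstacle. One should just be careful that Proposition~\ref{Invariance} guarantees the periodic orbit stays in $\R^L_{\ge 0}$ (indeed in the compact invariant set $\Omega$ once normalized), so that $H$ and its derivative formula \eqref{LyapounovFormula} apply along the whole orbit, and that the non-negativity $q_{kj},q_{jk}\ge 0$ used in the sign of $\dot H$ holds throughout.
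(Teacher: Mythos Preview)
Your proof is correct and reaches the same conclusion as the paper by the same final mechanism (every point on the orbit is a critical point, hence $f\equiv 0$ along the orbit, hence the solution is constant), but the route is slightly different. The paper argues that every point of a periodic orbit is a limit point of the trajectory and then invokes Corollary~\ref{limit point} (which rests on Proposition~\ref{ConvergenceAuxin} and LaSalle's invariance principle) to conclude that each such point is critical; uniqueness of solutions then yields the contradiction. You bypass LaSalle entirely: you use only the monotonicity of $H$ from Lemma~\ref{Lyapunov} together with periodicity to force $\dot H\equiv 0$ along the orbit, and then appeal directly to Lemma~\ref{CriticalInvariant}. Your argument is a little more self-contained, since it does not require the convergence statement of Proposition~\ref{ConvergenceAuxin}; the paper's version is terser because that machinery has already been set up. Either way the essential content is the same Lyapunov identity \eqref{LyapounovFormula} and the characterization $E_\Omega=\{\text{critical points}\}$.
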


\begin{proof}
Every point of a periodic solution is a limit point and, according to our preceding results (corollary \ref{limit point}), it is a critical point. Unicity of a solution provides a contradiction.
\end{proof}

\begin{proposition}
If $D=0$, then the set of critical points of system \eqref{Diff-ai} is non-countable.
\end{proposition}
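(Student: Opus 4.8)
The plan is to produce an explicit uncountable family of critical points using the reducible case of Section~\ref{Reducible}. Since the graph $G$ is connected with $L\ge 2$, pick any edge $i_0\sim j_0$ and set $I=\Lambda\setminus\{i_0,j_0\}$. The subgraph induced by $J=\{i_0,j_0\}$ is the single edge $\gamma$, with adjacency matrix $\Gamma_\gamma=\left(\begin{smallmatrix}0&1\\1&0\end{smallmatrix}\right)$, and the local linear system \eqref{LocalSolution} reads $a_{j_0}=c_\gamma=a_{i_0}$. Hence, for every parameter $t>0$, the configuration $\vect{a}^{(t)}$ defined by $a^{(t)}_{i_0}=a^{(t)}_{j_0}=t$ and $a^{(t)}_k=0$ for $k\in I$ is, by the discussion following \eqref{LocalSolution}, a critical point of \eqref{Diff-ai} for $D=0$.

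First I would, if desired, verify this directly rather than by citation: with $\vect{a}=\vect{a}^{(t)}$ one has $N_{i_0}=N_{j_0}=\kappa+t$ (the other neighbours of $i_0,j_0$ carry no auxin), while for any $k\in I$ the flux $a_k q_{k\cdot}(\vect{a})$ and the flux $a_k q_{\cdot k}(\vect{a})$ both vanish because $a_k=0$; and the net flux between $i_0$ and $j_0$ is $a_{i_0}a_{j_0}\bigl(\tfrac1{N_{j_0}}-\tfrac1{N_{i_0}}\bigr)=0$. Thus $f_i(\vect{a}^{(t)})=0$ for all $i$. Then I would conclude: the assignment $t\mapsto\vect{a}^{(t)}$ from $(0,+\infty)$ into the set of critical points of \eqref{Diff-ai} is injective, and $(0,+\infty)$ is non-countable, so the set of critical points is non-countable. (One may equally use an isolated single node $i_0$, i.e. $I=\Lambda\setminus\{i_0\}$: then $\Gamma_\gamma$ is the $1\times1$ zero matrix, \eqref{LocalSolution} becomes $0=c_\gamma$, and $a_{i_0}=t$ is free.)

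There is essentially no obstacle; the only subtlety is the precise meaning of "critical point," namely $f(\vect{a}^{\ast})=0$ on $\R_{\ge0}^L$ with the total mass $\sum_i a_i$ unconstrained — which is the sense in which it is used throughout the paper. If instead one fixes the conservation level $\sum_i a_i=\rho L$ and the box $[0,2\rho]^L$, the one-parameter ray above collapses to a single point, and one should argue on a connected component $\gamma$ that is a path or cycle of length at least $4$ (available for the graphs of interest, e.g.\ the cycle and the two-dimensional grid): exactly as in Example~\ref{CriticalCircle}, the solution set of $\Gamma_\gamma\vect{a}\vert_\gamma=c_\gamma{\bf 1}\vert_\gamma$ is then a genuine continuum, and intersecting it with the mass constraint still leaves a non-degenerate interval of critical points, which is non-countable.
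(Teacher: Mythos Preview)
Your argument is correct: the explicit family $\vect{a}^{(t)}$ (supported on a single edge, or on a single node) does consist of critical points of \eqref{Diff-ai} when $D=0$, and the map $t\mapsto\vect{a}^{(t)}$ is injective on $(0,+\infty)$, so the set of critical points is uncountable. Your direct verification that $f(\vect{a}^{(t)})=0$ is clean and needs nothing beyond the definition of the vector field.

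The paper, however, argues quite differently. It does \emph{not} construct critical points explicitly. Instead it exploits the conservation law of Proposition~\ref{Invariance} together with Corollary~\ref{limit point}: for every total mass $C>0$, any trajectory starting with $\sum_k a_k(0)=C$ remains in the hyperplane $\sum_k x_k=C$, is bounded, and hence has a limit point, which by Corollary~\ref{limit point} must be a critical point lying on that hyperplane. Since distinct values of $C$ give critical points on distinct hyperplanes, one obtains an uncountable family indexed by $C\in(0,+\infty)$. Thus the paper's proof is an \emph{existence} argument leaning on the Lyapunov/LaSalle machinery of Section~\ref{s.Convergence}, whereas yours is a \emph{constructive} argument relying only on the algebraic description of critical points from Section~\ref{Reducible}. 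Your route is more elementary and self-contained; the paper's route is a neat dividend of the dynamical results already established, and would still work even if one did not know how to write down any critical point explicitly. Your closing paragraph about the mass-constrained variant is a useful observation but is not needed here, since the paper's own proof also lets the total mass vary.
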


\begin{proof}
Let $\sum_{k=1}^L a_k(0)=C>0$. We know that the corresponding solution has to remain in the hyperplane $(\Pi): \sum_{k=1}^Lx_k=C$. Since the path is bounded it admits at least one limit point and, according to our preceding results (corollary \ref{limit point}), the latter is a critical point belonging to $(\Pi)$. Consequently, for every positive value of $C$, we obtain distinct critical points.
\end{proof}

\section{Stability of pure transport processes\label{SpecialClass}}

\subsection{The irreducible case}

We consider pure transport processes (i.e. $D=0$) on general graphs. We first discuss the
stability of the special class of critical points $\vect{a}> 0$ solving  equations of the form $\Gamma \vect{a} = c {\bf 1}$.
Without loss of generality, we set $T=1$.
 For such $\vect{a}$,
 $N_i(\vect{a})\equiv N=\kappa +c$, and therefore, when the graph is
regular, one obtains for example the uniform solution $\vect{a}=(\rho)=(\rho,\ldots,\rho)$.
When $G$ is the complete graph  $K_L$ of $L$ nodes, where every pair of nodes
$i\ne j$ are nearest neighbours, a simple computation shows that the Jacobian ${\rm d}f((\rho))$ associated with
(\ref{eq:jonsson}) and evaluated at the uniform configuration $(\rho)$, is given by
$$\frac{\partial f_i ((\rho))}{\partial a_j}=\frac{\rho^2}{N^2},
\ \frac{\partial f_i ((\rho))}{\partial a_i}=-\sum_{j\ne i}\frac{\rho^2}{N^2}.$$
Consequently, ${\rm d}f((\rho))$ is a symmetric generator, and thus admits only non-positive  real eigenvalues. The uniform configuration is then stable for the complete graph.

 \begin{proposition}\label{StabilityGeneral}
 Let $\vect{a} > 0$ be such that $\Gamma \vect{a} = c {\bf 1}$, for some positive constant $c>0$. According to Lemma \ref{CriticalAdjacency},
 $\vect{a}$ is a critical point, with $N_i(\vect{a})\equiv N = c+\kappa$.
 Assume that $D=0$ and set $T=1$.
  The Jacobian
 ${\rm d}f(\vect{a}) = (\partial f_i/\partial a_j)$ evaluated at $\vect{a}$ is then given by
 $${\rm d}f(\vect{a})=\frac{1}{N^2}d(\vect{a})\Gamma \Big(c\ {\rm id}-d(\vect{a})\Gamma\Big),$$
 where $d(\vect{a})$ is the diagonal matrix of diagonal given by
 $\vect{a}$, and where $\Gamma$ is the adjacency of the graph. 
 \end{proposition}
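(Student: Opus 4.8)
The plan is to compute the Jacobian entries $\partial f_i/\partial a_j$ directly from the compact form $f_i(\vect{a})=\sum_{k\sim i}\big(a_k q_{ki}(\vect{a})-a_i q_{ik}(\vect{a})\big)$ with $q_{ki}(\vect{a})=a_i/N_k(\vect{a})$ and $q_{ik}(\vect{a})=a_k/N_i(\vect{a})$, where $N_k(\vect{a})=\kappa+\sum_{\ell\sim k}a_\ell$, and then reassemble the result into the claimed matrix product. Writing $f_i=\sum_{k\sim i}\big(a_k a_i/N_k - a_i a_k/N_i\big) = a_i\sum_{k\sim i} a_k\big(1/N_k - 1/N_i\big)$, I would differentiate this expression, being careful about the three sources of $a_j$-dependence: the prefactor $a_i$, the factors $a_k$, and the denominators $N_k$, $N_i$ (recalling the observation already noted in the excerpt that $N_k$ does not depend on $a_k$, and $\partial N_k/\partial a_j = \Gamma_{kj}$).

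The key simplification is to evaluate everything \emph{at the critical point}, where $N_i(\vect{a})\equiv N=c+\kappa$ for all $i$. At such a point, every term of $f$ proportional to $(1/N_k-1/N_i)$ vanishes, which kills many contributions to the derivative: only the terms where the derivative hits a denominator survive, together with the terms where the derivative hits $a_k$ or $a_i$ but these are then multiplied by the already-vanishing factor $(1/N_k-1/N_i)$ and drop out too — except, crucially, the term where we differentiate $a_i\sum_{k\sim i}a_k(1/N_k)$ does not fully cancel because the "diffusive balance" $\sum_{k\sim i}a_k = N_i-\kappa$ interacts with $\Gamma\vect{a}=c{\bf 1}$. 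So I expect the surviving contribution to be, schematically,
\begin{equation*}
\frac{\partial f_i}{\partial a_j} = \frac{a_i}{N}\,\Gamma_{ij}\cdot(\text{something}) \;-\; \frac{a_i}{N^2}\sum_{k\sim i} a_k \Gamma_{kj}\cdot(\text{something}),
\end{equation*}
and after inserting $\sum_{k\sim i}a_k = c$ (valid at the critical point since $(\Gamma\vect{a})_i=c$) this should collapse to
\begin{equation*}
\frac{\partial f_i}{\partial a_j} = \frac{a_i}{N^2}\Big(c\,\Gamma_{ij} - \sum_{k} \Gamma_{ik}a_k\Gamma_{kj}\Big) = \frac{1}{N^2}\big(d(\vect{a})\Gamma(c\,{\rm id}-d(\vect{a})\Gamma)\big)_{ij}.
\end{equation*}
I would then just read off that this matrix of entries is exactly $\tfrac{1}{N^2}d(\vect{a})\Gamma\big(c\,{\rm id}-d(\vect{a})\Gamma\big)$, since $(d(\vect{a})\Gamma)_{ik}=a_i\Gamma_{ik}$ and matrix multiplication reproduces the double sum over $k$.

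The main obstacle is purely bookkeeping: tracking which of the several derivative terms vanish at the critical point and being sure nothing is dropped prematurely. Concretely, one must differentiate \emph{before} imposing $N_i\equiv N$ (otherwise the $\partial N/\partial a_j$ terms are lost), then substitute $N_i=N$ and $\sum_{k\sim i}a_k=c$ only at the end. A clean way to organize this is to write $f_i = a_i g_i(\vect{a})$ with $g_i(\vect{a})=\sum_{k\sim i}a_k(1/N_k-1/N_i)$, use the product rule $\partial f_i/\partial a_j = \delta_{ij}g_i + a_i\,\partial g_i/\partial a_j$, note $g_i(\vect{a})=0$ at the critical point so the first term drops, and then compute $\partial g_i/\partial a_j$ term by term — the pieces differentiating $a_k$ pick up a factor $(1/N_k-1/N_i)=0$ and vanish, leaving only $a_i\sum_{k\sim i}a_k\,\partial(1/N_k-1/N_i)/\partial a_j = a_i\sum_{k\sim i}a_k\big(-\Gamma_{kj}/N^2 + \Gamma_{ij}/N^2\big)$, and the sum $\sum_{k\sim i}a_k\Gamma_{ij} = c\,\Gamma_{ij}$ gives the stated formula. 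With this structure the computation is short and the cancellations are transparent.
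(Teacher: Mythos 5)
Your proof is correct, and it takes a cleaner route than the paper's. You factor $f_i=a_ig_i$ with $g_i(\vect{a})=\sum_{k\sim i}a_k\bigl(1/N_k-1/N_i\bigr)$, observe that $g_i$ vanishes identically at a critical point with $N_i\equiv N$, and thereby reduce the whole computation to differentiating the denominators, with $\partial N_k/\partial a_j=\Gamma_{kj}$ and $\sum_{k\sim i}a_k=(\Gamma\vect{a})_i=c$ doing the rest; the resulting entrywise formula $\frac{a_i}{N^2}\bigl(c\,\Gamma_{ij}-\sum_k\Gamma_{ik}a_k\Gamma_{kj}\bigr)$ is exactly the $(i,j)$ entry of $\frac{1}{N^2}d(\vect{a})\Gamma\bigl(c\,{\rm id}-d(\vect{a})\Gamma\bigr)$, uniformly in whether $j=i$, $j\sim i$, or $j$ is at distance $2$ from $i$. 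The paper instead computes the Jacobian for \emph{general} $\vect{a}$ first, splitting into the three cases just mentioned (equations (\ref{Jacobian1})--(\ref{Jacobian3})), then specializes to $N_i\equiv N$ and reassembles the pieces via Hadamard products with $\Gamma$ and $\mathbf{1}-\Gamma-{\rm id}$ — considerably more bookkeeping, but the general formulas are reused verbatim in the proof of Proposition \ref{StabilityBoundary} for the reducible case, where $N_i$ is no longer constant and your cancellation argument would not apply. So your argument buys brevity and transparency at the critical points covered by this proposition, while the paper's buys reusability; for the statement as posed, yours is a complete and valid proof.
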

 
 The proof of Proposition \ref{StabilityGeneral} is given in  Section \ref{Appendix}.

We  now characterize the set of stable configurations using the spectral gap of
the matrix $P(\vect{a})=\Gamma d(\vect{a})/c$. Let $P$ be a stochastic matrix associated with a Markov
chain on the state space $\Lambda$. We assume that $P$ is reversible with invariant probability
measure $\pi$. Let $A$ be the matrix defined by
$A_{ij}=\pi_i p_{ij}\equiv \pi_j p_{ji}$, $i\ne j$. The eigenvalues of $P$ are real, given by
$-1\le\beta_L\le\cdots\beta_2 <\beta_1 = 1$, and the spectral gap is given by $C=1-\beta_2$.
Let $L = {\rm id}-P$ be the associated Laplace operator, of eigenvalues $\lambda_k = 1-\beta_k$,
$k=1,\cdots, L$. Then (see e.g. \citep{Diaconis})
\begin{equation}\label{Gap}
C =\lambda_2 = \inf\{\frac{{\cal E}_\pi(\phi,\phi)}{{\rm Var}_\pi(\phi)}:\ \phi \text{ is nonconstant}\},
\end{equation}
where 
$${\cal E}_\pi (\phi,\phi)=\frac{1}{2}\sum_{i,j}(\phi(j)-\phi(i))^2 A_{ij},$$
is the Dirichlet form associated with $L$, and where ${\rm Var}_\pi (\phi)$
is the variance of the random variable $\phi$ with respect to the invariant probability measure $\pi$.
One can check that
$${\rm Var}_\pi (\phi)=\frac{1}{2}\sum_{i,j}(\phi(j)-\phi(i))^2 \pi_i \pi_j.$$
We can also reformulate the above variational problem in a different way: set
$\langle \phi\rangle_\pi = \sum_{i\in\Lambda}\phi (i)\pi_i$. Then
\begin{equation}\label{Gap2}
C = \inf\{\frac{{\cal E}_\pi(\phi,\phi)}{{\rm Var}_\pi(\phi)}:\ \langle\phi\rangle_\pi =0 \}.
\end{equation}

 \begin{lemma}\label{SpectralGeneral}
 Let $G$ be a connected graph of adjacency matrix $\Gamma$, and let 
 $\vect{a}>0$ satisfy  $\Gamma \vect{a} = c\ {\bf 1}$ for some $c>0$. The matrix $P(\vect{a})$ defined by
 \begin{equation}\label{StochasticMatrix}
 P(\vect{a}) = \frac{1}{c}\Gamma d(\vect{a}),
 \end{equation}
 is stochastic, irreducible, reversible, of invariant measure $\pi'(\vect{a})$ given by
 $\pi'(\vect{a})_i = a_i/(\rho L)$, and with a real spectrum
 $-1\le \beta_{\Lambda}\le \beta_{\Lambda-1}\le \cdots\le \beta_2< \beta_1 = 1$. Let $ C(\vect{a})$ be the 
 spectral gap of $P(\vect{a})$, defined by $C(\vect{a}) = 1-\beta_2$.
  $\vect{a}$ is stable if and only if
 $C(\vect{a}) \ge 1$. Moreover, the spectral gap is given by
 $$C(\vect{a})=\delta \inf_\phi  
 \frac{\sum_{i,j}(\phi(j)-\phi(i))^2 \gamma_{ij} \pi'(\vect{a})_i   \pi'(\vect{a})_j}{
   \sum_{i,j}(\phi(j)-\phi(i))^2  \pi'(\vect{a})_i   \pi'(\vect{a})_j}\le \delta,
   $$
   where $\delta = \rho L /c > 1$, and where
    the infimum is taken over all nonconstant functions $\phi$.
 \end{lemma}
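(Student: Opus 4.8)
The plan is to establish each of the claimed properties of $P(\vect{a}) = \frac{1}{c}\Gamma d(\vect{a})$ in turn, then translate the stability condition on $\mathrm{d}f(\vect{a})$ through the spectral mapping, and finally invoke the standard variational characterization of the spectral gap.

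First I would check that $P(\vect{a})$ is a stochastic matrix: the row sums are $\frac{1}{c}\sum_j \Gamma_{ij}a_j = \frac{1}{c}(\Gamma\vect{a})_i = \frac{1}{c}\cdot c = 1$ by hypothesis, and the entries are non-negative since $\Gamma_{ij}\in\{0,1\}$ and $\vect{a}>0$. Irreducibility follows because $G$ is connected and $a_j>0$, so $p_{ij}>0$ exactly when $i\sim j$; a path in $G$ gives a path of positive transition probabilities. For reversibility with the claimed $\pi'(\vect{a})_i = a_i/(\rho L)$, one verifies the detailed balance relation $\pi'_i p_{ij} = \frac{a_i}{\rho L}\cdot\frac{a_j}{c} = \frac{a_j}{\rho L}\cdot\frac{a_i}{c} = \pi'_j p_{ji}$ for $i\sim j$, using that $\Gamma_{ij}=\Gamma_{ji}$; and $\pi'$ is indeed a probability vector because $\sum_i a_i = \rho L$ by the normalization convention. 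Reversibility forces the spectrum to be real, and Perron--Frobenius together with irreducibility gives $\beta_1 = 1$ simple and all eigenvalues in $[-1,1]$, yielding the ordering $-1\le\beta_\Lambda\le\cdots\le\beta_2<\beta_1=1$.

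Next I would connect this to stability. By Proposition \ref{StabilityGeneral}, $\mathrm{d}f(\vect{a}) = \frac{1}{N^2}d(\vect{a})\Gamma(c\,\mathrm{id} - d(\vect{a})\Gamma)$. Writing $\tilde P(\vect{a}) = \frac{1}{c}d(\vect{a})\Gamma$ (the transpose-conjugate, by $\pi'$, of $P(\vect{a})$, hence with the same spectrum), one gets $\mathrm{d}f(\vect{a}) = \frac{c^2}{N^2}\tilde P(\vect{a})(\mathrm{id} - \tilde P(\vect{a}))$. Since $\tilde P(\vect{a})$ has the same real eigenvalues $\beta_k$ as $P(\vect{a})$, the eigenvalues of $\mathrm{d}f(\vect{a})$ are $\frac{c^2}{N^2}\beta_k(1-\beta_k)$, $k=1,\dots,\Lambda$. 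The $\beta_1=1$ eigenvalue contributes $0$ (consistent with the conservation law). The critical point $\vect{a}$ is stable iff $\beta_k(1-\beta_k)\le 0$ for all $k\ge 2$, i.e. iff each $\beta_k$ with $\beta_k\neq 0$ satisfies $\beta_k\le 0$ or $\beta_k\ge 1$; since $\beta_k<1$ for $k\ge2$, this forces $\beta_k\le 0$ for all $k\ge2$, equivalently $\beta_2\le 0$, equivalently $C(\vect{a}) = 1-\beta_2\ge 1$. One subtlety to handle carefully: one must argue that all $\beta_k$ for $k\ge 2$ are $\le 0$ simultaneously, not just $\beta_2$; but since $\beta_2$ is the largest among $\beta_2,\dots,\beta_\Lambda$, $\beta_2\le 0$ gives $\beta_k\le 0$ for all $k\ge 2$, so the condition is clean.

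Finally, for the variational formula I would apply the standard Rayleigh-quotient characterization of the gap, equations \eqref{Gap}--\eqref{Gap2}, to $P(\vect{a})$: with $A_{ij} = \pi'_i p_{ij} = \frac{a_i a_j}{c\,\rho L}\gamma_{ij}$ one gets $\mathcal{E}_{\pi'}(\phi,\phi) = \frac{1}{2}\sum_{i,j}(\phi(j)-\phi(i))^2 \frac{a_i a_j}{c\,\rho L}\gamma_{ij}$ and $\mathrm{Var}_{\pi'}(\phi) = \frac{1}{2}\sum_{i,j}(\phi(j)-\phi(i))^2 \pi'_i\pi'_j$. Expressing $a_i = \rho L\,\pi'_i$ pulls the factor $\rho L$ out of the numerator's weights, leaving $C(\vect{a}) = \frac{\rho L}{c}\inf_\phi \frac{\sum_{i,j}(\phi(j)-\phi(i))^2 \gamma_{ij}\pi'_i\pi'_j}{\sum_{i,j}(\phi(j)-\phi(i))^2\pi'_i\pi'_j}$, which is the claimed identity with $\delta = \rho L/c$. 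The bound $C(\vect{a})\le\delta$ is immediate since each $\gamma_{ij}\in\{0,1\}\le 1$, so the ratio of sums is $\le 1$. That $\delta>1$ follows from $c = \langle\vect{a},\Gamma\vect{a}\rangle/\langle\vect{a},\mathbf{1}\rangle$ being strictly less than $\rho L = \langle\vect{a},\mathbf{1}\rangle$ — this needs a short argument, e.g. via $c = \mathbb{E}_{\pi'}[\text{(number of neighbours' weight)}]\cdot$normalization, or more directly: $\beta_2<1$ forces $C(\vect{a})>0$, and combined with the exact formula and $\gamma_{ij}\le 1$ one could also derive $\delta\ge C(\vect{a})>0$; the cleanest route is to note $c<\rho L$ because $\Gamma$ has some zero entries on a connected graph with more than one vertex having a non-neighbour, though for the complete graph $K_L$ one has $c = \rho L(L-1)/L$ which is still $<\rho L$. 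I expect the main obstacle to be purely bookkeeping: keeping the transpose $\tilde P$ versus $P$ straight (they are conjugate via $d(\vect{a})^{1/2}$, hence isospectral) and tracking the constant $\rho L$ through the three normalizations ($\pi$, $\pi'$, and $c$) without sign or factor errors.
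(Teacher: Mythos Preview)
Your proposal is correct and follows essentially the same route as the paper: verify stochasticity from $\Gamma\vect{a}=c\mathbf{1}$, check detailed balance for $\pi'$, pass from $P(\vect{a})$ to $d(\vect{a})\Gamma$ via transposition/conjugation to identify the eigenvalues of $\mathrm{d}f(\vect{a})$ (Proposition~\ref{StabilityGeneral}) as $\frac{c^2}{N^2}\beta_k(1-\beta_k)$, and then read off the variational formula from (\ref{Gap}). Your treatment is slightly more thorough than the paper's in that you explicitly invoke Perron--Frobenius for the spectral ordering and you actually argue $\delta>1$ (which the paper asserts without proof); the cleanest justification is the one you nearly reach: $\langle\vect{a},\Gamma\vect{a}\rangle\le\sum_{i\ne j}a_ia_j=(\rho L)^2-\sum_i a_i^2<(\rho L)^2$, hence $c<\rho L$.
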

 
\begin{proof}
The matrix is stochastic since by assumption
$\Gamma \vect{a}= c \vect{1}$.

  Let $\pi'(\vect{a})=\left(\frac{a_i}{\rho L}\right)_{i\in \Lambda}$. Then
  $P(\vect{a})$ is reversible of invariant measure given by $\pi'$.
Notice next that
$$ A_{ij}=\pi'(\vect{a})_i P(\vect{a})_{ij}=\delta \gamma_{ij}\pi'(\vect{a})_i   \pi'(\vect{a})_j,$$
where we recall that $\gamma_{ij}\in\{0,1\}$ is the $(i,j)$ entry of the adjacency matrix $\Gamma$.
Hence, using the variational characterization of the spectral gap given in (\ref{Gap}), 
$$
C \le   \frac{{\cal E}_{\pi'(\vect{a})}(\phi,\phi)}{{\rm Var}_{\pi'(\vect{a})}(\phi)}
 = \delta \frac{\sum_{i,j}(\phi(j)-\phi(i))^2 \gamma_{ij} \pi'(\vect{a})_i   \pi'(\vect{a})_j}{
   \sum_{i,j}(\phi(j)-\phi(i))^2  \pi'(\vect{a})_i   \pi'(\vect{a})_j} \le \delta,
 $$
when $\phi$ is  non-constant.
The configuration $\vect{a}$ is stable if and only if the eigenvalues of the Jacobian matrix $df(\vect{a})$ given in the proposition \ref{StabilityGeneral}  are all non-positive. 
The adjacency matrix $\Gamma$ is symmetric, so that
 $\left(\Gamma d(\vect{a})\right)^* = d(\vect{a})\Gamma$. It follows that
  the eigenvalues $\tilde{\beta}_i$ of $d(\vect{a})\Gamma$ are equal to $c \beta_i$,
$i=1,\cdots,L$.  
   The eigenvalues of $N^2 df(\vect{a})$ are given by $ \tilde{\beta}_i (c-\tilde{\beta}_i )= \beta_i (1-\beta_i ) c$. Hence,   $\vect{a}$ is stable if and only if $\beta_2 < 0$, that is  if and only if  $C \geq 1$.
  \end{proof}

   \begin{corollary}\label{Unstable}
Let $G$ be a connected graph of adjacency matrix $\Gamma$, and let $\vect{a}>0$ satisfy  $\Gamma \vect{a} = c\ {\bf 1}$ for some $c>0$.
 For $i\in\Lambda$, let ${\cal V}_i = \{j\in\Lambda;\ j\sim i\}$
 be the neighbourhood of $i$.
  Assume that there exist elements
 $i_0$, $i_1$, $i_2$ and $i_3$ of $\Lambda$ such that
 \begin{equation}\label{InstabilityCriterium}
 i_1\in {\cal V}_{i_0}, \ i_2 \in {\cal V}_{i_1}\setminus {\cal V}_{i_0}\setminus \{i_0\},\ 
 i_3 \in {\cal V}_{i_2}\setminus {\cal V}_{i_0}\setminus \{i_0\}.
 \end{equation}
 Then $\vect{a}$ is unstable.
 \end{corollary}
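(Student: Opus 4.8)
The plan is to convert the instability claim into a positivity statement for the quadratic form $u\mapsto\langle u,\Gamma u\rangle$ on the mean-zero hyperplane, and then to write down an explicit witness supported on the four vertices $i_0,i_1,i_2,i_3$ of the path.

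First I would invoke Lemma \ref{SpectralGeneral}. Since $\vect{a}>0$ and $\Gamma\vect{a}=c\,{\bf 1}$, the matrix $P(\vect{a})=\frac1c\Gamma d(\vect{a})$ is an irreducible reversible stochastic matrix with invariant probability $\pi'(\vect{a})_i=a_i/(\rho L)$ and real spectrum $-1\le\beta_\Lambda\le\cdots\le\beta_2<\beta_1=1$, and $\vect{a}$ is unstable if and only if its spectral gap satisfies $C(\vect{a})<1$, i.e. if and only if $\beta_2>0$. Because $P(\vect{a})$ has zero diagonal, a one-line expansion gives ${\cal E}_{\pi'(\vect{a})}(\phi,\phi)=\sum_i\phi(i)^2\pi'(\vect{a})_i-\langle\phi,P(\vect{a})\phi\rangle_{\pi'(\vect{a})}$, while ${\rm Var}_{\pi'(\vect{a})}(\phi)=\sum_i\phi(i)^2\pi'(\vect{a})_i$ whenever $\langle\phi\rangle_{\pi'(\vect{a})}=0$; so the variational characterization (\ref{Gap2}) of the spectral gap shows that $C(\vect{a})<1$ holds as soon as there is a function $\phi$ with $\langle\phi\rangle_{\pi'(\vect{a})}=0$ and $\langle\phi,P(\vect{a})\phi\rangle_{\pi'(\vect{a})}>0$. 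Using the identity $A_{ij}=\pi'(\vect{a})_iP(\vect{a})_{ij}=\delta\,\gamma_{ij}\pi'(\vect{a})_i\pi'(\vect{a})_j$ from the proof of Lemma \ref{SpectralGeneral}, these two conditions become
$$\sum_i a_i\phi(i)=0,\qquad \sum_{i,j}\gamma_{ij}\,a_i a_j\,\phi(i)\phi(j)>0 .$$

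Next I would substitute $u_i=a_i\phi(i)$; as $a_i>0$ for all $i$, this is a bijection between functions on $\Lambda$ and vectors $u\in\R^{\Lambda}$, under which the two conditions read $\sum_i u_i=0$ and $\langle u,\Gamma u\rangle>0$. So it suffices to exhibit a non-zero $u$ with zero mean and $\langle u,\Gamma u\rangle>0$, and I would take $u_{i_0}=2$, $u_{i_1}=1$, $u_{i_2}=-1$, $u_{i_3}=-2$ and $u_k=0$ for every other vertex. The four indices are pairwise distinct: $i_1\ne i_0$, $i_2\ne i_1$ and $i_3\ne i_2$ because a neighbourhood excludes its centre, $i_2\ne i_0$ and $i_3\ne i_0$ by (\ref{InstabilityCriterium}), and $i_3\ne i_1$ since $i_1\in{\cal V}_{i_0}$ while $i_3\notin{\cal V}_{i_0}$. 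Plainly $\sum_i u_i=2+1-1-2=0$. As $\Gamma$ is symmetric with zero diagonal, $\langle u,\Gamma u\rangle=2\sum_{\{k,l\}\in E}u_ku_l$, and only edges among $\{i_0,i_1,i_2,i_3\}$ can contribute: by (\ref{InstabilityCriterium}) the edges $\{i_0,i_1\}$, $\{i_1,i_2\}$, $\{i_2,i_3\}$ are present, $\{i_0,i_2\}$ and $\{i_0,i_3\}$ are absent, and $\{i_1,i_3\}$ may or may not be an edge; writing $\varepsilon\in\{0,1\}$ for its indicator,
$$\langle u,\Gamma u\rangle=2\bigl(u_{i_0}u_{i_1}+u_{i_1}u_{i_2}+u_{i_2}u_{i_3}+\varepsilon\,u_{i_1}u_{i_3}\bigr)=2\bigl(2-1+2-2\varepsilon\bigr)=6-4\varepsilon\ \ge\ 2\ >\ 0 .$$
Hence $\beta_2>0$, so $C(\vect{a})<1$, and $\vect{a}$ is unstable by Lemma \ref{SpectralGeneral}.

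The substantive points are the reduction to the sign of $\langle u,\Gamma u\rangle$ on the mean-zero hyperplane and the choice of the test vector; the rest is a routine expansion and a small case-check. Indicator test functions genuinely fail here --- for $A=\{i_0,i_2\}$ or $A=\{i_0,i_3\}$ (two non-adjacent vertices) the Rayleigh quotient equals $1$ precisely because $\delta=\rho L/c>1$ --- which forces one to use all four vertices of the path; the non-adjacencies $i_2,i_3\notin{\cal V}_{i_0}$ are exactly what annihilate the potentially negative cross-terms $u_{i_0}u_{i_2}$ and $u_{i_0}u_{i_3}$, and the one delicate point is that the edge $\{i_1,i_3\}$ is not controlled by (\ref{InstabilityCriterium}), which is why the weights $2,1,-1,-2$ must be large enough that the worst-case loss $2\varepsilon\,u_{i_1}u_{i_3}=-4$ cannot cancel the gain $6$ provided by the path edges. (An alternative, test-function-free route: $P(\vect{a})$ is similar to the symmetric matrix $\frac1c d(\vect{a})^{1/2}\Gamma d(\vect{a})^{1/2}$, which is congruent to $\Gamma$, so by Sylvester's law of inertia it has the same number of positive eigenvalues as $\Gamma$; instability is then equivalent to $\Gamma$ having at least two positive eigenvalues, and by Cauchy interlacing this follows once one checks that the subgraph induced by $\{i_0,i_1,i_2,i_3\}$ --- a path on four vertices when $i_1\nsim i_3$, otherwise the triangle $i_1i_2i_3$ with a pendant vertex $i_0$ --- has two positive adjacency eigenvalues, a direct $4\times4$ computation.)
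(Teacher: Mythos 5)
Your proof is correct and follows essentially the same route as the paper: both invoke Lemma \ref{SpectralGeneral} and reduce instability to exhibiting a mean-zero test function $\phi$ with $\sum_{i,j}\gamma_{ij}\,a_i a_j\,\phi(i)\phi(j)>0$. The only real difference is the witness: the paper sets $\phi\equiv 0$ on ${\cal V}_{i_0}$, takes $\phi$ positive on $\Lambda\setminus({\cal V}_{i_0}\cup\{i_0\})$ and adjusts $\phi(i_0)$ to kill the mean, so the positivity comes solely from the edge $\{i_2,i_3\}$ and no case analysis on the possible chord $\{i_1,i_3\}$ is needed, whereas your four-point vector $(2,1,-1,-2)$ is more explicit but must be weighted to survive that chord.
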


 \begin{example}\label{ExampleSubgaphs}
 When $G$ is a sub-graph of a two-dimensional grid, 
 a solution to the linear system $\Gamma \vect{a} = c {\bf 1}$ can possibly
 to be stable only when $G$ belongs to the list given in Figure \ref{subgraph},
 which consists in the square, the star, and all the various parts of the star. 
 \end{example}

 \begin{proof}
 We use Lemma \ref{SpectralGeneral} to express the spectral gap of $P(\vect{a})$
 as
 \begin{eqnarray*}
 C(\vect{a})&=&\delta \inf_{\langle\phi\rangle_{\pi'(\vect{a})}=0}
 \frac{\sum_i \phi (i)^2 \pi'(\vect{a})_i \sum_j \gamma_{ij}\frac{a_j}{\rho L} -\sum_{i,j}\gamma_{ij}\phi (i) \phi (j)
 \pi'(\vect{a})_i \pi'(\vect{a})_j}{\sum_i \phi (i)^2 \pi'(\vect{a})_i}\\
 &=&\delta \inf_{\langle\phi\rangle_{\pi'(\vect{a})}=0}
 \frac{\sum_i \phi(i)^2 \pi'(\vect{a})_i \frac{c}{\rho L} - \sum_{i,j}\gamma_{ij}\phi(i)\phi(j)\pi'(\vect{a})_i \pi'(\vect{a})_j}
 { \sum_i \phi(i)^2 \pi'(\vect{a})_i }\\
&=&
 \delta  \inf_{\langle\phi\rangle_{\pi'(\vect{a})}=0}
 \frac{\sum_i \phi(i)^2 \pi'(\vect{a})_i \delta^{-1} - \sum_{i,j}\gamma_{ij}\phi(i)\phi(j)\pi'(\vect{a})_i \pi'(\vect{a})_j}
 { \sum_i \phi(i)^2 \pi'(\vect{a})_i }
 \end{eqnarray*}
 We will prove that $C(\vect{a}) < 1$ by choosing a test function $\phi$
 satisfying ${\langle\phi\rangle_{\pi'(\vect{a})}=0}$ for which
 $$ \delta \frac{\sum_i \phi(i)^2 \pi'(\vect{a})_i  \delta^{-1} - \sum_{i,j}\gamma_{ij}\phi(i)\phi(j)\pi'(\vect{a})_i \pi'(\vect{a})_j}
 { \sum_i \phi(i)^2 \pi'(\vect{a})_i }
  < 1,$$
which is equivalent to require that
 $$\sum_{i,j}\gamma_{ij}\phi(i)\phi(j)\pi'(\vect{a})_i \pi'(\vect{a})_j > 0.$$
We set $\phi(j) = 0$, $\forall j\in {\cal V}_{i_0}$.
 For  $j\in \Lambda\setminus {\cal V}_{i_0}\setminus \{i_0\}$,
 we choose $\phi(j)$ to be arbitrary but positive. For $j=i_0$,
 we choose $\phi(i_0)$ so that
 $$a_{i_0}\phi (i_0)=-\sum_{j\ne i_0}a_j \phi(j).$$
 Consequently $\langle\phi\rangle_{\pi'(\vect{a})}=0$ and
 $\sum_{i,j}\gamma_{ij}\phi(i)\phi(j)a_i a_j > 0$.
 \end{proof}


   Corollary \ref{Unstable} provides a simple condition ensuring the non-stability of
configurations $\vect{a}$ satisfying $\Gamma \vect{a} = c {\bf 1}$. We next consider  
the reducible case where $a_i =0$ for $i\in I\subset \Lambda$. Set $J=\Lambda\setminus I$, and let
$\{\gamma_1,\cdots,\gamma_P\}$ be the collection of sub-graphs of $G$ 
induced by the nodes of $J$, of node set $J_{\gamma_p}$ and of
 adjacency matrices $\Gamma_{\gamma_p}$, $p=1,\cdots, P$. We again assume that
  $\Gamma_{\gamma_p}\vect{a}\vert_{\gamma_p}=c_{\gamma_p} \vect{1}$ for some $c_{\gamma_p}>0$.

   \subsection{The reducible case}
We consider the stability of critical points $\vect{a}$ such that 
$a_i = 0$, for $i\in I \subset\Lambda$ with $I\ne \emptyset$.

\begin{proposition}\label{StabilityBoundary}
Assume that $D=0$ and set $T=1$.
Let $\vect{a}$ be a critical point of \eqref{eq:jonsson} such that $a_i=0$ for $i \in I$.
Let $\{\gamma_1,\cdots,\gamma_P\}$ be the collection of sub-graphs of $G$ obtained
by deleting the nodes of $I$, of adjacency matrices $\Gamma_{\gamma_p}$, $p=1,\cdots, P$. 
The critical points $\vect{a}$ are obtained by solving linear systems of the form
  $\Gamma_{\gamma_p}\vect{a}\vert_{\gamma_p}=c_{\gamma_p} \vect{1}\vert_{\gamma_p}$ for some $c_{\gamma_p}>0$
  (see Section \ref{Reducible}). The spectrum of the Jacobian evaluated at $\vect{a}$ is given by
\begin{equation}
{\rm spec}({\rm d}f(a))= \bigcup_{p=1}^P {\rm spec}\left({\rm d}f\vert_{\gamma_p}(\vect{a}\vert_{\gamma_p})\right) \cup \left\{\sum_{k\sim i} \frac{a_k}{N_k} - \frac{N_i-\kappa}{N_i}, i\in I \right\}
\end{equation}
\end{proposition}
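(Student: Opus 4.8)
The plan is to exploit the factorised form of the vector field together with the invariance of the coordinate face carried by $I$. First I would rewrite \eqref{eq:jonsson} (with $D=0$, $T=1$) in the form $f_i(\vect a)=a_i\,g_i(\vect a)$, where $g_i(\vect a)=\sum_{k\sim i}a_k\big(\tfrac1{N_k(\vect a)}-\tfrac1{N_i(\vect a)}\big)$, so that $\partial f_i/\partial a_j=\delta_{ij}\,g_i(\vect a)+a_i\,\partial g_i/\partial a_j$. This identity is essentially the only computation needed; the rest is bookkeeping of which terms survive at the critical point.

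The rows $i\in I$ are the easy part: since $a_i=0$ at the critical point, $\partial f_i/\partial a_j=\delta_{ij}\,g_i(\vect a)$, so these rows of ${\rm d}f(\vect a)$ are diagonal. In particular the whole $I\times J$ block of ${\rm d}f(\vect a)$ vanishes, and the $I\times I$ block is $\mathrm{diag}\big(g_i(\vect a)\big)_{i\in I}$. Evaluating, with $\sum_{k\sim i}a_k=N_i-\kappa$ and $a_k/N_k=0$ for $k\in I$, one gets $g_i(\vect a)=\sum_{k\sim i}\tfrac{a_k}{N_k}-\tfrac{N_i-\kappa}{N_i}$, which is exactly the second family of numbers in the statement.

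For the block $J\times J$ I would invoke Proposition \ref{Invariance}: the face $F_I=\{\vect a:\ a_i=0\ \forall i\in I\}$ is invariant under \eqref{eq:jonsson}, and on $F_I$ the terms of $g_i$ carried by neighbours $k\in I$ drop out while $N_i$ reduces to $\kappa+\sum_{k\sim i,\,k\in J}a_k$. Hence, read in the coordinates $(a_j)_{j\in J}$, the restricted field $f\vert_{F_I}$ is literally the direct sum $\bigoplus_{p=1}^P f\vert_{\gamma_p}$ of the pure-transport fields attached to the connected components $\gamma_p$ (and since $\Gamma_{\gamma_p}\vect a\vert_{\gamma_p}=c_{\gamma_p}\vect 1$, each $\vect a\vert_{\gamma_p}$ is a critical point of $f\vert_{\gamma_p}$, cf.\ Corollary \ref{CriticalAdjacency} and Section \ref{Reducible}). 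Because $j\mapsto\partial/\partial a_j$ for $j\in J$ is tangential to $F_I$, the $J\times J$ block of ${\rm d}f(\vect a)$ equals the Jacobian of this product system at $\vect a\vert_J$, namely $\bigoplus_{p=1}^P {\rm d}f\vert_{\gamma_p}(\vect a\vert_{\gamma_p})$; in particular it is block-diagonal over $p$, with no surviving coupling between distinct components.

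Combining the three computations, in the ordering $(\gamma_1,\dots,\gamma_P,I)$ the matrix ${\rm d}f(\vect a)$ is block upper-triangular, with diagonal blocks ${\rm d}f\vert_{\gamma_1}(\vect a\vert_{\gamma_1}),\dots,{\rm d}f\vert_{\gamma_P}(\vect a\vert_{\gamma_P})$ and $\mathrm{diag}(g_i(\vect a))_{i\in I}$, so its spectrum is the union of the spectra of these blocks, which is the asserted formula. The one delicate point is the decoupling: that the $J\times J$ block genuinely splits over the $\gamma_p$ and that the $I\times J$ block is zero. Both follow from $a=0$ on $I$: this annihilates the whole $i$-th row for $i\in I$, and it kills the distance-two coupling terms $-a_k/N_k^2$ (arising from $\partial N_k/\partial a_j$ through a common neighbour $k$) precisely when that common neighbour lies in $I$ — which is the only route by which two different components could influence each other. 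I expect this term-tracking to be the main, though entirely routine, obstacle.
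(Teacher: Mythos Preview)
Your argument is correct and reaches the same block--triangular structure as the paper, but the route is cleaner. The paper proceeds by direct entry-by-entry computation: it evaluates the general Jacobian formulas \eqref{Jacobian1}--\eqref{Jacobian3} at the critical point, distinguishing the cases $i,j\in\Lambda_p$, $i\in\Lambda_p$ with $j\notin\Lambda_p$, $i,j\in I$, and the distance-two entries, and then checks that after permutation one obtains
\[
{\rm d}f(\vect a)=\begin{pmatrix} d_n & \vect 0\\ \ast & {\rm d}f^{\gamma}\end{pmatrix}
\]
with $d_n$ diagonal and ${\rm d}f^{\gamma}$ block-diagonal over the $\gamma_p$. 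Your factorisation $f_i=a_i g_i$ collapses the entire $I$-row analysis into a single product-rule identity, and your restriction/invariance argument for the $J\times J$ block replaces the paper's case-check that each off-diagonal entry coincides with the corresponding entry of ${\rm d}f\vert_{\gamma_p}$. What you gain is economy and a clearer explanation of \emph{why} the decoupling occurs (the distance-two coupling $-a_ia_k/N_k^2$ between components must pass through a common neighbour $k\in I$, where $a_k=0$); what the paper's explicit computation buys is that the $J\times I$ block (your $\ast$, their $\ast$) is written down concretely, which is not needed for the spectrum but is reassuring. One small remark: the invariance of $F_I$ from Proposition~\ref{Invariance} is not strictly required for the algebraic identity $(\partial f_i/\partial a_j)\vert_{\vect a}=\partial(f_i\vert_{F_I})/\partial a_j$ for $j\in J$; that is just the chain rule, and your verification that $f_i\vert_{F_I}$ literally equals $(f\vert_{\gamma_p})_i$ is what does the work.
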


The proof of Proposition \ref{StabilityBoundary} is given in  Section \ref{Appendix}.

Proposition \ref{StabilityBoundary} shows that such configurations are stable when
1) each $a\vert_{\gamma_p}$ is stable and 2) when $\sum_{k\sim i} a_k / N_k(\vect{a})-(N_i(\vect{a})-\kappa)/N_i(\vect{a}) < 0$,
$i\in I$. Here, if $k\sim i$, $i\in I$, $k\in \Lambda\setminus I$, $N_k(a)$ is given by
the constant $\kappa + c_{\gamma_p}$ when $k\in J_{\gamma_p}$. To go further, we need the following

\begin{definition}
Let $J\subset\Lambda$. The outer boundary of $J$, denoted by $\partial J$, is the subset
of $\Lambda$ given by
$$\partial J = \{j \in\Lambda\setminus J;\ j\sim J\}.$$
\end{definition}

\subsection{Example: the rectangular grid\label{StabilityGrid}}

We now illustrate the various stable patches we can form by using the 
building blocks, as given in Figures \ref{subgraph} and \ref{stable}. It is easy to provide
examples of unstable configurations when the outer boundary of some component $\gamma$
is such that
\begin{equation}\label{NotIsolated}
\partial\Big(\partial J_\gamma\Big)\cap J_{\gamma'} \ne\emptyset, \text{ for some component } \gamma'\ne\gamma,
\end{equation}
as illustrated in  Figure \ref{FigureStable}(a).

\begin{figure}[h!]
\begin{minipage}{0.5\textwidth}
\begin{center}
\includegraphics[height=4cm]{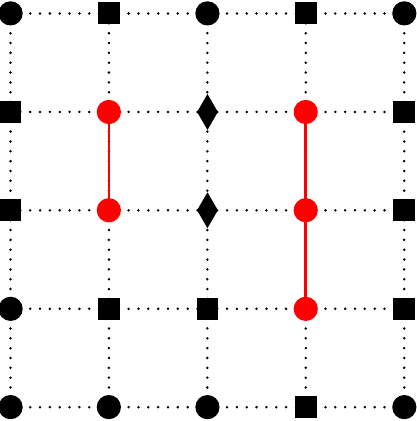}

\vspace{3mm}
(a) Unstable configuration
\end{center}
\end{minipage}
\begin{minipage}{0.5\textwidth}
\begin{center}
\includegraphics[height=4cm]{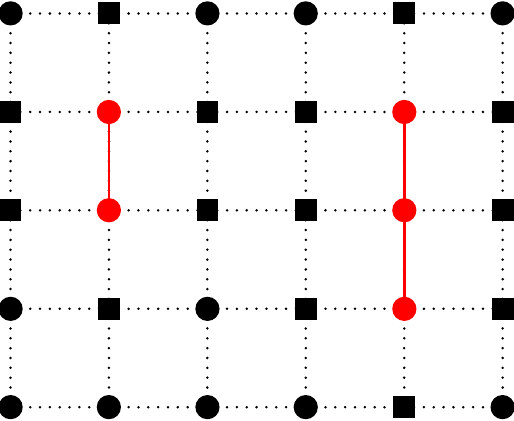}

\vspace{3mm}
(b) Stable configuration
\end{center}
\end{minipage}
\caption{(a) One can check in this example that (\ref{NotIsolated}) implies the non-stability of the configuration for well chosen parameters. Red dots indicates cells with $a_i \neq 0$. (b) One can check in this example that (\ref{Isolated}) is satisfied, ensuring the stability of the configuration. \label{FigureStable}}
\end{figure}

Next, the reader can verify, using Proposition \ref{StabilityBoundary}, that any patch composed of building blocks
disposed in such a way that 
\begin{equation}\label{Isolated}
\partial\Big(\partial J_p \cup J_p\Big)\cap \Big(\cup_{p'\ne p}J_{p'}\Big) = \emptyset,\  \forall p=1,\cdots, P,
\end{equation}
is stable. Figure \ref{FigureStable}(b) exhibits a typical example of  a stable configuration in this setting.

\subsection{Example: the pure transport process on the circle \label{Circle}}
We here assume that $D=0$ and $T=1$.
Corollary \ref{Unstable} yields  the instability of uniform solution $(\rho)=(\rho, \ldots, \rho)$
when the length $L$ of the cycle is larger than 4.
The adjacency matrix of the circle is circulant, with eigenvalues given by
$$\mu_k= e^{2\pi i \frac{k}{L}}+e^{2\pi i \frac{(L-1)k}{L}}=2\cos\left(2\pi \frac{k}{L}\right).$$
The determinant of $\Gamma$ vanishes if and only if there exists $j\in\{1,...,L\}$ such that $\mu_j=0$, that is if
$$\cos\left(2\pi \frac{j}{L}\right)=0 \Leftrightarrow 2\pi \frac{j}{L}= \frac{\pi}{2}+k\pi \text{ for } k\in\N,$$
or equivalently if there is a $k\in\N$ such that
$j=\frac{L}{4} +k\frac{L}{2}\in \N$.
Hence, the determinant of $\Gamma$ vanishes if and only if $L$ is a multiple of 4. 
In this case, the set  $M_c$  of critical values $\vect{a}$ (that is satisfying $\Gamma \vect{a}=c \vect{1}$) such that $a_i > 0$, $\forall i\in\Lambda$, is such that
$$ a_3=  c-a_1, a_4=  c-a_2, a_5=  a_1, a_6=  a_2, a_7=  c-a_1, ...$$
with $a_1\neq 0 \neq a_2$. 
Recalling that we impose the following normalization
$\sum_{i=1}^L a_i=\rho L$, we obtain 
$$\sum_{i=1}^L a_i=\rho L \Leftrightarrow 2c\frac{L}{4}= \rho L \Leftrightarrow c=2 \rho.$$
The set of critical values $M_c$ is then composed of configurations of the form
$$\vect{a}=(a_1,a_2,2 \rho-a_1,2 \rho-a_2 ,a_1, a_2,2 \rho-a_1 ,2 \rho-a_2 ,...,a_1,a_2,2 \rho-a_1 ,2 \rho-a_2)$$
with $(a_1,a_2)\in (0,2\rho)\times (0,2\rho)$. 
Corollary \ref{Unstable} then implies that this set contains only unstable points when $L>4$. For $L=4$, the critical point  $\vect{a}=(a_1,a_2,2 \rho-a_1,2 \rho-a_2)$ is stable since the eigenvalues of the Jacobian matrix are such that
$$\lambda_1=\lambda_2=\lambda_3=0 \text{ and } \lambda_4=-\frac{2c}{(\kappa+c)^2}$$
We can summarize these results in the following corollary:

\begin{corollary}\label{StabilityCriticalOneDim}
Assume that the nodes are arranged on a circle of size $L$. The set $M_c$ of critical values $\vect{a}>0$ such that $f(\vect{a})=0$ contains only the uniform configuration $(\rho,...,\rho)$ if L is not a multiple of 4. In the case where $L = 4n$, for some $n\in \N$ with $n\geq 1$, $M_c$ is given by
$$
M_c =\{(a_1,a_2,-a_1+2\rho,-a_2+2\rho,a_1,a_2,-a_1+2\rho,-a_2+2\rho,\cdots );
\ a_k \in (0,2\rho),\ k=1, 2\}.
$$
Any element of $M_c$ is unstable except for $L=4$. 
\end{corollary}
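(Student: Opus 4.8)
The plan is to combine the explicit description of the critical set $M_c$ for the circle, obtained from Corollary \ref{CriticalAdjacency} via the structure of the circulant adjacency matrix $\Gamma$, with the instability criterion of Corollary \ref{Unstable} and a direct Jacobian computation in the single case $L=4$. First I would recall from Example \ref{CriticalCircle} and the preceding discussion in Section \ref{Circle} that for the cycle of length $L$, the pure transport process with $D=0$ is reversible (Lemma \ref{ReversibilityPureTransport}), so by Corollary \ref{CriticalAdjacency} the positive critical points are exactly the positive solutions of $\Gamma\vect{a}=c(\vect{a})\,{\bf 1}$. Writing out this linear recursion $a_{k-1}+a_{k+1}=c$ around the cycle forces $a_{k+2}=a_k$, so a nonuniform solution exists only when the period-2 pattern closes up consistently on a cycle of length $L$, i.e. when $4\mid L$; otherwise all $a_k$ are equal and, imposing $\sum a_i=\rho L$, one gets the uniform configuration $(\rho,\dots,\rho)$. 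When $L=4n$ the free parameters are $a_1,a_2$, with $a_3=c-a_1$, $a_4=c-a_2$, and the normalization $\sum_{i=1}^L a_i = 2c\,(L/4)=\rho L$ gives $c=2\rho$, yielding precisely the displayed form of $M_c$ with $a_k\in(0,2\rho)$ for positivity. This is essentially bookkeeping on the circulant matrix $\Gamma$ whose eigenvalues $2\cos(2\pi k/L)$ I have already used to detect when $\det\Gamma=0$.

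Next I would establish instability for $L>4$. For $L$ not a multiple of $4$ the only positive critical point is the uniform one, and the first sentence of Section \ref{Circle} already records that Corollary \ref{Unstable} applies: on a cycle of length $\ge 5$ one can pick a path $i_0\to i_1\to i_2\to i_3$ with $i_1\in\mathcal V_{i_0}$, $i_2\notin\mathcal V_{i_0}\cup\{i_0\}$, $i_3\notin\mathcal V_{i_0}\cup\{i_0\}$ (take four consecutive vertices $k,k+1,k+2,k+3$; for $L\ge 5$ none of $k+2,k+3$ is adjacent to or equal to $k$), hence instability. For $L=4n$ with $n\ge 2$, every element $\vect{a}\in M_c$ is still positive and satisfies $\Gamma\vect{a}=c\,{\bf 1}$, so Corollary \ref{Unstable} applies verbatim with the very same four-vertex path, giving instability of all of $M_c$. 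This is the step where I lean hardest on previously proven results, and it is short.

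Finally, the remaining case $L=4$ must be handled by hand, since the instability path of length $4$ does not exist on the $4$-cycle (there every pair of vertices at distance $2$ are the unique "opposite" pair, and $\mathcal V_{i_0}$ already contains both neighbours, so $i_2$ would have to equal $i_0$). Here I would use Proposition \ref{StabilityGeneral}: at $\vect{a}=(a_1,a_2,2\rho-a_1,2\rho-a_2)$ with $N_i\equiv N=\kappa+c$, $c=2\rho$, the Jacobian is $\mathrm df(\vect{a})=\frac{1}{N^2}d(\vect{a})\Gamma\bigl(c\,\mathrm{id}-d(\vect{a})\Gamma\bigr)$, and by Lemma \ref{SpectralGeneral} its eigenvalues are $\beta_i(1-\beta_i)c/N^2$ where $\beta_i$ are the eigenvalues of the reversible stochastic matrix $P(\vect{a})=\frac1c\Gamma d(\vect{a})$. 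For the $4$-cycle one computes directly that $\Gamma d(\vect{a})$ has characteristic polynomial whose roots give $\beta_1=1$, $\beta_2=\beta_3=0$ and $\beta_4=-1$ (equivalently, compute $\det(\Gamma d(\vect{a})-\lambda I)$ explicitly for the $4\times4$ circulant-times-diagonal matrix), so the Jacobian eigenvalues are $\lambda_1=\lambda_2=\lambda_3=0$ and $\lambda_4=(-1)(2)\,c/N^2=-2c/(\kappa+c)^2<0$, all with non-positive real part; hence $\vect{a}$ is stable by definition. Assembling the three cases — uniform-only and unstable for $4\nmid L$; the two-parameter family $M_c$, unstable when $L=4n\ge 8$; and $M_c$ a stable single point (up to the parametrization) when $L=4$ — yields the statement. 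The main obstacle is the $L=4$ Jacobian: one must be careful that the triple eigenvalue $0$ does not signal instability (it reflects the two-dimensional manifold of critical points plus the one-dimensional conservation constraint $\sum a_i=\rho L$), so "stable" here means exactly the spectral condition in the Definition, i.e. no eigenvalue with positive real part, which is what the computation delivers.
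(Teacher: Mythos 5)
Your proposal is correct and follows essentially the same route as the paper: identify $M_c$ from the linear system $\Gamma\vect{a}=c\,{\bf 1}$ on the cycle (the paper does this via the vanishing of the circulant determinant exactly when $4\mid L$, you via the recursion, which is equivalent), apply Corollary \ref{Unstable} with four consecutive vertices for $L>4$, and compute the spectrum of $P(\vect{a})=\frac{1}{c}\Gamma d(\vect{a})$ for $L=4$ via Proposition \ref{StabilityGeneral} and Lemma \ref{SpectralGeneral}, obtaining $\beta=1,0,0,-1$ and hence three zero eigenvalues and one negative one. One slip in your write-up: the recursion $a_{k-1}+a_{k+1}=c$ forces $a_{k+2}=c-a_k$ (hence $a_{k+4}=a_k$, a period-4 pattern), not $a_{k+2}=a_k$; your subsequent bookkeeping ($a_3=c-a_1$, $a_4=c-a_2$, $c=2\rho$) is consistent with the correct relation, so this is only a typo.
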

The set $M_c^{tot}$ of all critical points is obtained by decomposing the circle into sub-graph $\gamma$ such  $\vect{a}\vert_\gamma>0$ and by solving the system 
$$\Gamma_\gamma \vect{a}\vert_\gamma = c_\gamma {\rm\bf 1}\vert_\gamma,$$
for these sub-graphs. We can prove that this system has positive solution $\vect{a}\vert_\gamma$ if and only $|I|<4$ ($|I|:=$ length of the path), because for $|I|\geq 4$, we see that $a_4=0$ (which is in contradiction with the hypothesis).
When $|I|=3$, the critical points take the form
$\vect{a}\vert_\gamma=(z_1,c_\gamma,c_\gamma-z_1)$,
with $z_1\in(0,c_\gamma)$ and when $|I|=2$, $\vect{a}\vert_\gamma=(c_\gamma,c_\gamma)$. In these two cases, the critical points are stable as the Lyapunov function H defined in (\ref{LyapounovFunction}) takes its minimal value $H(a)=-(\kappa+\frac{\rho L}{4})\rho L$. The global minimum of H is obtained by adapting the result of \citet{Motzkin}, see Remark \ref{Global}. Finally, if $|I|=1$, we have
$\vect{a}\vert_\gamma=(c_\gamma)$; $H$ is maximal and hence $\vect{a}$ is unstable.

The set $M_c^{tot}$ of critical points is then obtained by taking the direct product of the sets of critical values associated with
the paths $\gamma$. 
For example, if L is a multiple of 4, the subset of $M_c^{tot}$ defined by
$$
\tilde M_c  =
\{(a_1,a_2,-a_1+2\rho,-a_2+2\rho,a_1,a_2,-a_1+2\rho,-a_2+2\rho,\cdots );
\ a_1=0,\ a_2 \in (0,2\rho)\},
$$
is composed of critical values which are stable  since
$$\lambda=0 \text{ with multiplicity } 3\frac{L}{4} \text{ and } \lambda=\frac{-2c^2}{(\kappa+c)^2} \text{ with multiplicity } \frac{L}{4} $$

\subsection{An explicit computation when $D=0$ on the circle\label{SpecialComputation} }

As we have seen, when $\vert I\vert =3$, the stable configurations are given by
 triplets of the form $(z_1,c_\gamma,c_\gamma -z_1)$, where $z_1$ is such that
$z_1 \in (0,c_\gamma)$, for some positive constant $c_\gamma > 0$. 

Consider a path composed of five cells  $i-1$, $i$, $i+1$, $i+2$ and $i+3$ such that
$a_{i-1}=a_{i+3}=0$, so that the dynamical system (\ref{eq:jonsson}) 
associated with these cells
becomes
\begin{eqnarray}
\frac{{\rm d}a_i}{{\rm d}t}&=& \frac{a_{i+1} a_i}{\kappa + a_{i}+a_{i+2}}- \frac{a_i a_{i+1}}{\kappa + a_{i+1}},\label{eq1}\\
\frac{{\rm d}a_{i+2}}{{\rm d}t}&=& \frac{a_{i+1} a_{i+2}}{\kappa + a_{i}+a_{i+2}}- \frac{a_{i+2} a_{i+1}}{\kappa + a_{i+1}},\label{eq2}\\
\frac{{\rm d}a_{i+1}}{{\rm d}t}&=&  \frac{a_i a_{i+1}}{\kappa + a_{i+1}}+\frac{a_{i+2} a_{i+1}}{\kappa + a_{i+1}}
-\frac{a_{i+1} a_i}{\kappa + a_{i}+a_{i+2}}-\frac{a_{i+1} a_{i+2}}{\kappa + a_{i}+a_{i+2}}.\label{eq3}
\end{eqnarray}
Dividing (\ref{eq1}) by (\ref{eq2}) yields that
$$\frac{\frac{{\rm d}a_i}{{\rm d}t}}{\frac{{\rm d}a_{i+2}}{{\rm d}t}}
=\frac{a_i}{a_{i+2}}.$$
Thus there is a positive constant $c>0$ such that
\begin{equation}\label{Relation1}
a_{i+2} = c a_i.
\end{equation}
Plugging this identity in (\ref{eq3}), one obtains 
$$\frac{{\rm d}a_{i+1}}{{\rm d}t} = (1+c)a_i a_{i+1}(\frac{1}{\kappa + a_{i+1}}-\frac{1}{\kappa + a_i +a_{i+2}}),$$
and finally
$$\frac{\frac{{\rm d}a_{i+1}}{{\rm d}t}}{\frac{{\rm d}a_{i}}{{\rm d}t}}
= -(1+c).$$
Hence there exists a constant $d$ such that $a_{i+1}=d-(1+c)a_i$.
Normalizing the total mass in such a way that
$a_i +a_{i+1}+a_{i+2} = 3\rho$,
one gets that $3\rho = d$ and
\begin{equation}\label{Relation2}
a_{i+1}= 3\rho -(1+c)a_i.
\end{equation}
Plugging (\ref{Relation1}) and (\ref{Relation2}) in equation (\ref{eq1}) yields the differential equation
$$
\frac{{\rm d}a_i}{{\rm d}t}
= \frac{ a_i (3\rho -(1+c)a_i)(3\rho -2(1+c)a_i)}{(\kappa + a_i(1+c))(3\rho +\kappa -(1+c)a_i)}.$$
Setting
$u = (1+c)a_i$, one gets the o.d.e.
$$\frac{{\rm d}u}{{\rm d}t} = \frac{u(3\rho -u)(3\rho -2u)}{(\kappa +u)(3\rho +\kappa -u)}.$$
Solving by partial fractions expansions, one obtains
$$\frac{3\kappa\rho +\kappa^2}{9\rho^2}(\ln(u)
+\ln(3\rho-u))-\frac{9 \rho^ 2 +4(\kappa^2+3 \rho \kappa)}{18\rho^2}\ln(3\rho-2u)
=t+\alpha,$$
for some constant $\alpha$.  Clearly one must have $u<3\rho /2$. 

\begin{lemma}
\label{lem}
As $t \to \infty$, $u(t)=(1+c)a_i(t) \longrightarrow \frac{3\rho}{2}$. 
\end{lemma}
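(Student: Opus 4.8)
The plan is to read off the limit from the phase line of the scalar autonomous equation
$$\dot u = g(u), \qquad g(u) := \frac{u(3\rho-u)(3\rho-2u)}{(\kappa+u)(3\rho+\kappa-u)},$$
to which the preceding computation reduced the subsystem (\ref{eq1})--(\ref{eq3}). First I would fix the invariant region for $u$. By Proposition \ref{Invariance} applied with $D=0$, the components $a_{i-1}$ and $a_{i+3}$, being zero at time $0$, stay identically zero, and on the triple $(a_i,a_{i+1},a_{i+2})$ the system (\ref{eq1})--(\ref{eq3}) is conservative (the three right-hand sides sum to $0$), so $a_i+a_{i+1}+a_{i+2}\equiv 3\rho$ with all three components strictly positive for $t>0$. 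Via (\ref{Relation1})--(\ref{Relation2}) this amounts to $u(t)=(1+c)a_i(t)$ together with $a_{i+1}=3\rho-u$ and $a_{i+2}=\tfrac{c}{1+c}u$, hence $u(t)\in(0,3\rho)$ for all $t$. On $(0,3\rho)$ the denominator of $g$ never vanishes, so $g$ is real-analytic there, $g(u)=0$ only at $u=3\rho/2$, and $g>0$ on $(0,3\rho/2)$ while $g<0$ on $(3\rho/2,3\rho)$.

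Next comes the monotonicity step. Since $u\equiv 3\rho/2$ is itself a solution and $g$ is locally Lipschitz on $(0,3\rho)$, uniqueness forbids any other orbit from meeting the level $3\rho/2$; hence $u(t)$ stays in whichever of the intervals $(0,3\rho/2)$, $(3\rho/2,3\rho)$ its initial value lies in, the sign of $\dot u$ is constant, and $u(\cdot)$ is monotone. A bounded monotone trajectory converges to some $u_\infty\in[0,3\rho]$, and the standard argument (if $g(u_\infty)\neq 0$ then $\dot u(t)=g(u(t))\to g(u_\infty)\neq 0$, forcing $u$ to leave every bounded set) shows that $u_\infty$ must be a zero of $g$ on $[0,3\rho]$, i.e. $u_\infty\in\{0,\ 3\rho/2,\ 3\rho\}$. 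The endpoints are excluded: near $0$ one has $\dot u=g(u)>0$, so $u$ cannot decrease to $0$, and symmetrically $g<0$ near $3\rho$ rules out $u_\infty=3\rho$. Therefore $u_\infty=3\rho/2$, that is $(1+c)a_i(t)\to 3\rho/2$. Equivalently, staying in the regime $u<3\rho/2$ already singled out, $u$ increases to a limit $u_\infty\le 3\rho/2$, and in the explicit primitive the left-hand side remains bounded unless $\ln(3\rho-2u)\to-\infty$, i.e. $u\to 3\rho/2$, which is what is needed to match $t+\alpha\to+\infty$.

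The only slightly delicate ingredient is the bookkeeping: checking that the reduced three-cell subsystem conserves $a_i+a_{i+1}+a_{i+2}$ and that $u$ stays confined to $(0,3\rho)$ for all time, together with the uniqueness remark that prevents the orbit from crossing $u=3\rho/2$. Once these are in place the conclusion is a one-line phase-line statement, so I do not anticipate any real obstacle.
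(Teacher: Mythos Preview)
Your proposal is correct and follows essentially the same route as the paper: both analyse the scalar autonomous equation $\dot u=g(u)$ on the interval $[0,3\rho]$, identify the equilibria $0,\ 3\rho/2,\ 3\rho$, use the sign of $g$ and uniqueness to trap the orbit in $(0,3\rho/2)$ or $(3\rho/2,3\rho)$, and conclude by monotone convergence to $3\rho/2$. Your version is slightly more careful in invoking Proposition~\ref{Invariance} to secure $u(t)\in(0,3\rho)$, and your side remark on the explicit primitive (the $\ln(3\rho-2u)$ blow-up) is an extra observation not used in the paper's argument.
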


\begin{proof}
The preceding considerations show that we have to consider only initial conditions of the form  $0\leq u(0)\leq 3\rho$. Clearly $0, \frac{3\rho}{2}$ and $3\rho$ are critical points of our equation.

 We can easily find a compact interval $I$ whose interior contains $J=[0,3\rho]$ and so that $f'(u)$ is continuous and thus bounded over $I$. As a consequence $f$ satisfies a Lipschitz-condition over $I$. According to the general theory, for any initial condition $u(0)\in J$ our equation admits a unique solution defined over a maximal interval $I_m$. If $u(0)=0$, then $u\equiv 0$ is the corresponding solution. If $u(0)\in ]0,\frac{3\rho}{2}[$, then $\dot u(0)>0$. Due to unicity, the solution can not reach a critical point  in a finite time and thus the boundary of $]0,\frac{3\rho}{2}[$. Moreover the solution is obviously bounded entailing $I_m=[0,+\infty[$. For the preceding reasons the derivative of $u(t)$ is never $0$ and thus always positive since $\dot u(0)>0$. Thus $u(t)$ increases to $\frac{3\rho}{2}$ as $t\rightarrow +\infty$. The same reasoning shows that $u(t)$ decreases to $\frac{3\rho}{2}$ as $t\rightarrow +\infty$ for $u(0)\in ]\frac{3\rho}{2}, 3\rho[$. Finally if $u(0)=3\rho$, then $u\equiv 3\rho$.
\end{proof}

Furthermore, (\ref{Relation1}) yields
$$c=\frac{a_{i+2}(0)}{a_i(0)}.$$
As $(1+c)a_i=a_i+a_{i+2}$ tends to $c_{\gamma}$ as time goes to infinity, Lemma \ref{lem} yields that
$c_{\gamma}=3\rho/2$,
and
$$a_i(t) \longrightarrow \frac{3\rho}{2(1+c)}=\frac{c_{\gamma}}{1+c},$$
as $t \to \infty$.
 \eqref{Relation1} and \eqref{Relation2} show that
$$a_{i+1}=3\rho-(1+c)a_i \longrightarrow \frac{3\rho}{2}= c_{\gamma} \text{ and } a_{i+2}=c a_i\longrightarrow \frac{c}{1+c}c_{\gamma}=c_{\gamma}-\frac{c_{\gamma}}{1+c}.$$
In summary, one obtains that an orbit defined by initial conditions of the form
$$(a_{i-1}(0),a_i(0),a_{i+1}(0),a_{i+2}(0),a_{i+3}(0)) \text{ with } a_{i-1}(0)=a_{i+3}(0)=0$$
converges to the critical point
$(z_1,c_\gamma,c_\gamma -z_1)$,
with $z_1= \frac{c_{\gamma}}{1+c}$, $c_\gamma=\frac{3\rho}{2}$ and ${c=\frac{a_{i+2}(0)}{a_i(0)}}$.
Finally, if the system starts from a symmetric initial state  $a_i(0)=a_{i+2}(0)$, the constant c is egal to 1 and the system tends to $(0,\frac{3\rho}{4},\frac{3\rho}{2},\frac{3\rho}{4},0)$ as $t \to \infty$.

    
\section{Appendix\label{Appendix}}
\subsection{ Proof of Theorem \ref{Invariance}}
First, we easily check that the system $\dot{\vect{a}}=f(\vect{a})$ is conservative, i.e. 
 $$\forall t \in \R_{\geq 0}, \; \sum_i^L a_i(t) =\sum_i^L a_i(0).$$
 In the following, we use the notation $\dot{\vect{a}}$ instead of $\frac{{\rm d}\vect{a}}{{\rm d}t}$.
The latter is equivalent to 
$$\sum_i^L \dot{a}_i(t) = \sum_i^L f_i(\vect{a})= 0. $$
In fact, one can write
\begin{eqnarray*}
\sum_i \dot{a}_i (t)&=&D \sum_i\sum_{k \sim i} (a_k -a_i)+T \sum_i \sum_{k\sim i}a_k a_i\left(\frac{N_i-N_k}{N_k N_i}\right)\\
&=&D \sum_i(d_i a_i-d_i a_i)+2 T \sum_{k\sim i}\frac{a_k a_i}{N_k N_i}((N_i-N_k)-(N_k-N_i))=0,
\end{eqnarray*}
where $N_k=\kappa+\sum_{j\sim k}a_k$, and where $d_i$ is the degree of i (that is the number of neighbours of i). 

Next,  system (\ref{eq:jonsson}) can be written as 
\begin{equation}\label{eq:jonssonb}
\dot a_i=D\sum_{k\sim i}a_k+T\sum_{k\sim i}\left(\frac{a_k}{\kappa+\sum_{j\sim k}a_j}-\frac{a_k}{\kappa+\sum_{j\sim i}a_j}-\frac{D}{T}\right)a_i. 
\end{equation}
Let $\vect{a}$ a solution of (\ref{eq:jonssonb}) with $\vect{a}(0)\in\R^L_{\geq 0}$.

We say that the function $f:\R_+\rightarrow \R$ is instantaneously positive (i.p.) if there exists $\delta>0$ so that $f$ is strictly positive over $(0,\delta)$. If $f(0)>0$ and $f$ is continuous to the right at $0$, then $f$ is i.p.. It is also clear that if $f$ admits a 
strictly positive right-hand derivative at $0$, then it is i.p.. 

Let $U$ be the open set $U=\{\vect{x}=(x_1,x_2,...,x_L)\in\R^L;-\frac{\kappa}{2L}<x_i\}$. Since the right-hand member of (\ref{eq:jonssonb}) is continous over $U$, the general theory of o.d.e.'s provides the existence of a solution defined over a maximal interval $0\in J^+\subset \R_+$ for any initial condition $\vect{ a}(0)\in U$. Moreover, the solution is unique because the right-hand member of (\ref{eq:jonssonb}) locally lipschitzian. Set for convenience 
 \begin{eqnarray*}
 h_i(t)&=& D\sum_{k\sim i}a_k(t)\quad \hbox{ and } \\
 g_i(t)&=&T\sum_{k\sim i}\left(\frac{a_k(t)}{\kappa+\sum_{j\sim k}a_j(t)}-\frac{a_k(t)}{\kappa+\sum_{j\sim i}a_j(t)}-\frac{D}{T}\right).
 \end{eqnarray*}
The variation of constants formula allows us to write , $\forall t\in J^+$,
\begin{equation}\label{F}
a_i(t)=a_i(0)e^{\int_0^tg_i(s)ds}+\int_0^t h_i(u)e^{-\int_u^tg_i(v)dv}du.
\end{equation}
Since $a_i(0)\geq 0$, the first term in (\ref{F}) is non-negative. Moreover if $a_k(t)$ is i.p.
for some $k\sim i$, then according to (\ref{F}), the same property holds for $a_i(t)$. In particular, if $a_k(0)>0$ for some $k\sim i$, then by continuity $a_k(t)$ is i.p. and thus also $a_i(t)$.

\noindent \textbf{The case $D>0$:}

Clearly, if $\vect{ a}(0)= \vect{0}$, then the unique solution is identically $0$. Otherwise, there exists $1\leq i_0\leq L$ with $a_{i_0}(0)>0$ and $\forall j\sim i_0, a_j(t)$ is i.p.. Since our graph is supposed to be connected, every $i$ admits a neighbor $k\sim i$ with $a_k(t)$ i.p..   Hence, $a_i(t)$ is i.p. $\forall i,1\leq i\leq L$.

The preceding arguments show that for any initial condition $\vect{ a}(0)\in \R_{\geq 0}^L\subset U$, all components of the solution of (\ref{eq:jonssonb}) are i.p.. Let us suppose that one of them admits the value $0$ in $J^+\backslash\{0\}$. Since all components are continuous and their number is finite, there exists a first time $t_0>0$ for which at least one component $a_{i_0}(t_0)=0$ and all of them are strictly positive over $(0,t_0)$. According to (\ref{F}),  we have
$$a_{i_0}(t_0)=0=a_i(0)e^{\int_0^{t_0}g_i(s)ds}+\int_0^{t_0}h_i(u)e^{-\int_u^{t_0}g_i(v)dv}du.$$
Clearly $h_i(t)>0$ over $J^+\backslash\{0\}$ and since the first term is non-negative, we conclude to $a_{i_0}(t_0)>0$, a contradiction. Therefore all $a_i(t)$ are strictly positive over $J^+\backslash \{0\}$.

\noindent \textbf{The case $D=0$:}

If $a_i(0)=0$, the homogeneous equation for $a_i(t)$ admits only the zero solution, and we remove the related $i$th component from (\ref{eq:jonssonb}). Otherwise $a_i(0)>0$ and, by continuity, $a_i(t)$ is i.p.. In that case $a_i(t)=a_i(0)e^{\int_0^tg_i(s)ds}>0$ over $J^+$.\\
\\
\\
In both cases the solution of (\ref{eq:jonssonb}) have strictly positive components over $J^+$. We also proved that $\forall t \in J^+$ we have:
$$\sum_{1\leq i\leq L}a_i(t)=\sum_{1\leq i\leq L}a_i(0).$$
As a consequence the solution of (\ref{eq:jonssonb}) is bounded and thus the unique solution of our problem is defined over $J^+=[0,+\infty)$.

\subsection{Proof of Proposition \ref{StabilityGeneral}}
 We first give the Jacobian, for general $\vect{a}$. We have
\begin{equation}\label{Jacobian1}
\frac{\partial f_i(\vect{a})}{\partial a_j}
=\frac{a_i}{N_j}-\frac{a_i}{N_i}+\sum_{k\sim i}\frac{a_i a_k}{N_i^2}-\sum_{k\sim i,  k\sim j}a_k \frac{a_i}{N_k^2}, 
\end{equation}
(where the last term is due to the triangles in the graph) when $j\sim i$, that is, $i$ and $j$ are nearest neighbours. When $i=j$, one gets
\begin{equation}\label{Jacobian2}
\frac{\partial f_i(\vect{a})}{\partial a_i}
=\sum_{k\sim i}\frac{a_k}{N_k}-a_i\sum_{k\sim i}\frac{a_k}{N_k^2}
-\frac{\sum_{k\sim i}a_k}{N_i}.
\end{equation}
The remaining non-vanishing partial derivatives correspond to
nodes $j$ located at distance 2 of $i$ in the graph, that is, to
nodes $j$ such that $j\sim k$ for some $k\sim i$, $j\ne i$ but $i\not \sim j$. Then
\begin{equation}\label{Jacobian3}
\frac{\partial f_i(\vect{a})}{\partial a_j}
=-\sum_{j\sim k,\ k\sim i}\frac{a_i a_k}{N_k^2}.
\end{equation}
When $N_i = N$, $\forall i$, these expressions simplify to 
$$\frac{\partial f_i(\vect{a})}{\partial a_j}
=\sum_{k\sim i}\frac{a_i a_k}{N_i^2}=\frac{N-\kappa}{N^2}a_i-\frac{a_i}{N^2}\sum_{k\sim i, k\sim j}a_k. $$
If $j\sim i$,
$$\frac{\partial f_i(\vect{a})}{\partial a_i}=-\frac{N-\kappa}{N^2}a_i,$$
and
$$\frac{\partial f_i(\vect{a})}{\partial a_j}
=-\sum_{k\sim i, k\sim j}\frac{a_i a_k}{N_k^2}=-\frac{a_i}{N^2}\sum_{k\sim i, k\sim j}a_k, $$
if $j\sim k$ for some $k\sim i$, $j\ne i$ but $i\not \sim j$.\\
Consider the sub-matrix $L$ given by $L=(\partial f_i(\vect{a})/\partial a_j)_{j\sim i}$.  Let $d(\vect{a})$ be the diagonal matrix of diagonal given by $\vect{a}$.
 The perturbation associated with the triangles contained in the graph is represented by the term $-\frac{a_i}{N^2}\sum_{k\sim i, k\sim j}a_k$ in $\frac{\partial f_i(\vect{a})}{\partial a_j}$ for $j\sim i$, and the related matrix is given by
\begin{eqnarray*}
\left(-\frac{a_i}{N^2}\sum_{k\sim i, k\sim j}a_k\right)\gamma_{ij}
&=& \left(-\frac{a_i}{N^2}\sum_{k}\gamma_{ik}a_k\gamma_{kj}\right)\gamma_{ij}\\
&=&\left( -\frac{1}{N^2}\left(d(a)\Gamma d(a)\Gamma-{\rm diag}( d(a)\Gamma d(a)\Gamma)\right)_{ij}\right)\gamma_{ij}\\
&=&\left( -\frac{1}{N^2}(d(a)\Gamma d(a)\Gamma)_{ij}+\frac{N-\kappa}{N^2}d(a)_{ij}\right)\gamma_{ij}.\\
\end{eqnarray*}
The matrix $L$ is now given by
$$L=\frac{d(a)}{N^2}(N-\kappa)(\Gamma-id)- \frac{1}{N^2}(d(a)\Gamma d(a) \Gamma-(N-\kappa)d(a))\circ\Gamma ,$$
where $\circ$ represents the Hadamard product, i.e. the multiplication component by component.\\
Likewise, the perturbation of $L$ by $\left(\frac{\partial f_i(\vect{a})}{\partial a_j}\right)_{i\sim k, k\sim j, i\not \sim j, i\neq j}$ can be written as
\begin{eqnarray*}
\left(-\frac{a_i}{N^2}\sum_{\substack{k\sim i, k\sim j, \\i\not \sim j, i\neq j}} a_k \right)\gamma_{ij}&=& \left(-\frac{a_i}{N^2}\sum_{k}\gamma_{ik}a_k\gamma_{kj}\right)(1-\gamma_{ij}-id_{ij})\\
&=& \left( -\frac{1}{N^2}(d(a)\Gamma d(a)\Gamma)_{ij}+\frac{N-\kappa}{N^2}d(a)_{ij}\right)(1-\gamma_{ij}-id_{ij}).\\
\end{eqnarray*}
The related Jacobian is thus given by  $L+\left(\frac{\partial f_i(\vect{a})}{\partial a_j}\right)_{i\sim k, k\sim j, i\not \sim j, i\neq j}$,
that is
\begin{eqnarray*}
df(a)&=&\frac{d(a)}{N^2}(N-\kappa)(\Gamma-id)- \frac{1}{N^2}(d(a)\Gamma d(a) \Gamma-(N-\kappa)d(a)) \circ \Gamma\\
&  &- \frac{1}{N^2}(d(a)\Gamma d(a) \Gamma-(N-\kappa)d(a)) \circ(\bold{1}-\Gamma-id)\\
&=& \frac{d(a)}{N^2}(N-\kappa)(\Gamma-id)- \frac{1}{N^2}(d(a)\Gamma d(a) \Gamma-(N-\kappa)d(a)) \circ (\bold{1}-id)\\
&=& \frac{d(a)}{N^2}(N-\kappa)(\Gamma-id)- \frac{1}{N^2}(d(a)\Gamma d(a) \Gamma-(N-\kappa)d(a)),\\
\end{eqnarray*}
where $\bold{1}$ is the matrix composed only of ones. The last equality is a consequence of  the fact that the diagonal of $d(a)\Gamma d(a) \Gamma-(N-\kappa)d(a)$ vanishes. Hence,
$$df(a)=\frac{d(a)\Gamma}{N^2}((N-\kappa)id-d(a)\Gamma ) =\frac{d(a)\Gamma}{N^2}( c\ id-d(a)\Gamma ), $$
proving the result.

 \subsection{Proof of Proposition \ref{StabilityBoundary}}
Set $I=\{i \in \Lambda : a_i=0\}$,  and 
consider the sub-graphs $\gamma_p$ of $G$ induced by the nodes of $J=\Lambda\setminus I$, with 
$\gamma_p=(\Lambda_p,E_p)$,  $1 \leq p \leq P$. 
The related  critical points $\vect{a}$ are such that the restrictions $\vect{a}\vert_{\gamma_p}$
satisfy the linear systems
 $\Gamma_{\gamma_p}\vect{a}\vert_{\gamma_p}=c_{\gamma_p} \vect{1}\vert_{\gamma_p}$. Set $N_{\gamma_p}=c_{\gamma_p}+ \kappa$.

\noindent  \eqref{Jacobian1} - \eqref{Jacobian3} permit to compute the entries of the Jacobian matrix,
by first looking at the diagonal entries:
When $i\in \Lambda_p$, one has
$$
\frac{\partial f_i(\vect{a})}{\partial a_i}
= - a_i \frac{N_{\gamma_p}-\kappa}{N_{\gamma_p}^2},
$$
providing  the diagonal entry of the Jacobian of $f\vert_{\gamma_p}(\vect{a}\vert_{\gamma_p})$.
When $i\not\in \Lambda_p$, a similar computation yields
$$
\frac{\partial f_i(\vect{a})}{\partial a_i}
= \sum_{k\sim i} \frac{a_k}{N_k} - \frac{N_i-\kappa}{N_i}.
$$
We then compute the entries $(i,j)$ for $j \sim i$:
$$
\frac{\partial f_i(\vect{a})}{\partial a_j}
=a_i \frac{N_{\gamma_p} - \kappa}{N_{\gamma_p}^2}-\sum_{k\sim i,  k\sim j} a_k \frac{a_i}{N_k^2} = a_i \frac{N_{\gamma_p} - \kappa}{N_{\gamma_p}^2}-\sum_{k\sim i,  k\sim j, k \in \Lambda_p}  \frac{a_k a_i}{N_{\gamma_p}^2}  , 
$$
for $i,j \in \Lambda_p$ and $1\leq p \leq P$, which corresponds to the $(i,j)$ entry of the Jacobian of $f\vert_{\gamma_p}(\vect{a}\vert_{\gamma_p})$.
Likewise,
$$
\frac{\partial f_i(\vect{a})}{\partial a_j}
= \frac{a_i}{N_j}- \frac{a_i}{N_{\gamma_p}} + \sum_{k\sim i} \frac{a_i a_k}{N_{\gamma_p}^2}-\sum_{\substack{k\sim i,  k\sim j, \\ k \in \Lambda_p}} \frac{a_k a_i}{N_{\gamma_p}^2} = \frac{a_i}{N_j} - a_i \frac{\kappa}{N_{\gamma_p}^2} - \frac{a_i}{N_{\gamma_p}^2} \sum_{\substack{k\sim i,  k\sim j,\\ k \in \Lambda_p}} a_k , 
$$
when $i \in \Lambda_p$ for some $p$ and $j \not \in \Lambda_p$.
 Finally,
$$
\frac{\partial f_i(\vect{a})}{\partial a_j}=0, 
$$
when $i,j \not \in \cup_p\Lambda_p$, or equivalently when both $i$ and $j$ belongs to $I$.

We next consider  $(i,j)$ entries where $j$ is at a distance 2 of $i$ in the graph $G$, that is  when $j$ is such that $j\sim k$ for some $k\sim i$, $j\neq i$ and $j \not\sim i$. One obtains that
$$
\frac{\partial f_i(\vect{a})}{\partial a_j}=-a_i \sum_{j \sim k, k \sim i} \frac{a_k}{N_{\gamma_p}^2}, 
$$
when $i,j,k \in \Lambda_p$,
which is  the $(i,j)$ entry of the Jacobian of $f\vert_{\gamma_p}(\vect{a}\vert_{\gamma_p})$.

 Likewise,
$$
\frac{\partial f_i(\vect{a})}{\partial a_j}=-a_i \sum_{j \sim k, k \sim i} \frac{a_k}{N_{\gamma_p}^2} , 
$$
when $i,k \in \Lambda_p, j\not \in \Lambda_p$ ($\Rightarrow j \in I$).

 Next,
$$
\frac{\partial f_i(\vect{a})}{\partial a_j}=0.
$$
when $i \, \mathrm{or} \, k \not \in \Lambda_p, \forall j \in \Lambda$.

Permuting conveniently the indices, the Jacobian ${\rm d}f(\vect{a})$ can be written as
\begin{equation}
{\rm d}f(\vect{a}) = \begin{pmatrix} d_n & \vect{0} \\ \ast & {\rm d}f^{\gamma} \end{pmatrix}
\end{equation}
where $d_n$ is a diagonal matrix $n \times n$ with entries given  by $\lambda_i:= \sum_{k\sim i} \frac{a_k}{N_k} - \frac{N_i-\kappa}{N_i}$, for $i \in I$,  and hence  ${\rm d}f^{\gamma}$ is a block diagonal matrix, each block being equal to the Jacobian of $f$ restricted on each sub-graph $\gamma_p$. The permutation allows us to group all indices $i \in I$ in the same block, and  all indices related to the sub-graphs $\gamma_p$  are also arranged together.
It follows  that the eigenvalues of ${\rm d}f(\vect{a})$ are given by the diagonal entries $(\lambda_i)_{i\in I}$, and by the eigenvalues of all Jacobian matrices.

{\bf Acknowledgements} This work was supported by the University of Fribourg, and by the
SystemsX  "Plant growth in changing environments" project funding. Many thanks to D.
Kierzkowski and C. Kuhlemeier for providing us the picture given in Figure
\ref{phyllo} and to Ale\v s Janka for its help in Matlab programming.
 We are very grateful to
Patrick Favre and Didier Reinhardt for giving us the opportunity to learn parts of the actual knowledge 
on the role of the auxin flux in plant patterning.

\end{document}